\newcommand{\cblue}{\color{blue}}
\newcommand{\cred}{\color{red}}
\newcommand{\cgreen}{\color{green}}
\newtheorem{theorem}{Theorem}
\newtheorem{proposition}{Proposition}
\newtheorem{example}{Example}
\newtheorem{lemma}{Lemma}
\newtheorem{corollary}{Corollary}
\newtheorem{remark}{Remark}
\newtheorem{definition}{Definition}
\newtheorem{assumption}{Assumption}
\newtheorem{LP}{Linear Program}
\newcommand {\E}{\mathbb {E}}
\newcommand {\Prob}{\mathbb {P}}
\newcommand{\R}{\mathbb{R}}
\newcommand{\bbF}{\mathbb{F}}
\newcommand{\bbT}{\mathbb{T}}
\newcommand{\sF}{\mathcal F}
\newcommand{\sG}{\mathcal G}
\newcommand{\sH}{\mathcal H}
\newcommand{\sK}{\mathcal K}
\newcommand{\sL}{\mathcal L}
\newcommand{\sM}{\mathcal M}
\newcommand{\sN}{\mathcal N}
\newcommand{\sP}{\mathcal P}
\newcommand{\sS}{\mathcal S}
\newcommand{\sT}{\mathcal T}
\newcommand{\sX}{\mathcal X}
\newcommand{\sY}{\mathcal Y}
\begin{document}

\title{On the value of being American}

\date{\today}
\author{David Hobson \and Anthony Neuberger\thanks{An earlier version of this article~\cite{Neuberger:07} with a single author circulated under the title `Bounds on the American option'.}}

\maketitle

\begin{abstract}
Abstract: The virtue of an American option is that it can be exercised at any time. This right is particularly valuable when there is model uncertainty. Yet almost all the extensive literature on American options assumes away model uncertainty. This paper quantifies the potential value of this flexibility by identifying the supremum on the price of an American option when no model is imposed on the data, but rather any model is required to be consistent with a family of European call prices. The bound is enforced by a hedging strategy involving these call options which is robust to model error.

Keywords: American option, model-free, robust hedging, model risk, rational bounds

Mathematics Subject Classification: 91G20, 91B25

JEL Classification: G13, C61

\end{abstract}

\section{Introduction}
\label{sec:intro}

American options are valuable because the holder is free to react to information, including information that arrives after the option is acquired. Yet almost all the extensive literature on American options makes the extreme (but classical) assumption that the process driving the price of the underlying asset is known perfectly at the outset. In such a world, the holder can identify the optimal exercise strategy at the outset and can, without loss, commit to follow that strategy. Standard valuation methods do not allow for the possibility that evidence from the forward or options market, or events in the real world that occur after the acquisition of the option might cause the holder to change the model and alter the exercise strategy. The American feature provides some protection against model risk, and model based valuation cannot capture the value of this.

This paper investigates how great this extra value could be by looking at valuations that impose minimal restrictions on the price process.  It focusses specifically on the upper bound on the price of an American option given only the contemporaneous prices of European options on the same asset. The bound is enforced by a semi-static hedging strategy which is identified. It avoids the problems of model mis-specification that plague the standard model-based approach. The results are not entirely free from assumptions. Transaction costs and other frictions are ignored; the risk free interest rate and dividend process are assumed to be non-stochastic. But the paper imposes no restrictions on the set of possible paths of the price process (apart from positivity, which is conventional and could readily be relaxed).

There is a substantial literature on model-independent bounds for exotic options in the presence of known European option prices. Originating with work of Hobson~\cite{Hobson:98} for lookback options, model-free or robust bounds have been identified for barrier options (Brown, Hobson, and Rogers~\cite{BrownHobsonRogers:01}), double no touch options (Cox and Obloj~\cite{CoxObloj:11}), basket options (Hobson, Laurence and  Wang~\cite{HobsonLaurenceWang:05}), variance swaps (Hobson and Klimmek~\cite{HobsonKlimmek:12}), options on variance (Carr and Lee~\cite{CarrLee:10})  and forward start options (Hobson and Neuberger~\cite{HobsonNeuberger:12}).
Kahal\'{e}~\cite{Kahale:13} describes a general approach via convex programming for pricing and hedging European path-dependent claims in the presence of European options, using a set-up which is similar to that in the main part of this paper.
Beiglb\"{o}ck, Henry-Labord\`{e}re and Penkner~\cite{BeiglbockHenryLaborderePenkner:13} and Dolinsky and Soner~\cite{DolinskySoner:13} use arguments from the mass-transportation literature to find bounds on general path-dependent options in the presence of European option prices; they show that the dual problem can be interpreted as a robust hedge, as also does Acciaio et al~\cite{AcciaioBeiglbockPenknerSchachermayer:13}. Hobson~\cite{Hobson:11} provides a survey and relates the problem to the Skorokhod embedding problem.

Most of the existing literature is confined to the pricing of European path-dependent claims\footnote{A rare exception is the paper of Cox and Hoeggerl~\cite{CoxHoeggerl:13}. In this paper the aim is to find consistency conditions on the possible shapes (as a function of strike) of the family of prices of American put options with fixed maturity, given the values of co-maturing European puts.}. The contribution of this paper is to find bounds for American claims, and this presents significant new challenges. The holder of a European path-dependent claim is passive, and cannot influence the payoff of the option. The value of the claim depends on the probability which the model assigns to each path, and in the search for the model which is consistent with the European options data  and for which the path-dependent claim has the highest price, it is sufficient to restrict attention to models where the distribution of future returns depends only on past returns. Any additional information about future returns that arrives is irrelevant because there is no mechanism for the holder to respond. In this case the analysis can be restricted, without loss of generality, to models where the filtration is the natural filtration.

By contrast the arrival of new information does affect the exercise decision for an American claim, and hence affects the value. To find the upper bound on the price of the claim, one must search among a much wider set of models, and consider different specifications for the flow of information. As the following simple example shows there are many models consistent with a given set of European call prices (even when we have a complete, double continuum of option prices in strike and maturity) and within this class of models the value of the American option is maximised when model uncertainty is resolved as early as possible. There is a model in which the asset price is Markovian with respect to its natural filtration, but this model underestimates the value of the American option.

\begin{example}
\label{eg:1.1}
Consider a continuous-time world with a single risky asset (the stock) and a riskless bond. The interest rate is zero. There are European call options trading for every strike and maturity. The marginal distribution of the stock price (under any and every consistent pricing measure\footnote{In this context a consistent pricing measure is any measure under which the stock price is a martingale and model based prices of call options i.e. their expected values, agree with the quoted prices.}) is therefore determined at every horizon; in this example, the distribution is a single mass point at 100 for times up to and including 1 year; beyond 1 year there are three equal mass points at 50, 100 and 150. The simplest model consistent with this is a trinomial. With probability 2/3 the price jumps at one year; if it jumps it is equally likely to go up or down by 50; otherwise the price is constant.

There is a family of models for the stock, of which the trinomial is a special case, which is consistent with the data. In each of these models the stock price is constant except at 1 year, when it may jump up or down by 50. The conditional probability of a jump at time 1, given information at time $t$, is a random variable $Z_t$. $Z = (Z_t)_{0 \leq t \leq 1}$ is a martingale, with $Z_0 = 2/3$. The trinomial is a special case where $Z_t = 2/3$ for $0 \leq t < 1$. The time zero value of any European path-dependent claim is the same for all members of this family of models.

Consider a perpetual American claim which pays $[132/(1.1)^t  -  S_t]^+$ if exercised at time t (where $S_t$ is the stock price at t). It is sub-optimal to exercise the claim any time other than immediately, or after 1 year. Under the trinomial model, the holder exercises immediately and receives 32. (Waiting one year would give an expected value of $\frac{1}{3} 70 + \frac{1}{3} 20 + \frac{1}{3} 0=30$.) Suppose now that $Z$ is the left-continuous martingale that jumps from 2/3 to 0 or 1 immediately after time zero. This models the idea that immediately after purchasing the American claim, the holder learns whether the price will change in one year or not. If $Z_{0+} = 0$, the holder exercises immediately, and receives (132-100)=32. If $Z_{0+}=1$, the holder waits and exercises after the price jump, getting 70 or 0 with equal probability. The value of the American claim in this case is $\frac{1}{3} 32 + \frac{2}{3}\frac{1}{2} 70=34$, not 32.
\end{example}

For a European claim, whether holders of the claim get new information after time 0 about the possibility of a jump is immaterial; there is nothing they can do about it. That is why in seeking to find models and strategies that bound the price of path dependent European options, researchers confine themselves to processes and trading strategies that are defined over paths (eg Definition 1.1 in Acciaio et al~\cite{AcciaioBeiglbockPenknerSchachermayer:13}, and Section 2.3 of Dolinsky and Soner~\cite{DolinskySoner:13}). But with an American claim the situation is more subtle because the exercise decision may be altered by the arrival of new information.

The main theoretical result of the paper (Theorem~\ref{thm:main2}) is that the supremum on the value of an American claim, in the presence of a finite set of European call options, is equal to the cost of the cheapest super-replicating strategy. The space of consistent models and the space of super-replicating strategies are both vast. Hence, the proof rests on demonstrating that the search for the cheapest strategy within a particular sub-family of replicating strategies (an upper bound on the price), and the search for the model which places the highest value on the American option within a particular sub-family of models (a feasible price for the American option that does not create arbitrage opportunities), are the primal and dual of the same finite linear program, and have the same optimal value. Hence the cheapest super-replicating strategy lies in the chosen sub-family, and the model which gives the highest value to the American option is from the given sub-family of models. The methodology provides a viable method of computing the bound in practice.

Duality results of this form are the goal of much of the literature on robust hedging. They can be more or less explicit and/or general/abstract. For specific exotic options (eg lookbacks and barriers) it is sometimes possible to exploit the characteristics of the payoff to describe a model and a super-hedge for which the model-based price and the cost of the super-hedge coincide, thus proving the optimality of both. For general payoffs there are duality results (see, for example, Beiglb\"ock et al~\cite{BeiglbockHenryLaborderePenkner:13}, Bouchard and Nutz~\cite{BouchardNutz:15} and Acciaio et al~\cite{AcciaioBeiglbockPenknerSchachermayer:13})
and our duality result can be seen as an extension of the Super-replication Theorem of Acciaio et al~\cite{AcciaioBeiglbockPenknerSchachermayer:13} from path-dependent claims to American claims, though the technical assumptions are slightly different.

The literature on model free pricing of path-dependent options relies heavily on the duality between pricing and hedging. This duality is used widely in other contexts in mathematical finance, including for the pricing of American options. For example, Andersen and Broadie~\cite{AndersenBroadie:04}, Rogers~\cite{Rogers:02} and Haugh and Kogan~\cite{HaughKogan:04} exploit the relationship between the primal problem of pricing and the dual problem of hedging to bound the value of an American claim. But it should be emphasised that the use of duality in these papers is quite different. They value the American claim within a well-defined model; in their models, the American claim has a precise price. They seek bounds that bracket the true price of the American claim under the given model by using a near to optimal exercise strategy. In the present paper, there is no model, and American claims do not have a unique price. The goal is to find the maximum price of the American claim that does not lead to arbitrage.

There is a closely related body of literature on robust hedging as exemplified by Carr, Ellis and Gupta~\cite{CarrEllisGupta:98} and Carr and Nadtochiy~\cite{CarrNadtochiy:11} that seeks hedging strategies for exotic options that work well across a wide range of models. These strategies do require restrictions on the underlying process, such as the symmetry of the implied volatility surface or the requirement that instantaneous volatility be a deterministic function of the price level, which are shared by a broad range of standard models. Such restrictions are unappealing in the context of the question addressed in this paper where the focus is on the ability of the holder of the American claim to respond to the unexpected.

This paper is organized as follows. Section~\ref{sec:lattice} gives the theory and main results in the case of price processes defined on a bounded rectangular lattice under a simplifying assumption that there is a largest strike at which the call price is zero. The main result is a duality between the pricing and hedging problems. On the pricing side we show that the search over consistent models can be restricted to models in a particular, simple class. This class is wider than the set of one-dimensional Markovian models, (and restricting attention to the Markovian class will only lead to the highest model based price in trivial situations) but is still relatively simple, since it is a class of bivariate Markov processes, with the first dimension as price, and the second dimension the `regime' which switches at the optimal exercise time. On the hedging side we show how the search for the cheapest super-replicating strategy can be restricted to a search over a simple family of super-hedges. A final part of our first theorem shows that there is no duality gap: the highest model based price is equal to the cost of the cheapest super-replicating strategy.

In Section~\ref{sec:extension} we relax some of the lattice assumptions we use in Section~\ref{sec:lattice} in the sense that although we continue to assume that we are given a finite family of European option prices (with strikes and maturities on a grid) we now consider models in continuous time and price process taking values in $\R^+$ rather than a discrete set. In Section~\ref{sec:nocall} we relax the assumption that there is a strike at which call prices are zero. Our final result is again that we can find the supremum over consistent models of the model-based price of the American option, and that this equals the cost of the cheapest super-replicating strategy. In this case the supremum over models may not be attained.

In Section~\ref{ssec:numerical} we argue that the methods of this paper are not purely theoretical, but rather that they provide a viable method of calculating model-independent bounds on the prices of American-style derivatives. We consider an American put, and compare the Black-Scholes value with the model-independent upper bound on the price. Given a set of European option prices we can calculate the
model-independent American option premium. We find that valuation under the Black-Scholes model seriously underestimates the value of the American feature as it fails to take account of the ability of the holder of the American option to change his strategy as uncertainty about the underlying model is resolved.
In Section~\ref{ssec:filtration} we illustrate in a toy example that restricting attention to models in which the filtration is the natural filtration of the price process also significantly underestimates the value of the American feature of the option.
In Section~\ref{ssec:continuum} we show by example how the ideas of the paper can be applied in a continuous setting in which options trade with a continuum of maturities and strikes. Section~\ref{sec:conclusion} concludes.

\section{Processes on a bounded lattice.}
\label{sec:lattice}

\subsection{The set-up}
\label{ssec:setup}

This paper considers the price of an American-style claim on a single underlying stock. Time, denoted by $t$, runs from the current time $t=0$ to some finite positive horizon $T$.

Let $S= (S_{t})$ denote the price of the stock. Let $s_0$ be the initial price of the asset which we view as a known constant. We assume that $S$ is non-negative, and pays no dividends, though the case of an asset which pays proportional dividends can be reduced to this case by considering $S$ as the price of stock after dividends are reinvested. Suppose interest rates are non-stochastic and let $B = (B_t)$ denote the price of a risk free bond (with $B_0=1$). Let $X = (X_{t})$ be given by $X_t = S_{t}/B_{t}$. Then $X$ denotes the price of the asset with bond  numeraire. Finally assume there are no market frictions: there are no transaction costs or taxes and short selling is permitted without restriction.

The American claim is characterized by a function $a_S$ which represents the fact that if the option is exercised at time $t$ then the option holder receives $a_S(S_t,t)$. Let $a(x,t) = a_S(x B_{t}, t)/B_{t}$; then $a$ is the discounted payoff of the American claim, expressed in terms of units of the discounted price $X$. As a motivating example, consider the case of a constant interest rate $r$ and an American put (on $S$) with strike $K_S$. Then we have $B_{t} = e^{rt}$, $a_S(s,t) = (K_S- s)^+$ and $a(x, t) = (K_S e^{-rt} - x)^+$.

In addition to the stock and the pure discount bond, the set of traded securities includes European call options on the stock. In particular, it is possible to buy or sell a call option on $S$ with strike ${K_S}$ and maturity $t$ (ie. with payoff $(S_{t} - {K_S})^+$) for price $C_S({K_S}, t)$ for a finite set of traded strikes and maturities to be described below. Under the bond numeraire this corresponds to being able to buy or sell a call option on $X$ with strike
$K= {K_S}/B_{t}$ and maturity $t$ for a price $C(K,t) = C_S({K}B_{t},t)$.

Henceforth we will work exclusively with the discounted price and with discounted call prices. Moreover, we will omit the qualifier discounted, and instead talk about the prices $X$ and $C$. We expect that under any pricing measure $X=(X_t)$ is a martingale.

In this section we will assume that time is discrete, and that the time parameter is restricted to lie in a set $\sT_0 = \{ t_0 = 0, t_1 < \ldots < t_N = T \}$. Further, we will also assume that the option can only be exercised at a date $\tau \in \sT = \sT_0 \setminus \{0\} = \{ t_1 , \ldots , t_N  \}$.
Later we will extend our analysis to allow the time parameter of the price process and the exercise time of the American option to take values in $\bbT = [0,T]$, although we will still assume that the set of maturities of traded options is finite.

In addition we assume that
for each maturity $t_n \in \sT$ the set of traded strikes is $\sK$ where
\[ \sK = \{ x_1, x_2, \ldots x_{J} \} \]
and $0 < x_1 < x_2 \ldots < x_{J}$.
Since holding a call with strike zero is equivalent to holding the stock, and since the stock is traded, it is useful to consider 0 to be a traded strike. (For any maturity, no-dominance arguments imply that the price of a zero-strike call must equal $X_0=s_0$.)  Let $\sX = \{0, x_1, x_2, \ldots x_{J} \}$. In this section we will identify $\sX$ with a set of levels for the price process $X$ and build processes which live on the lattice $\sX \times \sT$ (at least after time zero). This restriction will be relaxed in future sections.

One rationale for using a finite set of strikes is that this is the `real' situation. However, if strikes for options on $S$ are common across maturity, then after switching to discounted variables this would no longer be the case. Our rationale for a finite set of strikes is primarily pedagogic: this setting provides the simplest situation in which to state and prove the main results, and to illustrate the message of this paper, namely that the value of American options is that they allow agents to take advantage of model uncertainty and to follow strategies which utilise information which is not contained in the natural filtration of the asset.

We assume that the traded European options are calls. We could equally work with puts. In our model an agent can short the asset and so put-call parity holds. Hence it is trivial to switch between working with European puts and calls. Of course, there is no put-call parity for the American option.

\begin{assumption} Time is discrete and takes values in the finite set $\sT_0$. The American option can only be exercised at times $t_n \in \sT$ and must be exercised by $t_N = T$. The price process $X = (X_t)_{t \in \sT_0}$ takes values in $\sX$ for $t>0$.
The payoff function $a:\sX \times \sT \mapsto \R$ is positive. 
\end{assumption}

The assumptions that $a$ is positive and that the option must be exercised are harmless since if not we can simply take the positive part of $a$.

For $0 \leq j \leq J$  and for $1 \leq n \leq N$ write $c_{j,n}$ for the price of a call security paying $(X_{t_n} - x_j)^+$ at time $t_n$. Set $c_{0,n}=s_0$.
Our assumption is that call options can be both bought and sold at time zero for these prices.

\begin{assumption}
\label{ass:conditionsonC}
\begin{enumerate}
\item The set of call option prices has the following properties:
\begin{itemize}
\item For $1 \leq n \leq N$, $s_0 = c_{0,n} \geq c_{1,n} \geq c_{2,n} \geq \cdots \geq c_{J,n} \geq 0$.
\item For $1 \leq n \leq N$,
\( 1 \geq \frac{c_{0,n} - c_{1,n}}{x_1} \geq \frac{c_{1,n} - c_{2,n}}{x_2 -x_1} \geq \cdots \geq \frac{c_{J-1,n} - c_{J,n}}{x_J -x_{J-1}} \) .
\item For $1 \leq n \leq N-1$, and for $0 \leq j \leq J$,
$c_{j, n+1} \geq c_{j,n}$.
\end{itemize}
\item In addition $c_{J,N}=0$.
\end{enumerate}
\end{assumption}

Carr and Madan~\cite{CarrMadan:05} and Davis and Hobson~\cite[Theorem 3.1]{DavisHobson:07} set out necessary and sufficient conditions on a set of call options to ensure the absence of arbitrage. In our setting these conditions reduce to the first set of statements above. The additional hypothesis that $c_{J,N}=0$ (and then also $c_{J,n}=0$) for all $1 \leq n \leq N$ implies that in any model consistent with these option prices, the probability that the option price ever exceeds $x_J$ is zero. This simplifying assumption will be relaxed in Section~\ref{sec:nocall}.

Let ${\bf C}$ be the $(J+1) \times N$ matrix with elements $c_{j,n}$. Define the $(J+1) \times N$ matrix $\bf P$ via its entries $p_{j, n}$ where for $1 \leq n \leq N$
\begin{equation}
\label{eq:pdef}
p_{j,n} = \left\{ \begin{array}{ll}
1 - \frac{s_0 - c_{1,n}}{x_1}  &  j = 0; \\
\frac{c_{j-1,n} - c_{{j},n}}{x_{j} - x_{j-1}} - \frac{c_{j,n} - c_{{j+1},n}}{x_{j+1} - x_j} & 1  \leq j < J; \\
\frac{c_{J-1,n}-c_{J,n}}{x_{J} - x_{J-1}}  & j = J.
\end{array} \right.
\end{equation}

Equation set (\ref{eq:pdef}) is the discrete-space version of the Breeden and Litzenberger~\cite{BreedenLitzenberger:78} formula linking risk neutral densities to the second derivative of option prices with respect to strike. A model in which $\Prob(X_{t_n} = x_j)=p_{j,n}$ for all $j$ and $n$ has the property that $\E[(X_n - x_j)^+] = c_{j,n}$.  There are many other families of marginal distributions which can also deliver these option prices, but this is the only set of probability laws which agree with the call prices for each $n$ if the mass is constrained to lie in the set $\sX$.

\subsection{Consistent pricing models}
\label{ssec:consistentpricingmodels}

\begin{definition}
\label{def:model}
$\sM^{\sX, \sT} = \sM^{\sX, \sT}({\bf C})$ is the set of models (i.e. a filtration $\bbF = (\sF_0, \sF_{t_1}, \ldots \sF_{t_N})$ and a probability measure $\Prob$ supporting a stochastic process $X= (X_{t_n})_{0 \leq n \leq N}$ taking values in $\sX$) such that $X_0 = s_0$, and
\begin{enumerate}
\item the process $X$ is consistent with {\bf C} in the sense that $\E[(X_{t_n}-x_j)^+] = c_{j,n}$ or equivalently $\Prob(X_{t_n} = x_j) = p_{j,n}$;
\item $X$ is a $(\Prob,\bbF)$-martingale.
\end{enumerate}
We say such a model is {\em consistent} with the observed call prices ${\bf C}$.
\end{definition}

The superscript $\sX, \sT$ on $\sM$ refers not to the fact that models are consistent with call prices defined for strikes in $\sX$ and maturities in $\sT$ but rather to the fact that processes are defined on the time parameter set $\sT$, and the price process takes values in $\sX$.

An element $M$ of $\sM^{\sX,\sT}$ not only defines a process which is consistent with ${\bf C}$, it is also defines a pricing model; the model price of a traded security at time $t$ is its conditional expected payoff under $M$. In particular $M$ defines a model based price for the American option: $\phi(M) = \phi^a(M)= \sup_{\tau} \E^M[a(X_\tau, \tau)]$.

\begin{definition}
\label{def:markov}
$\sM_1^{\sX, \sT}({\bf C})$ is the subset of $\sM^{\sX, \sT}({\bf C})$ such that
\begin{enumerate}
\item $X$ is Markovian, so that $\Prob(X_{t_{n+1}} = x_k | \sF_{t_n}) = \Prob(X_{t_{n+1}} = x_k | X_{t_n})$.
\end{enumerate}
We say such a model is a {\em consistent, Markov} model.
\end{definition}

\begin{proposition}
\label{prop:markov}
Suppose that {\bf C} satisfies Assumption~\ref{ass:conditionsonC}. Then $\sM_1^{\sX, \sT}({\bf C})$ is non-empty.
Further, the market comprising the stock, the bond and the call options (trading at the prices {\bf C}) is arbitrage free.
\end{proposition}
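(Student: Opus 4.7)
The plan has two parts. First, to show $\sM_1^{\sX,\sT}(\mathbf{C})$ is non-empty I would construct an explicit consistent Markov martingale whose one-dimensional marginals match the Breeden--Litzenberger probabilities (\ref{eq:pdef}). Second, the existence of any consistent pricing measure rules out arbitrage among the stock, bond and calls by the usual expectation argument: any self-financing combination of these traded assets has $M$-expected terminal value equal to its time-zero cost.

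For each $n$ I first verify that the vector $(p_{j,n})_{0 \leq j \leq J}$ of (\ref{eq:pdef}) is a probability distribution on $\sX$ with mean $s_0$. Non-negativity reads directly off the three inequality chains in Assumption~\ref{ass:conditionsonC}(1); the equality $c_{J,N}=0$ together with monotonicity of $c_{J,\cdot}$ in $n$ forces $c_{J,n}=0$ throughout, and then, setting $a_j := (c_{j,n}-c_{j+1,n})/(x_{j+1}-x_j)$ with $x_0=0$, the relations $p_{0,n}=1-a_0$, $p_{j,n}=a_{j-1}-a_j$ and $p_{J,n}=a_{J-1}$ telescope to $\sum_j p_{j,n}=1$, while an Abel summation gives $\sum_j x_j p_{j,n} = c_{0,n}-c_{J,n}=s_0$. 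Setting $\mu_n := \sum_j p_{j,n}\delta_{x_j}$, one then has $\int (x-x_j)^+\,d\mu_n = c_{j,n}$ for every $j$.

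Next I would show $\mu_n \leq_{cx} \mu_{n+1}$. Every convex $\varphi$ on the finite ordered set $\sX$ admits a representation $\varphi(x) = \alpha + \beta x + \sum_{j=1}^{J-1}\gamma_j (x-x_j)^+$ with $\gamma_j \geq 0$, so
\[
\int \varphi\, d\mu_n \;=\; \alpha + \beta s_0 + \sum_{j=1}^{J-1} \gamma_j\, c_{j,n},
\]
which is non-decreasing in $n$ by the third bullet of Assumption~\ref{ass:conditionsonC}. I would then invoke Kellerer's theorem to extract a Markov martingale $X=(X_{t_n})$ with $X_0=s_0$ and marginals $\mu_n$. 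In this finite setting the statement is elementary: Strassen's theorem produces a martingale coupling of $\mu_n$ and $\mu_{n+1}$, which disintegrates into a Markov kernel on $\sX$; concatenating these kernels across $n$ yields the required process, which, equipped with its natural filtration, lies in $\sM_1^{\sX,\sT}(\mathbf{C})$.

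I expect the Markov-martingale construction (the convex-order / Strassen / Kellerer step) to be the main obstacle, since the remaining steps are direct manipulations of the call-price inequalities; however, in this finite, discrete-time framework the argument amounts to bookkeeping around Strassen. Absence of arbitrage then follows immediately from the existence of $M$: a zero-cost self-financing portfolio in stock, bond and calls with $\Prob^M$-a.s.\ non-negative and somewhere strictly positive payoff would have strictly positive $M$-expectation, contradicting zero cost.
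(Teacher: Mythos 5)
Your proposal is correct and follows essentially the same route as the paper: construct the marginals $\mu_n$ from the Breeden--Litzenberger probabilities, observe that the call-price conditions give convexity in strike and increase in maturity (i.e.\ convex order), use a Strassen-type martingale transport depending only on consecutive marginals to build a Markov martingale, and deduce no-arbitrage from the existence of this consistent pricing measure. Your write-up simply fills in more of the bookkeeping (non-negativity, total mass, mean $s_0$ via $c_{J,n}=0$) that the paper leaves implicit.
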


\begin{proof} For $1 \leq n \leq N$ let $\mu_n$ denote the atomic measure with mass $p_{j,n}$ at $x_j$, and let $\mu_0$ be the point mass at $s_0$.

The conditions on ${\bf C}$ ensure that the call prices are convex in $x$ (for fixed $n$) and increasing in $n$ for fixed $x$. These are sufficient conditions for there to existence a martingale transport
of $\mu_n$ into $\mu_{n+1}$. This martingale transport can be chosen such that the probability mass transported from $x_j$ to $x_k$ depends on $\mu_n$ and $\mu_{n+1}$ alone. Hence there is a discrete-time martingale (with respect to its natural filtration) which is consistent with the prices in $\bf C$, and which exhibits the Markov property.

The absence of arbitrage follows from the existence of a martingale under which the prices of contingent claims are equal to the expected values of their payoffs (Harrison and Kreps~\cite{HarrisonKreps:79}).
\end{proof}

Note that in the definition of $\sM^{\sX,\sT}$ we do not assume that $\bbF$ is the natural filtration of $X$. The filtration may be considerably richer than this, and the probability space may support other stochastic processes in addition to $X$. We will want $\bbF$ to support (at the least) a second stochastic process, denoted $\Delta$.

In Section~\ref{sec:extension} we will extend the problem to allow for discrete-time price processes taking values in $\R^+$ (so the space of models is $\sM^{\R^+,\sT}({\bf C})$ --- note that although we allow the price process to take any non-negative value, we still assume a finite set $\sX$ of strikes) and then to continuous time processes.
In the case of continuous time we insist that the price process is a right-continuous martingale.

\subsection{Semi-static hedging strategies}
\label{ssec:SSHS}
Now we want to discuss the hedging aspects of the problem. Our set-up includes the notion that we are given option prices for a finite set of vanilla European calls. Then, in addition to allowing investment in the stock, we want also to allow investment in the call options. However, whilst the prices of calls are known today, we do not want to make assumptions about how they will evolve over time (except that they will respect the no arbitrage restrictions in the first part of Assumption~\ref{ass:conditionsonC}). Hence, although we expect to be able to take buy and hold positions in the traded calls, we cannot expect to be able to adjust these portfolios over time --- there is no way to determine what such an adjustment might cost.

\begin{definition}
\label{def:tradingstrategy}
A (path and exercise dependent) semi-static trading strategy $({\bf B}, \Theta= (\Theta^1, \Theta^2))$ on $(\sX, \sT)$ is a composition of
\begin{enumerate}
\item Arrow-Debreu style European options with payoff $(b_{j,n})$ if $X$ is in state $x_j$ at time $t_n$ (for $1 \leq n \leq N$). As securities mature they are held in the bond. The payoff of such a strategy is
    \[ \sG_T^{\bf B} = \sum_{1 \leq n \leq N} \sum_{0 \leq j \leq J} b_{j,n} I_{ \{ X_{t_n} = x_j \} } \]
    and the cost is $\sum_{1 \leq n \leq N} \sum_{0 \leq j \leq J} b_{j,n} p_{j,n}$.
\item A dynamic hedging position of $\Theta_{t_n}$ units of stock created at time $t_n$ for $1 \leq n \leq N-1$. Here $\Theta_{t_n}= \Theta^1(x_{t_1}, \ldots x_{t_n})$ if the option {\em has not} yet been exercised and $\Theta_{t_n} = \Theta^2(x_{t_1}, \ldots x_{t_n}, t_j)$ if the option {\em was} exercised at $t_j$ with $j \leq n$. The position is financed by borrowing and is liquidated at $t_{n+1}$. If exercise occurs at $\rho \in \sT$ then
    the payoff of such a strategy along a price path $(s_0=x_0, x_{t_1}, \ldots x_{t_N})$ is
    \[ \sG_T^{\Theta} = \sum_{n=1}^{\sN(\rho)-1} \Theta^1_{t_n}(x_{t_1},\ldots, x_{t_n})(x_{t_{n+1}} - x_{t_n}) + \sum_{n=\sN(\rho)}^{N-1} \Theta^2_{t_n}(x_{t_1},\ldots, x_{t_n}, \rho)(x_{t_{n+1}} - x_{t_n}), \]
    where $\sN(\rho) = \min \{ n : t_n \geq \rho \}$. The cost is zero.
\end{enumerate}
The time-$T$ payoff $\sG_T = \sG_T^{{\bf B},{\Theta}}$ from the semi-static trading position along a price path $(x_{t_1}, \ldots x_{t_N})$ is
\[ \sG_T(x_{t_1}, \ldots x_{t_N}, \rho)  =  \sG_T^{\bf B} + \sG_T^{\Theta}, \]
and the total cost is $H_{{\bf C}}({\bf B}, \Theta) = H({\bf B}) = \sum_{1 \leq n \leq N} \sum_{0 \leq j \leq J} b_{j,n} p_{j,n}$.
\end{definition}

One might expect to also need to specify a position in the underlying over the time-period $(0,t_1]$, ie to include $\theta^1_0(s_0)$. This is not necessary, since any required payoffs can be subsumed into the European payoffs $b_{j,1}$.

It is normal in the model-independent pricing literature to consider semi-static strategies in which the dynamic element is such that $\Theta_{t_n} = \Theta_{t_n}(x_1, \ldots, x_{t_n})$, ie. such that the position in the stock is a function of the price path to date. In our American option pricing context it is essential that we allow the hedge ratio to also depend on whether the option has been exercised (and then natural to want it to depend on when it was exercised). Hence we need to allow $\Theta_{t_n} = \Theta_{t_n}(x_1, \ldots, x_{t_n}, \rho)$ for $t_n \geq \rho$ where $\rho \in \sT$ is the exercise time.

In principle, in a given model the space of semi-static hedging strategies could depend on the model and could be much richer.
But, the hedger of the option needs to be able to define the gains irrespective of the model. Then, he is constrained to use semi-static strategies where the dynamic component is a function of the price history and the exercise time only, as in Definition~\ref{def:tradingstrategy}.

\begin{definition}
A semi-static trading strategy $({\bf B}, \Theta= (\Theta^1, \Theta^2))$ super-replicates the American claim if
$\sG_T(x_{t_1}, \ldots x_{t_N}, \rho) \geq a(x_\sigma, \rho)$ for all $(x_{t_1}, \ldots x_{t_N})$ with $x_{t_n} \in \sX$ and all $\rho$. Let $\sS = \sS^{\sX,\sT}(a)$ be the set of super-replicating semi-static strategies.
\end{definition}

The superscripts on $\sS$ refer to the fact that the exercise time is in $\sT$ and super-replication occurs along paths for which $x_{t} \in \sX$ for $t \in \sT$.

Define the highest model-based price among models consistent with the prices of the traded calls: $\sP^{\sX,\sT}(a, {\bf C}) = \sup_{M \in \sM^{\sX,\sT}({\bf C})} \phi^a(M)$. Define also the cost of the cheapest super-replicating semi-static strategy $\sH^{\sX,\sT}(a, {\bf C}) =
\inf_{({\bf B, \Theta}) \in \sS^{\sX, \sT}(a)} H_{{\bf C}}({\bf B, \Theta})$.

\begin{proposition}
\label{prop:wd}
Weak duality holds: $\sP^{\sX,\sT}(a, {\bf C}) \leq \sH^{\sX,\sT}(a, {\bf C})$.
\end{proposition}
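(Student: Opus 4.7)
The plan is the standard weak duality argument: pair an arbitrary consistent model with an arbitrary super-replicating strategy, take expectations pathwise under the model, and use martingality to kill the dynamic term.

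First, fix any $M \in \sM^{\sX,\sT}({\bf C})$ with filtration $\bbF$ and measure $\Prob$, any stopping time $\tau$ taking values in $\sT$, and any super-replicating semi-static strategy $({\bf B},\Theta) \in \sS^{\sX,\sT}(a)$. The super-replication inequality gives, pointwise along every path,
\[ a(X_\tau,\tau) \;\leq\; \sG_T^{\bf B} + \sG_T^{\Theta}. \]
Taking $\E^M[\cdot]$, the Arrow-Debreu piece is immediate: since $\Prob(X_{t_n}=x_j)=p_{j,n}$ by definition of $\sM^{\sX,\sT}({\bf C})$,
\[ \E^M[\sG_T^{\bf B}] = \sum_{n=1}^{N}\sum_{j=0}^{J} b_{j,n}\, p_{j,n} = H_{\bf C}({\bf B},\Theta). \]

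The step to nail down is that $\E^M[\sG_T^\Theta]=0$. The subtlety is that the dynamic hedge uses two different recipes $\Theta^1,\Theta^2$ depending on whether exercise has already occurred, so I would rewrite the position at step $t_n$ as the single effective hedge ratio
\[ \widetilde\Theta_{t_n} = \Theta^1_{t_n}(X_{t_1},\ldots,X_{t_n})\,I_{\{\tau>t_n\}} + \Theta^2_{t_n}(X_{t_1},\ldots,X_{t_n},\tau)\,I_{\{\tau\leq t_n\}}. \]
Because $\tau$ is an $\bbF$-stopping time, both $I_{\{\tau>t_n\}}$ and the product $I_{\{\tau\leq t_n\}}\Theta^2_{t_n}(X_{t_1},\ldots,X_{t_n},\tau)$ are $\sF_{t_n}$-measurable (on $\{\tau\leq t_n\}$ the value of $\tau$ is itself $\sF_{t_n}$-measurable). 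All random variables involved are bounded since $X$ lives on the finite set $\sX$ and the strategy is a finite sum of fixed functions of finitely many arguments, so integrability is automatic. The $\bbF$-martingale property of $X$ then yields
\[ \E^M[\sG_T^\Theta] = \sum_{n=1}^{N-1} \E^M\!\left[\widetilde\Theta_{t_n}\,\E^M[X_{t_{n+1}}-X_{t_n}\mid\sF_{t_n}]\right] = 0. \]

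Combining the two displays, $\E^M[a(X_\tau,\tau)] \leq H_{\bf C}({\bf B},\Theta)$ for the fixed $\tau$, $M$ and $({\bf B},\Theta)$. Taking the supremum over $\tau$ gives $\phi^a(M)\leq H_{\bf C}({\bf B},\Theta)$; taking the supremum over $M\in\sM^{\sX,\sT}({\bf C})$ gives $\sP^{\sX,\sT}(a,{\bf C})\leq H_{\bf C}({\bf B},\Theta)$; and finally taking the infimum over $({\bf B},\Theta)\in\sS^{\sX,\sT}(a)$ delivers $\sP^{\sX,\sT}(a,{\bf C})\leq \sH^{\sX,\sT}(a,{\bf C})$. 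The only genuinely non-routine step is the measurability/predictability argument for $\widetilde\Theta_{t_n}$, which is exactly where the freedom to let the dynamic hedge depend on the exercise time (built into Definition~\ref{def:tradingstrategy}) is compatible with the martingale cancellation.
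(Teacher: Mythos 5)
Your proposal is correct and follows essentially the same route as the paper: pathwise super-replication, the Breeden--Litzenberger identity $\Prob(X_{t_n}=x_j)=p_{j,n}$ for the static leg, and the martingale property to annihilate the dynamic gains. The only difference is that you spell out the $\sF_{t_n}$-measurability of the effective hedge ratio $\widetilde\Theta_{t_n}$, a step the paper's proof asserts implicitly via ``if $\tau$ is a stopping time then $\E^M[\sG^\Theta_T]=0$''.
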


\begin{proof}
For any semi-static super-hedging strategy $a(X_\tau,\tau) \leq \sG_T(x_{t_1}, \ldots x_{t_N}, \tau)$. Since $X$ is a martingale under any consistent model, if $\tau$ is a stopping time then $\E^M[G^\Theta_T] = 0$ and
\[ \E^M[a(X_\tau,\tau)] \leq \E^M \left[ \sG^{{\bf B}, \Theta}_T \right] = \sum_{1 \leq n \leq N} \sum_{0 \leq j \leq J} b_{j,n}p_{j,n} = H_{{\bf C}}({\bf B}, \Theta) = H_{{\bf C}}({\bf B}). \]
Weak duality follows.
\end{proof}

\subsection{Bounds on the price of the American Option}
\label{ssec:bounds}
Recall that our current setting is discrete-time price processes taking values in $\sX$.

There are many models consistent with the market prices of the European calls. One could, in principle, search among all possible models to find the supremum on the price of the American claim. But the dimensionality of the space of models is vast. As can be seen from Example~\ref{eg:1.1}, one cannot restrict the search to the set of models that are based on the natural filtration.

In this paper, the search is confined to a small subset of models and is formulated as a finite dimensional linear program. The linear program has a dual. It turns out that, for the specific subset of models chosen, the dual program can be interpreted as the search for the cheapest super-replicating strategy in a restricted class of super-replicating strategies. Hence there is a price which is both the cost of a super-replicating strategy and the model price of the American claim for a particular model.
This pair of optimal model in a certain class of models, and optimal super-hedge from a certain class of super-hedges must therefore be the optimal model over all consistent models, and the cheapest super-hedge over all super-replicating strategies.

The representation of the pricing problem and the replication problem as duals is familiar. But there are several points which are worth highlighting. First,
we write the pricing problem as the primal and the replication problem as the dual. This is because it is easier to motivate the choice of the family of models than the family of hedging strategies. Second, it is only because the subset of models is carefully chosen that its dual can be interpreted as the search for the cheapest super-replicating strategy. (Had the subset not included the global supremum, this could not have been the case.) Third, it is not sufficient to consider Markov models for the stock, instead we must consider an augmented process consisting of price and {\em regime}. Fourth, in the dual problem we do not need dynamic hedging strategies which depend on the whole price history, but rather the position in the stock can be made a function of the current price alone, and whether or not the option has been exercised. This is a considerable simplification (and relies on the fact that the American option payoff depends on the current stock price, and not the path history).

The choice of subset of models is critical. Its members must be characterized by a finite --– and reasonably small --- set of parameters to make the search problem tractable. The models must be able to incorporate the initial market values of all the traded securities. Finally, the models need to have the features that make an American claim particularly valuable. The first two considerations suggest that we consider discrete space, Markov jump processes. But the third consideration, taken with Example~\ref{eg:1.1}, suggests that this will not be adequate. American claims become more valuable if the holder can expect to get more information about the distribution of future returns.

Consider, therefore, the following extension $\sM^{\sX,\sT}_2$ of $\sM^{\sX,\sT}_1$.
\begin{definition}
$\sM_2^{\sX, \sT} = \sM_2^{\sX, \sT}({\bf C}) \subseteq \sM^{\sX, \sT}({\bf C})$ is the set of models (i.e. a filtration $\bbF = (\sF_0, \sF_{t_1}, \ldots \sF_{t_N})$ a probability measure $\Prob$ supporting a bivariate, discrete-time, stochastic process $(X,\Delta)= (X_{t_n}, \Delta_{t_n})_{0 \leq n \leq N}$ taking values in $\sX \times \{1,2\}$ for $n \geq 1$) such that $(X_0,\Delta_0) = (s_0,1)$ and
\begin{enumerate}
\item $(X,\Delta)$ is Markov with respect to price, so that $\Prob(X_{t_{n+1}} = x_k | \sF_{t_n}) = \Prob(X_{t_{n+1}} = x_k | X_{t_n}, \Delta_{t_n})$.
\item $\Delta$ is non-decreasing, with $\Delta_{t_N}=2$.
\item the probability that $\Delta_{t_{n+1}}=2$, conditional on $\Delta_{t_n}=1$ depends on $n$ and $X_{t_{n+1}}$ only.
\end{enumerate}
\end{definition}

The last element of this assumption, namely that the transition probabilities of $\Delta$ depend on $X_{t_{n+1}}$ deserves comment. (More normally, in a Markov setting we would expect these probabilities to depend on $X_{t_n}$.)
Typically, we need a certain proportion $q_{k,m}$ of those paths which arrive at $X_{t_{m}}=x_k$ at $t_{m}$ and have $\Delta_{t_{m-1}}= 1$ to have $\Delta_{t_m}=2$.
We could let this proportion depend on the origin of these paths (ie $X_{t_{m-1}}$) but the simplest solution is to make the assumption above. See Remark~\ref{rem:reconstruct} below.

We refer to $\Delta$ as the regime process. The relationship between $\Delta$ and the optimal stopping rule $\tau^*$ is that it will turn out to be optimal
to take $\tau^* = \min \{ t \in \sT : \Delta_t = 2 \}$.

A process $(X,\Delta)$ in $\sM_2^{\sX, \sT}$ can be characterized by a pair of $(J+1) \times (J+1) \times (N-1)$ matrices
${\bf G^1}$ and ${\bf G^2}$ (with entries $g^\delta_{j,k,n}$) specifying the joint probability (and not the conditional probability) of successive states:
\[ g^\delta_{j,k,n} = \Prob(X_{t_n} = x_j, X_{t_{n+1}}=x_k, \Delta_{t_n} = \delta)  \hspace{3mm} 0 \leq j,k \leq J; 1 \leq n \leq N-1, \delta  \in \{ 1,2 \} \]
One might expect to want to specify $g^{\delta}_{j,k,0}$ also, but for $\delta = 2$ these probabilities are necessarily zero, and for $\delta=1$,
$g^{1}_{j,k,0} = p_{k,1}$ if $x_j = s_0$ and zero otherwise. Since these probabilities do not depend on the model (assuming the model is consistent with call prices), we view them as fixed. Indeed, there is no requirement that $s_0 \in \sK$, so it may not be possible to define $g^1$ at $n=0$.

By definition probabilities are positive. Further, the mass entering a node must equal the mass at the node must equal the mass leaving the node. Thus
\begin{equation}
\label{eq:probs} \sum_{0 \leq i \leq J} (g^1_{i,j,n-1} + g^2_{i,j,n-1}) = p_{j,n} = \sum_{0 \leq k \leq J} (g^1_{j,k,n} + g^2_{j,k,n}) \end{equation}
where the equality on the left is defined for $2 \leq n \leq N$ and the equality on the right for $1 \leq n \leq N-1$.

By hypothesis the process $\Delta$ is non-decreasing. It is convenient to introduce an auxiliary $(J+1) \times N$ matrix $\bf F$  which records the probability of arriving at node $(j,2)$ at time $n$ having been in regime $1$ at time $n-1$. Let
${\bf F} = (f_{j,n})$ where $f_{j,n} \geq 0$ is given by the joint probability $f_{j,n} = \Prob(X_{t_n}=j, \Delta_{t_{n-1}}=1, \Delta_{t_n}=2)$. Then
\begin{equation}
\label{eq:fg} f_{j,n}  = \left\{ \begin{array}{ll} \sum_{0 \leq k \leq J} g^2_{j,k,1}  & n=1 \\
                                            \sum_{0 \leq k \leq J} g^2_{j,k,n} - \sum_{0 \leq i \leq J} g^2_{i,j,n-1} \\
                                            \hspace{10mm} = \sum_{0 \leq i \leq J} g^1_{i,j,n-1} - \sum_{0 \leq k \leq J} g^1_{j,k,n}  & 1 < n < N \\
                                            p_{j,N}- \sum_{0 \leq i \leq J} g^2_{i,j,N-1} = \sum_{0 \leq i \leq J} g^1_{i,j,N-1} & n=N \end{array} \right. .
                                            \end{equation}

\begin{remark}
\label{rem:reconstruct}
Given the transition probabilities of $(X,\Delta)$ it is clear that we can calculate ${\bf G^1}$, ${\bf G^2}$ and ${\bf F}$. Conversely, given ${\bf G^1}$, ${\bf G^2}$ and ${\bf F}$ we have $\Prob(X_n = j, X_{n+1}=k) = g^1_{j,k,n} + g^2_{j,k,n}$, so the transitions of $X$ are specified. Moreover, $\Prob(X_{t_n} = j, X_{t_{n+1}}=k, \Delta_{t_n}=1) = g^1_{j,k,n}$. Then
\begin{eqnarray*}
\lefteqn{\Prob(X_{t_n} = j,\Delta_{t_n}=1, X_{t_{n+1}}=k, \Delta_{t_{n+1}}=2) } \\
 &  = &  \Prob(\Delta_{t_{n+1}}=2|X_{t_n} = j, X_{t_{n+1}}=k, \Delta_{t_n}=1) g^1_{j,k,n} \\
& =& \Prob(\Delta_{t_{n+1}}=2| X_{t_{n+1}}=k, \Delta_{t_n}=1) g^1_{j,k,n}
  \; \; = \; q_{k,n+1} g^1_{j,k,n} \end{eqnarray*}
where $q_{k,n+1}$ is chosen so that
\[ q_{j,n+1}\sum_{0 \leq i \leq J} g^1_{i,j,n} = \sum_{0 \leq i \leq J} \Prob(X_{t_n} = i,\Delta_{t_n}=1, X_{t_{n+1}}=j, \Delta_{t_{n+1}}=2) = f_{j,n+1}. \]
Since $f_{j,n+1} = \sum_{0 \leq i \leq J} g^1_{i,j,n} - \sum_{0 \leq k \leq J} g^1_{j,k,n+1}$ we have $q_{j,n+1} \in [0,1]$.
In particular, the matrices ${\bf G^1}$, ${\bf G^2}$ (with or without ${\bf F}$) uniquely determine the transition probabilities of $(X,\Delta)$.
\end{remark}

A further requirement is that process $X$ is a martingale. This implies that for $0 \leq j \leq J$, $1 \leq n \leq N-1$ and $\delta\in \{1,2 \}$
\begin{equation}
\label{eq:mg} \sum_{0 \leq k \leq J} (x_k - x_j) g^\delta_{j,k,n} = 0.  \end{equation}

For any model $M$ which is consistent with the observed call prices we can define the model based price of the American option
by $\phi^a(M) = \sup_\tau \E^M[a(X_\tau,\tau)]$ where the supremum is taken over stopping times $\tau$ and the superscript of the expectation operator refers to the fact that we are taking expectations under the model $M$. Except in Section~\ref{ssec:unrestricted} we will generally suppress the superscript $a$ on $\phi$.
Our goal is to find ${\sP}^{\sX, \sT}(a, {\bf C}) =\sup \phi(M)$, where the supremum is taken over $M \in \sM^{\sX,\sT}({\bf C})$, the space of all discrete-time models in which the price process is a martingale which takes values in $\sX$ and is consistent with call option prices. One of the fundamental contributions of this paper is that is to show that the supremum over $\sM^{\sX,\sT}({\bf C})$ is equal to the supremum over models in the much smaller set $\sM_2^{\sX,\sT}({\bf C})$. Further, given $M \in \sM^{\sX, \sT}_2({\bf C})$ we can define $\tau^* = \inf \{ t \in \sT:\Delta_t = 2 \}$, and $\phi_*(M) = \E^M[a(X_{\tau^*}, \tau^*)]$. Then,
\[ \sup_{M \in \sM^{\sX, \sT}_2({\bf C})} \phi_*(M) \leq \sup_{M \in \sM^{\sX, \sT}_2({\bf C})} \phi(M)  \leq \sup_{M \in \sM^{\sX, \sT}({\bf C})} \phi(M) = {\sP}^{\sX,\sT}(a, {\bf C}). \]
We show there is equality throughout.

Our first task is to find $\sup_{M \in \sM^{\sX,\sT}_2} \phi_*(M)$. Using the conditions (\ref{eq:probs}) and (\ref{eq:mg}) together with (\ref{eq:fg}) this problem can be cast as a linear program. We call this the pricing (primal) problem.

\begin{LP}
\label{LP:primal}
The pricing problem ${\bf L^{\sX,\sT}_P}$ is to: \\
find the $(J+1) \times N$ matrix $F$ and the two $(J+1)\times(J+1)\times(N-1)$ matrices ${\bf G^1}$
and ${\bf G^2}$ which maximise
\[ \sum_{1 \leq n \leq N} \sum_{0 \leq j \leq J} a(x_j, t_n) f_{j,n} \]
subject to ${\bf F} \geq 0$, ${\bf G^1}\geq 0$, ${\bf G^2}\geq 0$, and
\begin{enumerate}
\item[(a)] $\sum_{0 \leq k \leq J} (g^1_{j,k,n} + g^2_{j,k,n}) = p_{j,n}$; \hspace{10mm} $0 \leq j \leq J$, $1 \leq n \leq N-1$.
\item[(b)] $\sum_{0 \leq i \leq J} (g^1_{i,j,n-1} + g^2_{i,j,n-1}) = p_{j,n}$; \hspace{10mm} $0 \leq j \leq J$, $2 \leq n \leq N$.
\item[(c)] $\sum_{0 \leq k \leq J} (x_k - x_j) g^1_{j,k,n} = 0$; \hspace{10mm} $0 \leq j \leq J, 1 \leq n \leq N-1$.
\item[(d)] $\sum_{0 \leq k \leq J} (x_k - x_j) g^2_{j,k,n} = 0$; \hspace{10mm} $0 \leq j \leq J, 1 \leq n \leq N-1$.
\item[(e)] \[ \left\{ \begin{array}{ll} f_{j,1} - \sum_{0 \leq k \leq J} g^2_{j,k,1} \leq 0 & \\
                                f_{j,n} - \sum_{0 \leq k \leq J} g^2_{j,k,n} + \sum_{0 \leq i \leq J} g^2_{i,j,n-1} \leq 0 & 1<n<N \\
                                f_{j,N} + \sum_{0 \leq k \leq J} g^2_{i,j,N-1} \leq p_{j,N} &
                  \end{array} \right.  \]
\end{enumerate}
Let the optimum value be given by $\Phi^{\sX,\sT}= \Phi^{\sX,\sT}(a, {\bf C})$.
\end{LP}

\begin{remark}
\label{rem:Jabsorbing}
It follows from (c) and (d) that we must have $g^\delta_{J,k,n}=0$ for $k<J$, so that for any feasible model, $\{x_J\}$ is absorbing.
\end{remark}

The inequalities in (e) are actually equalities, recall (\ref{eq:fg}). However, since the coefficients in the objective function are positive and since we seek to maximise $\phi$ we can write them as inequalites, and we will obtain equality in the optimal solution. Moreover, by writing (e) as a set of inequalities we will end up with fewer constraints in the dual problem. Strict inequality corresponds to
\[ \sum_{1 \leq n \leq N} \sum_{0 \leq j \leq J} f_{j,n} < \sum_{0 \leq j \leq J} p_{j,N} = 1 \]
and a failure to exercise the American option in some scenarios. Clearly this is suboptimal unless $a(x_j,t_n)=0$ for some $j$ and $n$.

\begin{proposition}
\label{prop:primalsolution}
${\bf L^{\sX,\sT}_P}$ is a linear program for which the feasible set is non-empty and the objective function is bounded. There exists an optimal solution.
\end{proposition}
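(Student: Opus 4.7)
The plan is to verify the three claims in turn, with the main content being the construction of a feasible point.

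First, the linear-programming structure is immediate by inspection: the decision variables $\mathbf{F},\mathbf{G^1},\mathbf{G^2}$ are finitely many real numbers, the objective $\sum_{n,j} a(x_j,t_n) f_{j,n}$ is a linear functional of them, and each constraint (nonnegativity, (a)--(e)) is a linear equality or inequality in these variables. I would dispatch this in one sentence.

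For non-emptiness of the feasible set, I plan to reuse Proposition~\ref{prop:markov}. Take any $M \in \sM_1^{\sX,\sT}(\mathbf{C})$ with one-step Markov transitions $\pi_{j,k,n} = \Prob(X_{t_{n+1}}=x_k \mid X_{t_n}=x_j)$, and build the candidate solution by declaring that the regime stays in state $1$ until the terminal date and jumps to $2$ at $t_N$. Concretely, set
\[ g^1_{j,k,n} = p_{j,n}\,\pi_{j,k,n}, \qquad g^2_{j,k,n} = 0, \qquad (0 \le j,k \le J, \; 1 \le n \le N-1), \]
\[ f_{j,n} = 0 \text{ for } 1 \le n < N, \qquad f_{j,N} = p_{j,N}. \]
Constraint (a) reduces to $\sum_k \pi_{j,k,n} = 1$ times $p_{j,n}$; constraint (b) is the statement that $(p_{j,n})$ is the marginal of $X_{t_n}$ propagated under $\pi$, which holds by consistency with $\mathbf{C}$; constraint (c) is exactly the martingale property $\sum_k (x_k-x_j)\pi_{j,k,n}=0$ for $j$ with $p_{j,n}>0$ (and both sides vanish otherwise); (d) is trivial because $\mathbf{G^2}=0$; and each line of (e) holds as an equality with the chosen $\mathbf{F}$. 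So the feasible set is non-empty.

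For boundedness and existence, I would first observe that constraint (a) and nonnegativity force $0 \le g^\delta_{j,k,n} \le p_{j,n} \le 1$ for every entry, while the inequality in (e) (together with $g^2 \ge 0$) forces $0 \le f_{j,n} \le p_{j,n} \le 1$. Thus the feasible set is a bounded subset of a finite-dimensional Euclidean space, and it is closed because it is defined by finitely many weak linear (in)equalities; hence compact. Since $a$ is a finite positive function on $\sX \times \sT$, the objective is a continuous linear map bounded above by $\sum_{n,j} a(x_j,t_n) < \infty$ on the feasible set, and it attains its supremum by compactness. None of the steps is genuinely hard; the only mildly delicate point is verifying that the Markov construction from Proposition~\ref{prop:markov} can be packaged into joint probabilities $g^1_{j,k,n}$ satisfying (a)--(e) simultaneously, which the computation above makes transparent.
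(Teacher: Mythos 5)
Your proof is correct and follows essentially the same route as the paper: a feasible point is manufactured from a consistent Markov model (Proposition~\ref{prop:markov}) by imposing a deterministic regime-switch time --- you switch at $t_N$ (placing all mass in ${\bf G^1}$ with $f_{j,N}=p_{j,N}$) where the paper switches at $t_1$ (placing all mass in ${\bf G^2}$ with $f_{j,1}=p_{j,1}$), a mirror-image choice that changes nothing. Your existence argument via compactness of the feasible set is marginally more self-contained than the paper's, which bounds the objective by $\sum_{j,n} a(x_j,t_n)p_{j,n}$ and invokes standard LP theory, but the content is the same.
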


\begin{proof}
The fact that ${\bf L^{\sX,\sT}_P}$ is a linear programme follows by inspection. To show that the feasible set is non-empty we need to construct an element of $\sM^{\sX,N}_2$. But $\sM^{\sX,N}_1$ is non-empty. Let $M^1$ be the associated model with $X$ the associated price process, and let $\Delta$ be the process which switches regime at time 1 so that $\Delta_n=2$ for $n \geq 1$. We have
\begin{eqnarray*}
g^1_{j,k,n}  =  0, && 0 \leq j,k \leq J, 1 \leq n \leq N \\
g^2_{j,k,n}  = \Prob^{M^1}( X_{n} = x_j, X_{n+1}=x_k) && 0 \leq j,k \leq J, 1 \leq n \leq N \\
f_{j,n}     & = & \left\{ \begin{array}{ll} p_{j,1} & 0 \leq j \leq J, n=1 \\
                                            0 & 0 \leq j \leq J, 1 < n \leq N \end{array} \right.
\end{eqnarray*}
and $(X,\Delta) \in \sM^{\sX,\sT}_2$.

Clearly for a general element of $\sM^{\sX,\sT}_2$ we have ${\bf F} \leq {\bf P}$ and hence $\phi_*(M) \leq \sum_{1 \leq n \leq N} \sum_{0 \leq j \leq J} a(x_j, t_n) p_{j,n} < \infty$. The existence of an optimal solution follows.
\end{proof}

\subsection{The hedging problem}
\label{ssec:hedging}

\begin{LP}
\label{LP:dual}
The hedging problem ${\bf L^{\sX,\sT}_H}$ is to: \\
find the three $(J+1) \times N$ matrices ${\bf E^1}$, ${\bf E^2}$ and ${\bf V}$ and the two $(J+1)\times(N-1)$ matrices ${\bf D^1}$
and ${\bf D^2}$ which minimise
\[ \sum_{0 \leq j \leq J, 1 \leq n \leq N} (e^1_{j,n} + e^2_{j,n})p_{j,n} + \sum_{0 \leq j \leq J} v_{j,N} p_{j,N}  \]
subject to ${\bf V} \geq 0$, and
\begin{enumerate}
\item[(i)] for $0 \leq j \leq J$, $1 \leq n \leq N$
\begin{equation}
\label{eq:lpi} v_{j,n} \geq a(x_j,t_n);
\end{equation}
\item[(ii)] for $0 \leq j,k \leq J$, $1 \leq n \leq N-1$
\begin{equation}
\label{eq:lpii}
e^1_{j,n} + e^2_{k,n+1} + (x_k-x_j)d^1_{j,n} \geq 0 ;
\end{equation}
\item[(iii)] for $0 \leq j,k \leq J$, $1 \leq n \leq N-1$,
\begin{equation}
\label{eq:lpiii}
e^1_{j,n} + e^2_{k,n+1} + (x_k-x_j)d^2_{j,n} - v_{j,n} + v_{k,n+1} \geq 0;
\end{equation}
\end{enumerate}
and $e^1_{j,N}=e^2_{j,1}=0$.
Let the optimum value be given by $\Psi^{\sX,\sT} = \Psi^{\sX,\sT}(a, {\bf C})$.
\end{LP}

\begin{proposition}
\label{prop:primaldualequality}
${\bf L^{\sX,\sT}_H}$ is the dual problem to ${\bf L^{\sX,\sT}_P}$. Moreover the optimal solution to ${\bf L^{\sX,\sT}_H}$ exists and the value of ${\bf L^{\sX,\sT}_H}$ is equal to the value of ${\bf L^{\sX,\sT}_P}$.
\end{proposition}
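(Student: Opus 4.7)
The plan is to verify directly that ${\bf L^{\sX,\sT}_H}$ is the Lagrangian dual of ${\bf L^{\sX,\sT}_P}$ by reading it off constraint by constraint, and then to invoke the fundamental theorem of linear programming. Since Proposition~\ref{prop:primalsolution} already guarantees that the primal is feasible with finite optimum, strong duality will immediately deliver both existence of a dual optimum and equality of values; the only real work is in the identification of the dual.

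To perform the identification, attach multipliers $e^1_{j,n}$ (free) to the equalities (a) for $1\le n\le N-1$, $e^2_{j,n}$ (free) to (b) for $2\le n\le N$, $d^1_{j,n}$ and $d^2_{j,n}$ (free) to (c) and (d), and $v_{j,n}\ge 0$ to each of the three cases of the $\le$-inequality (e); non-negativity of $v$ is forced by the fact that the primal is a maximisation. Adopt the conventions $e^1_{j,N}:=0$ and $e^2_{j,1}:=0$ so that $n$ ranges uniformly from $1$ to $N$. The right-hand sides of (c), (d) and of the first two cases of (e) all vanish, so the dual objective collects only the $p_{j,n}$ contributions from (a), (b) and the $n=N$ case of (e), giving $\sum_{j,n}(e^1_{j,n}+e^2_{j,n})p_{j,n}+\sum_j v_{j,N}p_{j,N}$, exactly as in ${\bf L^{\sX,\sT}_H}$.

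For each primal variable I compute its total Lagrangian coefficient and require it to be non-positive, which is the correct form of the dual constraint for maximisation over non-negative variables. The variable $f_{j,n}$ has objective coefficient $a(x_j,t_n)$ and appears in (e) paired with $v_{j,n}$, giving (i). The variable $g^1_{j,k,n}$ enters (a) at row $(j,n)$, (b) at row $(k,n+1)$, and (c) with coefficient $x_k-x_j$, giving (ii). The variable $g^2_{j,k,n}$ enters (a), (b), (d) analogously (with $d^2$ in place of $d^1$), and in addition enters (e) at two distinct rows: at $(j,n)$ with coefficient $-1$ through the $-\sum_k g^2_{j,k,n}$ term, and at $(k,n+1)$ with coefficient $+1$ through the $+\sum_i g^2_{i,j,n-1}$ term (read at relabelled indices). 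Collecting signs produces precisely (iii). With the dual identified, LP strong duality combined with Proposition~\ref{prop:primalsolution} completes the proof.

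The only subtle step in the whole argument is the bookkeeping for $g^2$: tracking its two appearances in (e) across two different constraint rows, matching them with the corresponding $v_{j,n}$ and $v_{k,n+1}$ multipliers, and obtaining the $-v_{j,n}+v_{k,n+1}$ combination in (iii). The three-case formulation of (e) could make this error-prone, but the conventions $e^1_{j,N}=0$ and $e^2_{j,1}=0$ hide any residual boundary terms, and everything else reduces to routine LP duality.
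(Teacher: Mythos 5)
Your proposal is correct and follows essentially the same route as the paper: identify the dual by matching constraints (a)--(e) with the variables $({\bf E^1},{\bf E^2},{\bf D^1},{\bf D^2},{\bf V})$ and constraints (i)--(iii) with $({\bf F},{\bf G^1},{\bf G^2})$, then invoke the Strong Duality Theorem together with Proposition~\ref{prop:primalsolution}. The paper simply states the correspondence without the explicit multiplier bookkeeping you carry out, but the argument is the same.
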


\begin{proof}
Constraints (i), (ii) and (iii) of ${\bf L^{\sX,\sT}_H}$ correspond to the variables ${\bf F}$, ${\bf G^1}$ and ${\bf G^2}$ in ${\bf L^{\sX,\sT}_P}$ respectively, whilst constraints (a) to (e) of ${\bf L^{\sX,\sT}_P}$ correspond to the variables ${\bf E^1}$, ${\bf E^2}$, ${\bf D^1}$, ${\bf D^2}$ and ${\bf V}$.

The two problems are duals and ${\bf L^{\sX,N}_P}$ has an optimal solution (Proposition~\ref{prop:primalsolution}). Hence by the Strong Duality Theorem (Vanderbei~\cite{Vanderbei:08}) an optimal solution to the dual exists and has equal to the value of the primal problem.
\end{proof}



Note that in general we do not expect the dual problem to have a unique optimiser.

Although we called ${\bf L^{\sX,\sT}_H}$ the hedging problem, so far this is purely a statement of nomenclature which needs to be justified.
The next step is to show that the linear program ${\bf L^{\sX,\sT}_H}$ can be interpreted as the search for the cheapest member of a set of super-replicating strategies for the American claim.

\begin{definition}
\label{def:tradingstrategyEEDDV}
Given three $(J+1) \times N$ matrices ${\bf E^1}$, ${\bf E^2}$ and ${\bf V}$ and two $(J+1)\times(N-1)$ matrices ${\bf D^1}$
and ${\bf D^2}$, the quintuple $({\bf E^1}, {\bf E^2}, {\bf D^1}, {\bf D^2}, {\bf V})$ can be interpreted as a semi-static trading strategy for the agent in the following sense:
\begin{enumerate}
\item Let $b_{j,n} = (e^1_{j,n} + e^2_{j,n})$ for $1 \leq n \leq N-1$ and
$b_{j,N} =(e^1_{j,N} + e^2_{j,N} + v_{j,N})$.
\item Let $\theta^1_{t_n}(x_{t_1}, \ldots, x_{t_n}) = \theta^1_{t_n}(x_{t_n}) = d^1_{j,n}$ if $x_{t_n} = x_j$.
\item Let $\theta^2_{t_n}(x_{t_1}, \ldots, x_{t_n}, \sigma) = \theta^2_{t_n}(x_{t_n}) = d^2_{j,n}$ if $x_{t_n} = x_j$.
\end{enumerate}
We call a strategy of this form a {\em Markovian} semi-static strategy.
\end{definition}

\begin{proposition}
If the quintuple $({\bf E^1}, {\bf E^2}, {\bf D^1}, {\bf D^2}, {\bf V})$ is feasible for ${\bf L^{\sX,\sT}_H}$ and if $x_{t_n} \in \sX$ for $1 \leq n \leq N$ then the Markovian semi-static trading strategy in Definition~\ref{def:tradingstrategyEEDDV} super-replicates the American claim.
\end{proposition}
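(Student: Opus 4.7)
The plan is to evaluate the payoff $\sG_T$ of the Markovian semi-static strategy along an arbitrary price path $(x_{t_1},\ldots,x_{t_N})\in\sX^N$ with arbitrary exercise date $\rho=t_\sigma\in\sT$, where $\sigma=\sN(\rho)$, and show directly that it dominates $a(x_{t_\sigma},t_\sigma)$. Writing $j_n$ for the index with $x_{t_n}=x_{j_n}$, the translation in Definition~\ref{def:tradingstrategyEEDDV} and Definition~\ref{def:tradingstrategy} give
\[
\sG_T \;=\; \sum_{n=1}^{N}\bigl(e^1_{j_n,n}+e^2_{j_n,n}\bigr) + v_{j_N,N} + \sum_{n=1}^{\sigma-1} d^1_{j_n,n}(x_{j_{n+1}}-x_{j_n}) + \sum_{n=\sigma}^{N-1} d^2_{j_n,n}(x_{j_{n+1}}-x_{j_n}),
\]
so the goal is to produce a chain of inequalities from the LP constraints whose telescoping sum is exactly $\sG_T - a(x_{j_\sigma},t_\sigma) \geq 0$.

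The key step is to apply the LP's inequality constraints along the realised path, using (ii) before exercise and (iii) after exercise. For each $n\in\{1,\ldots,\sigma-1\}$, specialise (\ref{eq:lpii}) to $(j,k,n)=(j_n,j_{n+1},n)$ to obtain
\[
e^1_{j_n,n}+e^2_{j_{n+1},n+1} + (x_{j_{n+1}}-x_{j_n})d^1_{j_n,n}\;\geq\;0,
\]
and for each $n\in\{\sigma,\ldots,N-1\}$ specialise (\ref{eq:lpiii}) similarly to get
\[
e^1_{j_n,n}+e^2_{j_{n+1},n+1} + (x_{j_{n+1}}-x_{j_n})d^2_{j_n,n} - v_{j_n,n}+v_{j_{n+1},n+1}\;\geq\;0.
\]
Summing all these inequalities, the $v$-terms telescope on the post-exercise block, leaving $-v_{j_\sigma,\sigma}+v_{j_N,N}$, while the boundary conditions $e^1_{j,N}=e^2_{j,1}=0$ allow the $e^1$ and $e^2$ partial sums to be extended to run from $n=1$ to $N$ without changing their value. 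The resulting inequality rearranges exactly to $\sG_T \geq v_{j_\sigma,\sigma}$. A final application of (\ref{eq:lpi}) at $(j,n)=(j_\sigma,\sigma)$ yields $v_{j_\sigma,\sigma}\geq a(x_{j_\sigma},t_\sigma)$, which is the super-replication inequality.

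The only care needed is with boundary cases: if $\sigma=1$ the pre-exercise block of (ii)-inequalities is empty and the argument starts directly with (iii); if $\sigma=N$ the post-exercise block of (iii)-inequalities is empty, no telescoping occurs, and the conclusion $\sG_T\geq v_{j_N,N}\geq a(x_{j_N},t_N)$ is read off immediately from the $v_{j_N,N}$ contribution to $\sG_T^{\bf B}$ together with (i). Apart from this bookkeeping, the proof is a direct, path-by-path verification, and I do not anticipate a substantive obstacle: the whole point of the careful splitting of the $e$-variables into $e^1,e^2$ and of the regime-dependent hedges $d^1,d^2$ in the LP is precisely to make this telescoping work, so the inequalities (ii), (iii) and (i) assemble without slack beyond what comes from the constraints themselves.
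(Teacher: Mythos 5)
Your proposal is correct and follows essentially the same route as the paper's proof: the paper regroups $\sG_T$ into the boundary terms $e^2_1(y_1)+e^1_N(y_N)$ (zero by the boundary conditions), the term $v_{\sN(\tau)}(y_\tau)-a(y_\tau,\tau)$ (nonnegative by (i)), and the two blocks of constraint-(ii) and constraint-(iii) expressions along the realised path, which is exactly your summation-and-telescoping argument read in reverse. The boundary-case bookkeeping you mention is handled implicitly in the paper via empty sums.
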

\begin{proof}
For each of $h=\{ e^1, e^2, d^1, d^2, v \}$ write $h_n(x) = \sum_{0 \leq j \leq J} h_{j,n} I_{ \{ x = x_j \} }$.

Suppose that $X$ follows the path $(s_0, y_1, \ldots, y_N)$ with $y_i \in \sX$. The terminal payoff $\sG_T= \sG_T(y_1, \ldots, y_N, \tau)$ to the strategy described in Definitions~\ref{def:tradingstrategy} and \ref{def:tradingstrategyEEDDV} is
\[ \sG_T = \sum_{n=1}^N  (e^{1}_{n}(y_n) + e^2_{n}(y_n))  + v_N(y_N)
+ \sum_{1}^{\sN(\tau)-1} (y_{n+1}- y_n) d^1_{n}(y_n)
+ \sum_{\sN(\tau)}^{N-1}  (y_{n+1}- y_n) d^2_{n}(y_n)
\]
This can be rewritten as
\begin{eqnarray*} \sG_T & = & e^2_1(y_1) + e^1_N(y_N) + \{ v_{\sN(\tau)}(y_\tau) - a(y_\tau,\tau) \} \\
& & + \sum_{1}^{\sN(\tau)-1} \left\{ e^1_n(y_n) + e^2_{n+1}(y_{n+1}) + (y_{n+1}- y_n) d^1_{n}(y_n) \right\} \\
& & + \sum_{\sN(\tau)}^{N-1} \left\{ e^1_n(y_n) + e^2_{n+1}(y_{n+1}) + (y_{n+1}- y_n) d^2_{n}(y_n) - v_n(y_n) + v_{n+1}(y_{n+1}) \right\} \\
& & + a(y_\tau,\tau)
\end{eqnarray*}
The first two elements are zero, and the next three are non-negative due to the feasibility of the quintuple $({\bf E^1}, {\bf E^2}, {\bf D^1}, {\bf D^2}, {\bf V})$. It follows that $\sG_T(y_1, \ldots, y_N) \geq a(y_\tau,\tau)$ and hence for every possible path $(y_1, \ldots y_N)$ in $\sX^N$, and for every possible stopping rule $\tau$ the strategy in Definition~\ref{def:tradingstrategyEEDDV} super-replicates.
\end{proof}

\begin{theorem}
\label{thm:main1} $\Phi^{\sX,\sT} = {\sP}^{\sX,\sT}(a, {\bf C}) = {\sH}^{\sX,\sT}(a, {\bf C}) = \Psi^{\sX,\sT}$. In particular,
under a modelling assumption that for $t \in \sT$ the price process only takes values in $\sX$, the most expensive model-based price amongst models which are consistent with the observed call prices is attained by a price/regime model (an element of $\sM_2^{\sX,\sT}$). Similarly, there is a super-replicating Markovian semi-static strategy for which the cost of the strategy is the lowest amongst the class of all super-replicating semi-static strategies.
\end{theorem}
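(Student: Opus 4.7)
The plan is to close the cycle
$$\Phi^{\sX,\sT} \;\le\; \sP^{\sX,\sT}(a,{\bf C}) \;\le\; \sH^{\sX,\sT}(a,{\bf C}) \;\le\; \Psi^{\sX,\sT} \;=\; \Phi^{\sX,\sT},$$
which forces equality throughout and delivers both assertions of the theorem simultaneously.

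Three of the four links are essentially already in place. The first inequality holds because $\sM_2^{\sX,\sT} \subseteq \sM^{\sX,\sT}$ and, for any $M \in \sM_2^{\sX,\sT}$, the regime-switching time $\tau^* = \inf\{t \in \sT : \Delta_t = 2\}$ is just one particular $\bbF$-stopping rule, so $\phi_*(M) \le \phi(M)$ and hence $\sup_{\sM_2^{\sX,\sT}} \phi_*(M) \le \sup_{\sM^{\sX,\sT}} \phi(M)$. The second inequality is exactly Proposition~\ref{prop:wd} (weak duality between pricing and super-replication). The closing equality $\Psi^{\sX,\sT} = \Phi^{\sX,\sT}$ is strong LP duality, Proposition~\ref{prop:primaldualequality}.

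The only link requiring a fresh argument is $\sH^{\sX,\sT} \le \Psi^{\sX,\sT}$. For this I would take an arbitrary feasible quintuple $({\bf E^1},{\bf E^2},{\bf D^1},{\bf D^2},{\bf V})$ for ${\bf L^{\sX,\sT}_H}$ and apply the preceding proposition to turn it into a Markovian semi-static strategy which super-replicates the American claim along every $\sX$-valued path and every exercise rule, so in particular lies in $\sS^{\sX,\sT}(a)$. Its cost, computed from Definition~\ref{def:tradingstrategyEEDDV} with $b_{j,n} = e^1_{j,n}+e^2_{j,n}$ for $n<N$ and $b_{j,N} = e^1_{j,N}+e^2_{j,N}+v_{j,N}$, collapses directly to the LP objective $\sum_{j,n}(e^1_{j,n}+e^2_{j,n})p_{j,n} + \sum_j v_{j,N}p_{j,N}$. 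Taking the infimum over feasible dual points then yields $\sH^{\sX,\sT}(a,{\bf C}) \le \Psi^{\sX,\sT}$.

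The conceptual heavy lifting has already been done upstream: the nontrivial content is the previous proposition's verification that dual-LP feasibility is strong enough to imply pathwise super-replication for every stopping rule, together with strong LP duality. The present theorem is then essentially a bookkeeping collapse of the cycle; its real punchline is that although $\sM_2^{\sX,\sT}$ and the class of Markovian semi-static hedges are much smaller than the unrestricted classes, the inequalities squeeze them into equality, so the extremisers on both sides live in these manageable subfamilies. The one point worth flagging is that the inequality $\sH^{\sX,\sT} \le \Psi^{\sX,\sT}$ is not a priori obvious --- Markovian strategies form a restricted family, so the opposite direction would be the naive guess --- but it is precisely the cycle closing up that reveals the restriction is costless.
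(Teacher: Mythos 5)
Your proposal is correct and follows essentially the same route as the paper: the chain $\Phi^{\sX,\sT} \leq \sP^{\sX,\sT} \leq \sH^{\sX,\sT} \leq \Psi^{\sX,\sT}$ via the inclusion $\sM_2^{\sX,\sT} \subseteq \sM^{\sX,\sT}$, weak duality, and the super-replication property of feasible dual quintuples, closed by the strong LP duality $\Phi^{\sX,\sT} = \Psi^{\sX,\sT}$ of Proposition~\ref{prop:primaldualequality}. Your added detail on the cost identification $H({\bf B}) = \sum_{j,n}(e^1_{j,n}+e^2_{j,n})p_{j,n} + \sum_j v_{j,N}p_{j,N}$ is a correct and worthwhile elaboration of a step the paper leaves implicit.
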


\begin{proof}
By weak duality, (Proposition~\ref{prop:wd}) the fact that $\sM_2^{\sX, \sT} \subseteq \sM^{\sX,\sT}$ and the fact that the strategy in Definition~\ref{def:tradingstrategyEEDDV} super-replicates we have $\Phi^{\sX,\sT} \leq {\sP}^{\sX,\sT}(a, {\bf C}) \leq {\sH}^{\sX,\sT}(a, {\bf C}) \leq \Psi^{\sX,\sT}$.
But Proposition~\ref{prop:primaldualequality} implies that $\Phi^{\sX,\sT} = \Psi^{\sX,\sT}$ and hence there is equality throughout.
\end{proof}

\subsection{An example}
\label{ssec:example}

The following example is an extension and reformulation of Example~\ref{eg:1.1} to the current setting.

The current price of the underlying is 100. European call options trade with maturities in $\sT = \{ t_1, \ldots ,t_N = T \}$ and strikes in $\sK = \{ 50, 100, 150 \}$. Let $\sX = \{ 0 \} \cup \sK$. Let $(q_m)_{1 \leq m \leq N}$ be a set of probabilities which sum to 1.

Define the set of call option prices by ${\bf C} = c_{j,n}$ where for $1 \leq n \leq N$
\[  c_{j,n} = \left\{ \begin{array}{ll} 100  &  j=0  \\
                                        50  & j = 1 \\
                                        25 \sum_{i = 1}^{n} q_i &  j=2 \\
                                        0  & j=3      \end{array}   \right. \]

The simplest model consistent with option prices is one in which at some time $t \in \sT$ the price jumps from 100 to either 50 or 150. The price levels 50 and 150 are absorbing. The probability that the jump occurs at time $t_n$ for $n \in \{1,2, \ldots, N \}$ is $q_n$. Martingale considerations imply that if there is a jump the probability of an up jump (to 150) is equal to the probability of a down jump (to 50).

Consider now an American option which has payoff $a(x,t_n) = (b_n - x)^+$ where $(b_n)_{ n \in \sN = \{1, \ldots, N \}}$ is a decreasing sequence of numbers with $100< b_1 < 150$. The option must be exercised at one of the dates $\{ t_1, \ldots t_N \}$. Set $a_{j,n} = a(x_j,t_n)$ so that $a_{0,n} = b_n$, $a_{1,n} = (b_n - 50)^+$, $a_{2,n} = (b_n -100)^+$ and $a_{3,n}=0$.

Define
\[ n^* = \max_{n \geq 1} \left\{ n : (b_n - 50) > 2(b_1 - 100) \right\} .\]
By the monotonicity of $b_n$ we have $(b_n - 50) > 2(b_1 - 100)$ for all $n \leq n^*$. Since $b_1 < 150$ we must have $n^*\geq 1$. We suppose $b_N \leq 2b_1 - 150$ so that $n^* < N$.

For the primal pricing problem define ${\bf G^1}$ and ${\bf G^2}$ via
\[ g^1_{2,1,n} = \frac{q_{n+1}}{2} I_{ \{ n \leq n^* - 1 \} } \hspace{10mm} g^1_{2,2,n} = \sum_{n+2}^{n^*} q_i  \hspace{10mm} g^1_{2,3,n} = \frac{q_{n+1}}{2} I_{ \{ n \leq n^* - 1 \} } \]
\[ g^2_{1,1,n} = \frac{1}{2} \sum_{m \leq n}q_m \hspace{10mm} g^2_{2,1,n} = \frac{q_{n+1}}{2} I_{ \{ n \geq n^* \} } \hspace{10mm} g^2_{2,2,n} = \sum_{ (n^* +1) \vee (n+2)}^N q_{m} \]
\[  g^2_{2,3,n} = \frac{q_{n+1}}{2} I_{ \{ n \geq  n^* \} } \hspace{5mm} g^2_{3,3,n} = \frac{1}{2} \sum_{m \leq n}q_m \]
with all other entries being zero.
It follows that the entries of ${\bf F}$ are given by
\[ f_{1,n} = \frac{q_n}{2} I_{ \{ n\leq n^* \} } \hspace{10mm} f_{2,n} = \left(\sum_{n^*+1}^N q_i\right) I_{ \{ n=1 \} }  \hspace{10mm} f_{3,n} = \frac{q_n}{2} I_{ \{ n\leq n^* \} } \]
and that ${\bf F}$, ${\bf G^1}$ and ${\bf G^2}$ satisfy the feasibility conditions of Linear Program~\ref{LP:primal}.
For this set of transition probabilities the model based price of the American call (using the stopping time $\tau =  \inf \{ t_m \in \sT : \Delta_{t_m} = 2 \}$) is
\[ \Phi = \sum_{j,n}f_{j,n}a_{j,n} = (b_1 - 100) \sum_{n^*+1}^N q_i   +  \sum_1^{n^*} \frac{q_n}{2}(b_n - 50) \]
Note that in this model, we may consider the jump time as known at time 1. If the jump time is at or before $t_{n^*}$ exercise is delayed until the time of the jump; if the jump time is at or after $t_{n^*+1}$ then it is not optimal to wait, but instead the American option should be exercised immediately, at time 1.

Now consider the dual hedging problem. Set ${\bf D^1}=0$, ${\bf E^2}=0$ and define ${\bf V}$, ${\bf D^2}$ and ${\bf E^1}$ by
\[ \begin{array}{rcl}
v_{0,n} & = & {\cred \max \{ b_n, 3(b_1 - 100) \} } \\
v_{1,n} & = & (b_n - 50) I_{ \{ n \leq n^* \} } + 2(b_1 - 100) I_{ \{ n > n^* \} } \\
v_{2,n} & = & (b_1 - 100) \\
v_{3,n} & = & 0
\end{array}
 \]
(see Figure~\ref{fig:eg}) together with, for $1 \leq n < N$, $e^1_{j,n} = (v_{j,n} - v_{j,n+1})$ and for $0 \leq j < 3$, $d^2_{j,n} = (v_{j+1,n+1} - v_{j,n+1})/50$ with $d^2_{J,n}=0$.

Since ${\bf E^1} \geq 0$ it follows that (\ref{eq:rtp1}) holds. For (\ref{eq:rtp})
note that $e^1_{j,n} + (x_k - x_j) d^2_{j,n} + v_{k,n+1} - v_{j,n} = ( e^1_{j,n} + v_{j,n+1} - v_{j,n} ) + ( v_{k,n+1} - v_{j,n+1} - (x_k - x_j) d^2_{j,n}) \geq 0$
where we use the fact that $v_{j,n} =  e^1_{j,n} + v_{j,n+1}$ and $\bar{v}_{n+1}$ is convex, so that $v_{k,n+1} \geq v_{j,n+1} + (x_k - x_j) d^2_{j,n}$ as long as $d^2_{j,n}$ is in the subdifferential of $\bar{v}_{n+1}$. Then the feasibility conditions of the dual problem are satisfied.

Further, $\Psi = \sum_{j,n} (e^1_{j,n} + e^2_{j,n}) p_{j,n} + \sum_j v_{j,N} p_{j,N}$ is given by
\begin{eqnarray*}
\Psi & = & \sum_1^{n^*-1} (b_n - b_{n+1}) \sum_{m \leq n } \frac{q_m}{2} + [(b_{n^*}-50) - 2(b_1 - 100)] \sum_{m \leq n^* } \frac{q_m}{2}  + (b_1 - 100) \\
& = & \sum_1^{n^*-1} b_n  \sum_{m \leq n } \frac{q_m}{2} - \sum_2^{n^*} b_n \sum_{m \leq n-1 } \frac{q_m}{2} + (b_{n^*}-50) \sum_{m \leq n^* } \frac{q_m}{2} + (b_1 - 100) \sum_{n^*+1}^N q_m\\
& = &   \sum_1^{n^*} (b_n  - 50) \frac{q_n}{2} + (b_1 - 100)\sum_{n^*+1}^N q_m
\end{eqnarray*}
Hence the candidate solutions for the primal and dual problems yield the same value for the corresponding linear programme, and must both be optimal.

\begin{figure}[!htbp]

\centering

\begin{tikzpicture}[scale=0.75] 
 \draw[-] (0,0) -- (0,8) ;
 \draw[-] (0,0) -- (10,0) ;

 {\cgreen
 \draw[-] (0,6.8) -- (3,4) ;
 \draw[-] (3,4) -- (6,2) ;
 \draw[-] (6,2) -- (9,0) ;
 \draw[dotted] (0,6.8) -- (6.8,0) ;
 }
 {\cblue
 \draw[-] (0,7.4) -- (3,4.4) ;
 \draw[-] (3,4.4) -- (6,2) ;
 \draw[dotted] (3,4.4) -- (7.4,0) ;
}
 \draw[red] (0,7.8) -- (3,4.8) ;
 \draw[red] (3,4.8) -- (6,2) ;
 \draw[red,dotted] (3,4.8) -- (7.8,0) ;

 \draw[dashed] (0,2) -- (6,2) ;
 \draw[dashed] (0,4.8) -- (3,4.8) ;
 \draw[dashed] (0,4.4) -- (3,4.4) ;
 \draw[dashed] (0,4) -- (3,4) ;
 \draw[dashed] (3,0) -- (3,4.8) ;
 \draw[dashed] (6,0) -- (6,2) ;

\draw [] (3,0) circle [radius=.0] node [below]{50} ;
\draw [] (6,0) circle [radius=.0] node [below]{100} ;
\draw [] (9,0) circle [radius=.0] node [below]{150} ;
\draw [] (0,2) circle [radius=.0] node [left]{$(b_1-100)$} ;
\draw [green] (0,3.8) circle [radius=.0] node [left]{$2(b_1-100)$} ;
\draw [blue] (0,4.4) circle [radius=.0] node [left]{$b_{n^*}-50$} ;
\draw [red] (0,5.0) circle [radius=.0] node [left]{$b_{n^*-1}-50$} ;
\draw [green] (0,6.8) circle [radius=.0] node [left]{$b_{n^*+1}$} ;
\draw [blue] (0,7.4) circle [radius=.0] node [left]{$b_{n^*}$} ;
\draw [red] (0,7.9) circle [radius=.0] node [left]{$b_{n^*-1}$} ;
\draw [blue] (7.4,0) circle [radius=.0] node [below]{$b_{n^*}$} ;
\end{tikzpicture}%

\caption{A plot of the function $v$ as a function of strike and maturity, linearly interpolated across strikes. As maturity increases the colour changes from red (maturity $n^*-1$) to blue ($n^*$) to green ($n^*+1$). Also show by the dotted lines are the payoff $a$ of the American option on immediate exercise.  }

\label{fig:eg}
\end{figure}
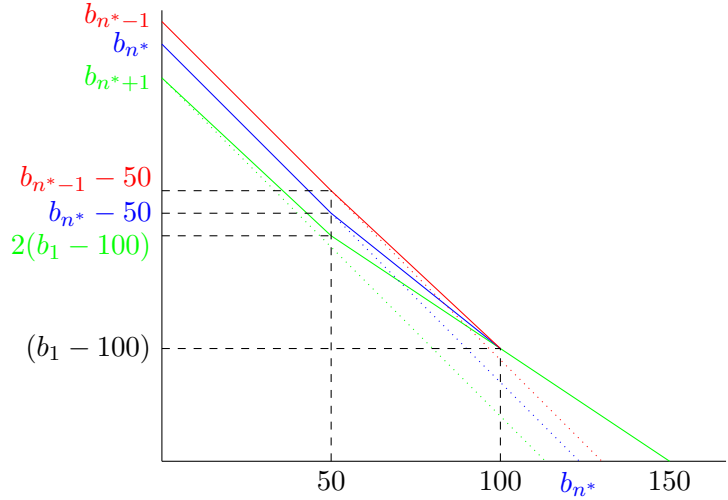

\section{Extensions to processes on $\R^+ \times \bbT$}
\label{sec:extension}
Our goal in this section is to show that the assumptions of the previous section that the price process is restricted to take values in $\sX$ and that the exercise time is restricted to takes values in $\sT$ are not important to the general result, and that similar pricing and hedging results hold true in a more general framework under some mild extra assumptions on the payoff of the American claim. First, we show that over a much wider class of discrete-time models which are consistent with the observed call prices on $\sX \times \sT$ but take values in $\R_+$, the highest model based price is still given by $\Phi^{\sX,\sT}$, the cheapest super-hedge is still given by $\Psi^{\sX,\sT}$, and $\Phi^{\sX,\sT} = \Psi^{\sX,\sT}$ as before. Second, we show that we can extend the results to allow for exercise at arbitrary times $t \in \bbT$, and not just times in $t \in \sT$.

At this stage the key assumption that remains in force is that $c_{J,N}=0$.

\subsection{Processes on $[0,x_J] \times \sT$}
\label{ssec:interval}

\begin{assumption} Time is discrete and takes values in the finite set $\sT_0$. The price process $X = (X_t)_{t \in \sT_0}$ takes values in $[0,x_J]$. $a$ is defined on $[0,x_J] \times \sT$ and that in addition to being positive, $a$ is convex in its first argument. \end{assumption}

Given a function $h$ defined on $\sX$ we can define the {\em linear interpolation} $\bar{h}$ on $[0,x_J]$ of $h$ via
\[ \bar{h}(x) = \frac{x_{j+1} - x}{x_{j+1}-x_j} h(x_j) + \frac{x - x_j}{x_{j+1}-x_j} h(x_{j+1})  \hspace{10mm} x_j \leq x \leq x_{j+1}; 0 \leq j < J. \]

We will need a second type of interpolation for the functions $d^\delta_n$. For $\delta \in \{1,2\}$ we define the {\em mixed interpolation}
$\tilde{d}^{\delta}_n$ by $\tilde{d}^{\delta}_n(x) = {d}^{\delta}_{j,n}$ for $x \in \sX$ and for $x \in (x_j, x_{j+1})$
\begin{equation}
\label{eq:tilded}
 \tilde{d}^{\delta}_n(x) = \left\{ \begin{array}{lcl} d^\delta_{j,n} &\; & d^\delta_{j,n} \leq u^\delta_{j,n} \\
                                                     d^\delta_{j+1,n} & &d^\delta_{j,n} > u^\delta_{j,n} \mbox{ and } d^\delta_{j+1,n} \geq  u^\delta_{j,n} \\
                                                     u^\delta_{j,n} & & d^\delta_{j+1,n} < u^\delta_{j,n} < d^\delta_{j,n} \end{array} \right.
                                                     \end{equation}
where $u^1_{j,n} = (e^1_{j+1,n} - e^1_{j,n})/(x_{j+1}-x_j)$ and $u^2_{j,n} = [(e^1_{j+1,n} - v_{j+1,n}) - (e^1_{j,n}-v_{j,n})]/(x_{j+1}-x_j)$.
Note that for all $x \in [0,x_J]$,
\begin{equation}
\label{eq:dgeqmind}
\tilde{d}^{\delta}_n(x) \geq \min_{0 \leq j \leq J} d^\delta_{j,n}
\end{equation}

\begin{proposition}
\label{prop:linearized}
Suppose the quintuple $({\bf E^1}, {\bf E^2}, {\bf D^1}, {\bf D^2}, {\bf V})$ satisfy the feasibility conditions of the hedging problem in Linear Program~\ref{LP:dual}. Then if we take the linear interpolations (in space) $({\bf \bar{E}^1}, {\bf \bar{E}^2},{\bf \bar{V}})$ of $({\bf E^1}, {\bf E^2},{\bf V})$ and the mixed interpolations $({\bf \tilde{D}^1}, {\bf \tilde{D}^2})$ of $({\bf D^1}, {\bf D^2})$ then the quintuple
$({\bf \bar{E}^1}, {\bf \bar{E}^2}, {\bf \tilde{D}^1}, {\bf \tilde{D}^2}, {\bf \bar{V}})$ satisfy
\begin{eqnarray}
\label{eq:rtp1}
\bar{e}^1_{n}(x) + \bar{e}^2_{n+1}(y) + (y - x) \tilde{d}^1_{n}(x) & \geq & 0  \\
\label{eq:rtp}
\bar{e}^1_{n}(x) + \bar{e}^2_{n+1}(y) + (y - x) \tilde{d}^2_{n}(x) - \bar{v}_{n}(x) + \bar{v}_{n+1}(y) & \geq & 0
\end{eqnarray}
for all $0 \leq x,y \leq x_J$.
\end{proposition}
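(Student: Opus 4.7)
The plan is to reduce the continuous inequalities (\ref{eq:rtp1}) and (\ref{eq:rtp}) over the square $[0,x_J]^2$ to a finite collection of discrete inequalities that follow directly from LP feasibility. I focus on (\ref{eq:rtp1}); the argument for (\ref{eq:rtp}) is structurally identical, with $u^2_{j,n}$ playing the role of $u^1_{j,n}$ and LP constraint (iii) replacing (ii), since the offset $-\bar{v}_n(x) + \bar{v}_{n+1}(y)$ is absorbed into the slope computation by the very definition of $u^2_{j,n}$ in (\ref{eq:tilded}).

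First, fix $x \in [0,x_J]$ and view the left-hand side of (\ref{eq:rtp1}) as a function of $y$. Because $\bar{e}^2_{n+1}$ is continuous and affine on each subinterval $[x_k, x_{k+1}]$, and $\tilde{d}^1_n(x)$ does not depend on $y$, this function is piecewise affine in $y$ with kinks only at grid points. Hence its infimum on $[0,x_J]$ is attained at some $y = x_k \in \sX$, reducing the task to verifying the inequality for grid $y$.

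Next, fix $y = x_k$ and study $f(x) := \bar{e}^1_n(x) + e^2_{k,n+1} + (x_k - x)\tilde{d}^1_n(x)$ on a generic subinterval $[x_j, x_{j+1}]$. On the open interior, $\bar{e}^1_n$ is affine with slope $u^1_{j,n}$ and $\tilde{d}^1_n$ is a constant $\nu$, so $f$ is affine in $x$ with slope $u^1_{j,n} - \nu$. The three-branch definition (\ref{eq:tilded}) then yields: (a) $\nu = d^1_{j,n} \leq u^1_{j,n}$, slope $\geq 0$, so the infimum on the interior is at $x_j^+$ and equals the LP (ii) inequality at $(j,k)$; (b) $\nu = d^1_{j+1,n} \geq u^1_{j,n}$, symmetric, with the infimum at $x_{j+1}^-$ giving LP (ii) at $(j+1,k)$; (c) $\nu = u^1_{j,n}$, so $f$ is constant on the interior with value $e^1_{j,n} + e^2_{k,n+1} + (x_k - x_j)u^1_{j,n}$. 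At the boundary points $x \in \{x_j,x_{j+1}\}$ themselves, $\tilde{d}^1_n$ reverts to $d^1_{j,n}$ and $d^1_{j+1,n}$ respectively, and $f \geq 0$ is immediate from LP (ii).

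The main obstacle is branch (c), where one must show the constant value is non-negative even though $x_k$ ranges freely over $\sX$. The crucial observation is that the grid contains no points strictly between $x_j$ and $x_{j+1}$, so $x_k$ lies either in $\{x_j, x_{j+1}\}$ (yielding LP (ii) at a diagonal index $(j,j)$ or $(j+1,j+1)$, where the $d$-term vanishes), or strictly to the left of $x_j$, or strictly to the right of $x_{j+1}$. In the left-hand case one rewrites
\[ f \;=\; \bigl[e^1_{j,n} + e^2_{k,n+1} + (x_k - x_j)d^1_{j,n}\bigr] \;+\; (x_k - x_j)(u^1_{j,n} - d^1_{j,n}), \]
where the bracket is $\geq 0$ by LP (ii) at $(j,k)$ and the correction term is the product of two negatives (since $x_k < x_j$ and, by the branch-(c) hypothesis, $u^1_{j,n} < d^1_{j,n}$), hence non-negative. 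The right-hand case is analogous, using the identity $e^1_{j+1,n} - e^1_{j,n} = (x_{j+1} - x_j)u^1_{j,n}$ to express $f$ in terms of LP (ii) at $(j+1,k)$, together with the sign condition $u^1_{j,n} > d^1_{j+1,n}$. This sign bookkeeping in branch (c) is precisely what motivates the three-branch design of (\ref{eq:tilded}); branches (a) and (b) are routine monotonicity reductions. The proof of (\ref{eq:rtp}) follows the same template, with $u^2_{j,n}$ arising naturally as the slope of $\bar{e}^1_n(x) - \bar{v}_n(x)$ on $(x_j, x_{j+1})$ and LP (iii) taking the place of LP (ii).
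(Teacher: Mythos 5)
Your proof is correct and follows essentially the same route as the paper's: reduce to $y \in \sX$ via piecewise linearity in $y$, then handle general $x$ by the three-branch case analysis of the mixed interpolation, with the third branch resolved by noting that a grid point $x_k$ cannot lie strictly inside $(x_j,x_{j+1})$ and checking the signs of $(x_k-x_j)(u^1_{j,n}-d^1_{j,n})$ and its right-hand analogue. The only differences are cosmetic (order of the two reductions, and phrasing the case analysis via the slope of $f$ rather than via the difference $\bar h(x,y)-\bar h(x_j,y)$).
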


\begin{proof} We prove (\ref{eq:rtp}), (\ref{eq:rtp1}) being similar, but easier.
We suppose that $e^1_{j,n} + e^2_{k,n+1} + (x_k - x_j) d^2_{j, n} - v_{j,n} + v_{k,n+1} \geq 0$ for all $0 \leq j,k \leq J$ and for fixed $n$ with $1 \leq n \leq N-1$ and aim to deduce that (\ref{eq:rtp}) holds for all $0 \leq x,y \leq x_J$.

Define $h(x_j,x_k) = e^2_{j,n} + e^2_{k,n+1} + (x_k - x_j) d^2_{j, n} - v_{j,n} + v_{k,n+1}$ and $\bar{h}(x, x_k) = \bar{e}^2_{n}(x) + e^2_{k,n+1} + (x_k - x) \tilde{d}^2_{n}(x) - \bar{v}_{n}(x) + v_{k,n+1}$.

Suppose first that $y \in \sX$. If $x \in \sX$ then (\ref{eq:rtp}) follows automatically. So suppose $x \notin \sX$ and write $x = \alpha x_j + (1-\alpha) x_{j+1}$ for some $j$ and $0 < \alpha < 1$.
Then, if $y = x_i$ and $d^2_{j,n} \leq u^2_{j,n} = \frac{(e^1_{j+1,n} - v_{j+1,n})- (e^1_{j,n}-v_{j,n})}{x_{j+1}- x_j}$ we have $\tilde{d}^2_n(x_j) = d^2_{j,n}$ and
\begin{eqnarray*}
\bar{h}(x, y) - \bar{h}(x_j,y) & = & (1-\alpha)[e^1_{j+1,n} - e^1_{j,n} - v_{j+1,n} + v_{j,n} - (x_{j+1} - x_j) d^2_{j,n}]  \\
& = & (1-\alpha) (x_{j+1}- x_j) [ u^2_{j,n} - d^2_{j,n}] \geq 0 .
\end{eqnarray*}
Hence $\bar{h}(x, y) \geq \bar{h}(x_j,y) \geq 0$. Similarly, if $d^2_{j+1,n} \geq u^2_{j,n}$ we find $\bar{h}(x, y) \geq \bar{h}(x_{j+1},y) \geq 0$.

The remaining case is when $d^2_{j+1,n} < u^2_{j,n} < d^2_{j,n}$. Then, if $y \leq x_j$
\[  \bar{h}(x, y) - \bar{h}(x_j,y)  =  (1-\alpha)(y - x_j) (u^2_{j,n} - d^2_{j,n}) \geq 0 \]
whereas, if $y \geq x_{j+1}$,
\[  \bar{h}(x, y) - \bar{h}(x_{j+1},y)  =  (1-\alpha)(y - x_{j+1}) (u^2_{j,n} - d^2_{j+1,n}) \geq 0. \]
For a pictorial representation of these arguments in the case of (\ref{eq:rtp1}) see Figure~\ref{fig:mixedinter}.

It follows that $\bar{h}(x,y) \geq 0$ for all $x \in [0,x_J]$ and for all $y \in \sX$. We want to deduce that
(\ref{eq:rtp}) holds for all $x,y \in [0,x_J]$. But, for fixed $x \in \sX$, the expression on the left-hand-side of (\ref{eq:rtp}) is piecewise linear, with kinks at points $y \in \sX$. Thus, if it is non-negative on $\sX$ it is non-negative for all $y \in [0,x_J]$.

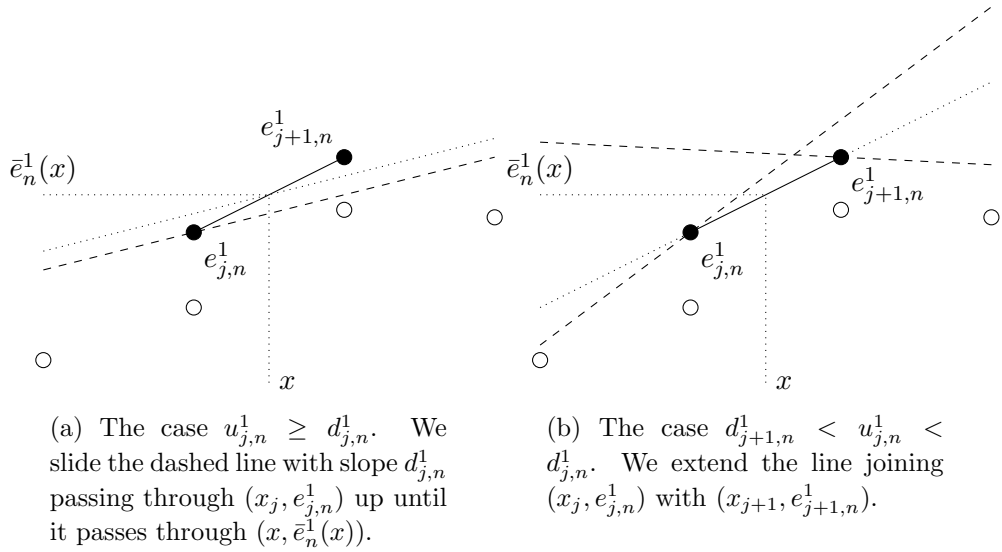
\begin{figure}[!htbp]
\captionsetup[subfigure]{width=0.4\textwidth}
\centering

\subcaptionbox{The case $u^1_{j,n} \geq d^1_{j,n}$. We slide the dashed line with slope $d^1_{j,n}$ passing through $(x_j,e^1_{j,n})$ up until it passes through $(x,\bar{e}^1_n(x))$.}[.5\textwidth]{
\begin{tikzpicture}
 \draw[-] (2,2) -- (4,3) ;
 \draw[dashed] (0,1.5) -- (6,3);
 \draw[dotted] (0,1.75) -- (6,3.25) ;
 \draw[dotted] (3,0) -- (3,2.5) ;
 \draw[] (3,0) circle [radius=0] node [right] {$x$} ;
 \draw[dotted] (0,2.5) -- (3,2.5) ;
 \draw[] (0,2.5) circle [radius=0] node [above] {$\bar{e}^1_n(x)$} ;
 \draw [fill] (2,2) circle [radius=.1] node [below right] {$e^1_{j,n}$};
  \draw [fill] (4,3) circle [radius=.1] node [above left] {$e^1_{j+1,n}$};
  \draw [] (0,0.3) circle [radius=.1] ;
  \draw [] (2,1.0) circle [radius=.1] ;
  \draw [] (4,2.3) circle [radius=.1] ;
  \draw [] (6,2.2) circle [radius=.1] ;
\end{tikzpicture}}%
\subcaptionbox{The case $d^1_{j+1,n} < u^1_{j,n} < d^1_{j,n}$. We extend the line joining $(x_j,e^1_{j,n})$ with $(x_{j+1},{e}^1_{j+1,n})$.}[.5\textwidth]{
\begin{tikzpicture}
\draw[-] (2,2) -- (4,3) ;
 \draw[dashed] (0,0.5) -- (6,5);
 \draw[dashed] (0,3.2) -- (6,2.9) ;
 \draw[dotted] (0,1) -- (2,2) ;
 \draw[dotted] (4,3) -- (6,4) ;
 \draw[dotted] (3,0) -- (3,2.5) ;
 \draw[] (3,0) circle [radius=0] node [right] {$x$} ;
 \draw[dotted] (0,2.5) -- (3,2.5) ;
 \draw[] (0,2.5) circle [radius=0] node [above] {$\bar{e}^1_n(x)$} ;
 \draw [fill] (2,2) circle [radius=.1] node [below right] {$e^1_{j,n}$};
  \draw [fill] (4,3) circle [radius=.1] node [below right] {$e^1_{j+1,n}$};
  \draw [] (0,0.3) circle [radius=.1] ;
  \draw [] (2,1.0) circle [radius=.1] ;
  \draw [] (4,2.3) circle [radius=.1] ;
  \draw [] (6,2.2) circle [radius=.1] ;
\end{tikzpicture}}
\caption{
A pictorial verification of (\ref{eq:rtp1}) for $y \in \sX$. The open circles represent points $(x_k, -e^2_{k,n+1})$.
We want to show there is a line passing through $(x, \bar{e}^1_n(x))$ which lies above $(x_k, -e^2_{k,n+1})$ for all $k$. In the first case, Panel (a), we are given that there exists a dashed line with slope $d^1_{j,n}$ which lies above the circles. Since $(x, \bar{e}^1_n(x))$ lies above the dashed line we can slide the dashed line up until it becomes the sloping dotted line, which passes through $(x, \bar{e}^1_n(x))$ and also lies above the open circles. In the second case, Panel (b), we extend the line joining $(x_j,e^1_{j,n})$ with $(x_{j+1},e^1_{j,n+1})$. To the left of $x_{j}$ this dotted line lies above the dashed line with slope $d^1_{j,n}$ passing through $(x_j, e^1_{j,n})$ which in turn lies above the circles.
}
\label{fig:mixedinter}
\end{figure}

\end{proof}

We extend the trading strategy of Definition~\ref{def:tradingstrategy} to the present context in two ways. First, we consider the European option payoffs $e^\delta_{j,n}$ to be made up of call options with strikes $k_j \in \sX$. Then the payoff from a portfolio which has value $e_{j,n}$ at $x_j$ is $\bar{e}_n(x)$ at $x$, for $x \in [0,x_J]$. Second we use the hedge ratios $\tilde{d}_n^{\delta}$ defined on $[0,x_J]$ rather that $d^\delta_n$.

If ${\bf B}$ denotes the Arrow-Debreu style payoff in Definition~\ref{def:tradingstrategy} then if $\bar{b}_n$ is the linear interpolation of $b_{j,n}$ on $[0,x_J]$ then we must be able to write $\bar{b}_n(x) = b_{0,n} + \sum_{0 \leq j < J} \beta_{j,n}(x-x_j)^+$. The payoff of the strategy becomes $\sum_{1 \leq n \leq N} \bar{b}(X_{t_n})$ and the cost is $\sum_{1 \leq n \leq N}( b_{0,n} + \sum_{0 \leq j \leq J} \beta_{j,n} c_{j,n})$.

Let $\sH^{[0,x_J],\sT}(a)$ be the set of super-replicating semi-static strategies which super-replicate for all exercise times and for all price paths with $x_{t_n} \in [0,x_J]$.

\begin{proposition}
\label{prop:linsuper}
If the quintuple $({\bf E^1}, {\bf E^2}, {\bf D^1}, {\bf D^2}, {\bf V})$ is feasible for ${\bf L^{\sX,\sT}_H}$ then the trading strategy in Definition~\ref{def:tradingstrategy}, extended as above, super-replicates the American claim along all paths with $X_{t_n} \in [0,x_J]$ for $1 \leq n \leq N$.
\end{proposition}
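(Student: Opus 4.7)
The plan is to mirror the proof of the super-replication proposition in the lattice case (the one immediately preceding Theorem~\ref{thm:main1}), with two modifications to accommodate price paths taking values in $[0,x_J]$ rather than only in $\sX$. First, I would write out the terminal payoff $\sG_T = \sG_T(y_1, \ldots, y_N, \tau)$ from the extended strategy along a path $(s_0, y_1, \ldots, y_N)$ with $y_n \in [0,x_J]$ as
\[ \sG_T = \sum_{n=1}^N (\bar{e}^1_n(y_n) + \bar{e}^2_n(y_n)) + \bar{v}_N(y_N) + \sum_{n=1}^{\sN(\tau)-1} (y_{n+1} - y_n) \tilde{d}^1_n(y_n) + \sum_{n=\sN(\tau)}^{N-1} (y_{n+1} - y_n) \tilde{d}^2_n(y_n), \]
using that $\bar{b}_n(y_n) = \bar{e}^1_n(y_n) + \bar{e}^2_n(y_n)$ for $n<N$ and $\bar{b}_N(y_N) = \bar{e}^1_N(y_N) + \bar{e}^2_N(y_N) + \bar{v}_N(y_N)$.

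The next step is the telescoping reorganisation exactly as in the lattice proof, yielding
\begin{eqnarray*}
\sG_T & = & \bar{e}^2_1(y_1) + \bar{e}^1_N(y_N) + \{ \bar{v}_{\sN(\tau)}(y_\tau) - a(y_\tau,\tau) \} \\
& & + \sum_{n=1}^{\sN(\tau)-1} \{\bar{e}^1_n(y_n) + \bar{e}^2_{n+1}(y_{n+1}) + (y_{n+1}-y_n)\tilde{d}^1_n(y_n)\} \\
& & + \sum_{n=\sN(\tau)}^{N-1} \{\bar{e}^1_n(y_n) + \bar{e}^2_{n+1}(y_{n+1}) + (y_{n+1}-y_n)\tilde{d}^2_n(y_n) - \bar{v}_n(y_n) + \bar{v}_{n+1}(y_{n+1})\} \\
& & + a(y_\tau,\tau).
\end{eqnarray*}
The boundary terms $\bar{e}^2_1(y_1)$ and $\bar{e}^1_N(y_N)$ vanish because the underlying data $e^2_{j,1}$ and $e^1_{j,N}$ are zero and hence so are their linear interpolations. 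The two summations are non-negative for all $y_n, y_{n+1} \in [0,x_J]$ by Proposition~\ref{prop:linearized}, which is precisely where the extension from $\sX$ to $[0,x_J]$ is used.

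The only genuinely new ingredient is the inequality $\bar{v}_{\sN(\tau)}(y_\tau) \geq a(y_\tau, \tau)$. For $y_\tau \in \sX$ this follows from constraint (\ref{eq:lpi}), and the general case uses the standing convexity assumption on $a(\cdot, t)$: writing $y_\tau = \alpha x_j + (1-\alpha) x_{j+1}$ with $\alpha \in [0,1]$,
\[ \bar{v}_{\sN(\tau)}(y_\tau) = \alpha v_{j, \sN(\tau)} + (1-\alpha) v_{j+1,\sN(\tau)} \geq \alpha a(x_j, \tau) + (1-\alpha) a(x_{j+1}, \tau) \geq a(y_\tau, \tau). \]
Combining these observations gives $\sG_T \geq a(y_\tau,\tau)$ along every path with $y_n \in [0,x_J]$ and for every exercise time $\tau \in \sT$, which is the desired super-replication. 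The main (and really only) technical point in the argument is that the interpolations were precisely engineered in Proposition~\ref{prop:linearized} to preserve the termwise inequalities (\ref{eq:lpii}) and (\ref{eq:lpiii}); once that is in hand, the rest is a bookkeeping repeat of the lattice proof plus the convexity of $a$.
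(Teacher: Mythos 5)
Your proposal is correct and follows essentially the same route as the paper: write out $\sG_T$, telescope it into the boundary terms, the $\bar v - a$ term, and the two sums, then invoke Proposition~\ref{prop:linearized} for the sums and the convexity of $a(\cdot,t)$ together with constraint (\ref{eq:lpi}) on $\sX$ for the $\bar v_{\sN(\tau)}(y_\tau)\geq a(y_\tau,\tau)$ step. The only difference is that you spell out the convexity argument explicitly, which the paper states in one line.
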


\begin{proof}
Suppose the American claim is exercised at $\tau \in \sT$, and that $X= (X_{t_n})_{0 \leq n \leq N}$ follows the path $(s_0, y_1, \ldots, y_N)$ with $y_i \in [0,x_J]$. The terminal payoff $\sG_T= \sG_T(y_1, \ldots, y_N)$ from the strategy is
\begin{eqnarray*} \sG_T & = & \sum_{n=1}^N  (\bar{e}^{1}_{n}(y_n) + \bar{e}^2_{n}(y_n))  + \bar{v}_N(y_N) \\
&& \hspace{5mm}+ \sum_{1}^{\sN(\tau)-1} (y_{n+1}- y_n) \tilde{d}^1_{n}(y_n)+ \sum_{\sN(\tau)}^{N-1}  (y_{n+1}- y_n) \tilde{d}^2_{n}(y_n) \\
& = & \bar{e}^2_1(y_1) + \bar{e}^1_N(y_N) + \{ \bar{v}_{\sN(\tau)}(y_\tau) - a(y_\tau,\tau) \} \\
&& \hspace{5mm}+\sum_{1}^{\sN(\tau)-1} \left\{ \bar{e}^1_n(y_n) + \bar{e}^2_{n+1}(y_{n+1}) + (y_{n+1}- y_n) \tilde{d}^1_{n}(y_n) \right\} \\
&& \hspace{5mm}+\sum_{\sN(\tau)}^{N-1} \left\{ \bar{e}^1_n(y_n) + \bar{e}^2_{n+1}(y_{n+1}) + (y_{n+1}- y_n) \tilde{d}^2_{n}(y_n) - \bar{v}_n(y_n) + \bar{v}_{n+1}(y_{n+1}) \right\}  \\
&& \hspace{5mm} + a(y_\tau,\tau).
\end{eqnarray*}
The first two elements in the second expression are zero. The third is non-negative since it is non-negative on $\sX$, $\bar{v}_n$ is a linear interpolation between members of $\sX$ and $a$ is convex in the first argument. The fourth and fifth terms are non-negative by Proposition~\ref{prop:linearized}. Hence it follows that $\sG_T(y_1, \ldots, y_N) \geq a(y_\tau,\tau)$ and hence for every possible path in $[0,x_J]^N$, and for every possible time $\tau \in \sT$ the strategy super-replicates.
\end{proof}

It follows that there is an analogue of Theorem~\ref{thm:main1} for this setting, but we state it in a slightly more general form at the end of the next section.

\subsection{A super-hedge for price processes on $\R^+ \times \sT$}
\label{ssec:R}

Under our current assumptions, call options with strikes $x_J$ trade at zero price, and it follows that in any model which is consistent with ${\bf C}$, the price process never gets above $x_J$. Our proof of super-replication considered paths which respected this bound. Nonetheless, ideally we would like our super-replicating strategies to super-hedge for all scenarios for the price process and not just those in which $X_{t_n} \leq x_J$. In this section we describe a superhedge which works for all paths, and which costs the same as the cheapest super-hedge from the previous section. This strategy involves initial purchases of calls with strike $x_J$ which are available at zero price.

\begin{assumption}
\label{ass:Aass} Time is discrete and takes values in the finite set $\sT_0$. The price process $X = (X_t)_{t \in \sT_0}$ takes values in $\R_+$.
The American option payoff $a:\R^+ \times \sT \mapsto \R$ is such that, in addition to being positive and convex in its first argument it also has at most linear growth:  $\lim_{x \uparrow \infty} a(x,t_n)/x < R$ for each $t_n \in \sT$.
\end{assumption}

When prices takes values in $\R^+$ and not just in $\sX$ we add to the definition of a semi-static strategy the requirement that $\Theta^1$ and $\Theta^2$ are bounded. Then weak duality still holds.

Let $\sS^{\R^+,\sT}(a)$ be the set of super-replicating semi-static strategies which super-replicate for all exercise times and for all price paths with $x_{t_n} \in \R^+$.

\begin{definition}
\label{def:add}
In addition to the portfolio holdings/strategy implicit in the quintuple $({\bf E^1}, {\bf E^2}, {\bf D^1}, {\bf D^2}, {\bf V})$ and described in Definition~\ref{def:tradingstrategyEEDDV} and extended in the observations before Proposition~\ref{prop:linearized}, add the payoff
\[ \sum_{1 \leq n \leq N} \beta_{J,n} (X_{t_n} - x_J)^+ \]
by adding $\beta_{J,n}$ calls with maturity $t_n$ and strike $x_J$ for each $n$. Here
\begin{eqnarray*} \beta_{J,n} & = & I_{ \{ 2 \leq n \leq N \} } \left[ \left( \inf_{0 \leq x \leq x_J} \tilde{d}^1_{n-1}(x) \right)^- + \left( \inf_{0 \leq x \leq x_J} \tilde{d}^2_{n-1}(x) + R \right)^- \right] \\
&& \hspace{5mm} +  I_{ \{1 \leq n \leq N-1 \}} \left( d^1_{J,n} + d^2_{J,n} + R \right) . \end{eqnarray*}
Payoffs from these additional options maturing in the money are held until time $T$. The additional payoff is costless, since $c_{J,n}=0$ for all $n$.
\end{definition}

Given $({\bf {\bar{E}^1}, {\bf \bar{E}^2}, {\bf \tilde{D}^1}, \bf \tilde{D}^2}, {\bf \bar{V}})$ defined on $[0,x_J]$, extend the definitions to $\R^+$ by
\[ \bar{e}^\delta_{n}(y) = \bar{e}^\delta_{n}(y) \hspace{10mm} {d}^\delta_n(y)  = {d}^\delta_{J,n} \hspace{10mm} \bar{v}_n(y) =   \bar{v}_n(x_J) + R(y-x_J) \]
for $y>x_J$, $1 \leq n \leq N$ and $\delta = 1,2$.

\begin{proposition}
\label{prop:linsuper2}
If the quintuple $({\bf E^1}, {\bf E^2}, {\bf D^1}, {\bf D^2}, {\bf V})$ is feasible for ${\bf L^{\sX,\sT}_H}$ then the trading strategy in Definition~\ref{def:add} super-replicates the American claim along all paths with $x_{t_n} \in \R_+$ for $1 \leq n \leq N$.
\end{proposition}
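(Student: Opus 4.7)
The plan is to mirror the telescoping proof of Proposition~\ref{prop:linsuper} but on the larger state space $\R_+$, and to show that the boundary-correction calls prescribed in Definition~\ref{def:add} are designed precisely to neutralise every deviation created by the three extensions (constant extension of $\bar e^1_n, \bar e^2_n$; slope-$R$ extension of $\bar v_n$; clamp of $\tilde d^\delta_n$). Concretely, I would write the augmented payoff along a path $(s_0,y_1,\ldots,y_N)\in\R_+^{N+1}$ exercised at $\tau\in\sT$ as
\[
\sG_T^{\mathrm{tot}} = \sum_{n=1}^N[\bar e^1_n(y_n)+\bar e^2_n(y_n)] + \bar v_N(y_N) + \sum_{n=1}^{\sN(\tau)-1}(y_{n+1}-y_n)\tilde d^1_n(y_n) + \sum_{n=\sN(\tau)}^{N-1}(y_{n+1}-y_n)\tilde d^2_n(y_n) + \sum_{n=1}^N \beta_{J,n}(y_n-x_J)^+,
\]
apply the same telescoping identity used in Proposition~\ref{prop:linsuper}, and distribute the $\beta$-payoffs into the one-step summands. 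The output is a sum of five groups: the boundary pair $\bar e^2_1(y_1)+\bar e^1_N(y_N)$ (identically zero by the LP boundary conditions $e^2_{\cdot,1}=e^1_{\cdot,N}=0$), a terminal bracket $\bar v_{\sN(\tau)}(y_\tau)-a(y_\tau,\tau)$, and two running sums with their share of the $\beta$-buffers, plus the raw $a(y_\tau,\tau)$.

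For the terminal bracket, I would show $\bar v_n(y)\geq a(y,t_n)$ for every $y\in\R_+$. On $[0,x_J]$ this follows from constraint (\ref{eq:lpi}) combined with the convexity of $a$ and the piecewise-linearity of $\bar v_n$. For $y>x_J$, I invoke Assumption~\ref{ass:Aass}: because $a(\cdot,t_n)$ is convex and $\lim_{x\uparrow\infty}a(x,t_n)/x<R$, its right-derivative is non-decreasing and bounded above by $R$ everywhere, so $a(y,t_n)\leq a(x_J,t_n)+R(y-x_J)\leq \bar v_n(x_J)+R(y-x_J)=\bar v_n(y)$ by the definition of the extension.

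The main obstacle is non-negativity of the two running sums. Fix
\[
T_n^{(\delta)} := \bar e^1_n(y_n)+\bar e^2_{n+1}(y_{n+1})+(y_{n+1}-y_n)\tilde d^\delta_n(y_n)+I_{\{\delta=2\}}[\bar v_{n+1}(y_{n+1})-\bar v_n(y_n)].
\]
When $(y_n,y_{n+1})\in[0,x_J]^2$ non-negativity is Proposition~\ref{prop:linearized}. When $y_{n+1}>x_J\geq y_n$, the constant extension of $\bar e^2_{n+1}$ and the slope-$R$ extension of $\bar v_{n+1}$ create an extra term $(y_{n+1}-x_J)[\tilde d^\delta_n(y_n)+RI_{\{\delta=2\}}]$ beyond the $[0,x_J]$ inequality; bounding $\tilde d^\delta_n(y_n)\geq\inf_{[0,x_J]}\tilde d^\delta_n$ via (\ref{eq:dgeqmind}) shows this residual is at least $-(\inf_{[0,x_J]}\tilde d^\delta_n+RI_{\{\delta=2\}})^-(y_{n+1}-x_J)^+$, which is exactly the buffer delivered by the first indicator of $\beta_{J,n+1}$. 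When $y_n>x_J$, the identities $\tilde d^\delta_n(y_n)=d^\delta_{J,n}$, $\bar e^1_n(y_n)=\bar e^1_n(x_J)$ and $\bar v_n(y_n)=\bar v_n(x_J)+R(y_n-x_J)$ yield $T_n^{(\delta)}=T_n^{(\delta)}\bigr|_{y_n=x_J}-(y_n-x_J)[d^\delta_{J,n}+RI_{\{\delta=2\}}]$, and the second indicator of $\beta_{J,n}$ supplies the compensating $(d^1_{J,n}+d^2_{J,n}+R)(y_n-x_J)^+$.

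Putting the pieces together gives $\sG_T^{\mathrm{tot}}\geq a(y_\tau,\tau)$ for every $\tau\in\sT$ and every path in $\R_+^{N+1}$, as required. The substantive work is the case analysis and bookkeeping: matching each extension-induced deficit at time $n$ to the correct slice of $\beta_{J,n-1}$ or $\beta_{J,n}$ and verifying that the $R$-addenda enter only in the post-exercise branch because the pre-exercise sum has no $\bar v$-contribution; beyond this the structure of the argument simply replays Proposition~\ref{prop:linsuper}.
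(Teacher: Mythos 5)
Your proposal is correct and follows essentially the same route as the paper: write the total payoff including the $\beta_{J,n}$ call positions, telescope as in Proposition~\ref{prop:linsuper}, and verify non-negativity of each one-step term by a case analysis on whether $y_n$ and $y_{n+1}$ lie above or below $x_J$, with Proposition~\ref{prop:linearized} covering the base case and the buffers from Definition~\ref{def:add} absorbing exactly the deficits created by the extensions. Your explicit verification that $\bar v_n(y)\geq a(y,t_n)$ for $y>x_J$ via convexity and the linear-growth bound $\lim_{x\uparrow\infty}a(x,t_n)/x<R$ is a detail the paper's proof leaves implicit, but otherwise the two arguments coincide.
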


\begin{proof}
If the price process takes values $(s_0, y_1, \ldots, y_N)$ with $y_m \in \R^+$  and if $\tau \in \{ t_1, \ldots t_N \}$ we find the terminal payoff $\sG_T = \sG_T(y_1, \ldots, y_N)$ is given by
\begin{eqnarray*} \sG_T & = & \sum_{n=1}^N  (\bar{e}^{1}_{n}(y_n) + \bar{e}^2_{n}(y_n))  + \bar{v}_N(y_N) + R(y_N - x_J)^+ \\
&& \hspace{5mm} + \sum_{1}^{\sN(\tau)-1} (y_{n+1}- y_n) \tilde{d}^1_{n}(y_n)+ \sum_{\sN(\tau)}^{N-1}  (y_{n+1}- y_n) \tilde{d}^2_{n}(y_n) \\
&& \hspace{5mm} + \sum_1^{N-1} \left[ d^1_{J,n} (y_n - x_J)^+ + \left( \inf_{0 \leq x \leq x_J} \tilde{d}^1_n(x) \right)^- (y_{n+1} - x_J)^+ \right] \\
&& \hspace{5mm} + \sum_1^{N-1} \left[ (d^2_{J,n}+R) (y_n - x_J)^+ + \left( \inf_{0 \leq x \leq x_J} \tilde{d}^2_n(x) + R \right)^- (y_{n+1} - x_J)^+ \right] \\
& \geq & \bar{e}^2_1(y_1) + \bar{e}^1_N(y_N) + \{ \bar{v}_{\sN(\tau)}(y_\tau)  - a(y_\tau,\tau) \} \\
&& \hspace{5mm}+\sum_{1}^{\sN(\tau)-1} \left\{ \bar{e}^1_n(y_n) + \bar{e}^2_{n+1}(y_{n+1}) + (y_{n+1}- y_n) \tilde{d}^1_{n}(y_n) + d^1_{J,n}(y_n - x_J)^+ \right. \\
&& \hspace{20mm} \left. + \left( \inf_{0 \leq x \leq x_J} \tilde{d}^1_n(x) \right)^- (y_{n+1} - x_J)^+ \right\} \\
&& \hspace{5mm}+\sum_{\sN(\tau)}^{N-1} \left\{ \bar{e}^1_n(y_n) + \bar{e}^2_{n+1}(y_{n+1}) + (y_{n+1}- y_n) \tilde{d}^2_{n}(y_n)  + [d^2_{J,n}+R](y_n - x_J)^+ \right. \\
&& \hspace{20mm} \left. - \bar{v}_n(y_n) + \bar{v}_{n+1}(y_{n+1}) +  \left( \inf_{0 \leq x \leq x_J} \tilde{d}^2_n(x) + R \right)^- (y_{n+1} - x_J)^+ \right\}\\
&& \hspace{5mm} + a(y_\tau,\tau).
\end{eqnarray*}
We find that $\sG_T \geq  a(y_\tau,\tau)$ for all times $\tau$ and for all paths provided the terms in the two sums are non-negative. We consider the second of these. Write $y_{n+1}=y$ and $y_n = x$ and consider
\begin{eqnarray*} \Upsilon & = & \bar{e}^1_n(x) + \bar{e}^2_{n+1}(y) + (y- x) \tilde{d}^2_{n}(x)  + [d^2_{J,n}+R](x - x_J)^+ - \bar{v}_n(x)  \\
&&  + \bar{v}_{n+1}(y) + \left[ \inf_{0 \leq x \leq x_J} \tilde{d}^2_n(x) + R \right]^- (y - x_J)^+
\end{eqnarray*}
We can consider four cases according as $x$ or $ y$ lies below or above $x_J$.
If $x,y \leq x_J$ the non-negativity of $\Upsilon$ follows from Proposition~\ref{prop:linearized}. If $x_J < x,y$,
\begin{eqnarray*}
\Upsilon & = & \bar{e}^1_n(x_J) + \bar{e}^2_{n+1}(y_J) - \bar{v}_n(x_J) + \bar{v}_{n+1}(x_J) \\
&& \hspace{5mm} + (x_J-x) {d}^2_{J,n} - R(x - x_J)  + [d^2_{J,n}+R](x - x_J)  \\
&& \hspace{5mm} + (y - x_J) d^2_{J,n} + R(y - x_J) +  \{ \inf_{0 \leq x \leq x_J} \tilde{d}^2_n(x) + R \}^- (y - x_J)^+
\end{eqnarray*}
and $\Upsilon \geq 0$ since each of the three lines in this expression is non-negative.

The other cases follow similarly.
\end{proof}

Denote by $\sP^{\R^+,\sT}(a,{\bf C})$ the highest model based price for the American option over discrete-time models consistent with the call prices for which the price at time $t \in \sT$ is only constrained to be non-negative. Then $\sP^{\R^+,\sT}(a,{\bf C})= \sup_{M \in \sM^{\R^+,\sT}(a,{\bf C})} \phi(M)$.

Let $\sS^{\R^+,\sT}(a)$ denote the space of semi-static strategies which super-replicate the American payoff $a$ along price paths taking values in $\R^+$, and let
$\sH^{\R^+,\sT}(a,{\bf C})= \inf_{(B, \Theta) \in \sS^{\R^+,\sT}(a)}H({\bf B})$ denote the cost of the cheapest super-replicating strategy.

\begin{theorem} We have $\Phi^{\sX,\sT}(a, {\bf C}) = {\sP}^{\R^+,\sT}(a, {\bf C}) = {\sH}^{\R^+,\sT}(a, {\bf C}) = \Psi^{\sX,\sT}(a, {\bf C})$. In particular,
the most expensive model-based price amongst models which are consistent with the observed call prices is attained by a price/regime model in which the price only takes values in $\sX$ (an element of $\sM_2^{\sX,\sT}({\bf C}$). Similarly, there is a super-replicating strategy of the form described in Definition~\ref{def:add} which super-hedges against all exercise times, and along all non-negative paths for which the cost of the strategy is the lowest amongst the class of all super-replicating semi-static strategies.
\end{theorem}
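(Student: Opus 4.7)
The plan is to deduce the four-fold equality by chaining four inequalities, three of which are essentially already established by the preceding propositions, together with the duality from Theorem~\ref{thm:main1}. Specifically, I will show
\[
\Phi^{\sX,\sT}(a,{\bf C}) \;\leq\; \sP^{\R^+,\sT}(a,{\bf C}) \;\leq\; \sH^{\R^+,\sT}(a,{\bf C}) \;\leq\; \Psi^{\sX,\sT}(a,{\bf C}) \;=\; \Phi^{\sX,\sT}(a,{\bf C}),
\]
which forces equality throughout.

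For the first inequality, observe that every model in $\sM^{\sX,\sT}({\bf C})$ is trivially also a model in $\sM^{\R^+,\sT}({\bf C})$ (the price process simply happens to live on the lattice), so the supremum of $\phi$ can only increase when taken over the larger family; combined with Theorem~\ref{thm:main1}, this yields $\sP^{\R^+,\sT}(a,{\bf C}) \geq \sP^{\sX,\sT}(a,{\bf C}) = \Phi^{\sX,\sT}(a,{\bf C})$. The second inequality is weak duality in the $\R^+$ setting: repeating the argument of Proposition~\ref{prop:wd}, for any $M \in \sM^{\R^+,\sT}({\bf C})$, any super-replicating semi-static strategy $({\bf B},\Theta) \in \sS^{\R^+,\sT}(a)$ and any stopping time $\tau$ one has $a(X_\tau,\tau) \leq \sG_T$ pathwise, and taking expectations gives $\E^M[a(X_\tau,\tau)] \leq H({\bf B})$, using $\E^M[\sG_T^\Theta]=0$ which relies on the martingale property of $X$ and the boundedness of $\Theta^1,\Theta^2$ imposed at the start of Subsection~\ref{ssec:R}. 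The third inequality is the real content: take an optimiser $({\bf E^1},{\bf E^2},{\bf D^1},{\bf D^2},{\bf V})$ of ${\bf L^{\sX,\sT}_H}$ (which exists by Proposition~\ref{prop:primaldualequality}), pass to the linear and mixed interpolations on $[0,x_J]$, extend to $\R^+$ as prescribed before Proposition~\ref{prop:linsuper2}, and augment with the additional $x_J$-strike calls of Definition~\ref{def:add}. Proposition~\ref{prop:linsuper2} shows this strategy super-replicates along every non-negative path and for every exercise time in $\sT$, so lies in $\sS^{\R^+,\sT}(a)$. Its cost equals the cost of the original quintuple, which is $\Psi^{\sX,\sT}(a,{\bf C})$, plus the cost of the extra $x_J$-strike calls; crucially the latter is zero because $c_{J,n}=0$ for every $n$ by Assumption~\ref{ass:conditionsonC}. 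This gives $\sH^{\R^+,\sT}(a,{\bf C}) \leq \Psi^{\sX,\sT}(a,{\bf C})$. The final equality $\Psi^{\sX,\sT} = \Phi^{\sX,\sT}$ is Theorem~\ref{thm:main1}, closing the loop.

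The ``in particular'' claims follow at once: since $\Phi^{\sX,\sT} = \sP^{\R^+,\sT}$, the optimising model from Theorem~\ref{thm:main1}, which lies in $\sM_2^{\sX,\sT}({\bf C})$, is already a maximiser over the much larger class $\sM^{\R^+,\sT}({\bf C})$; and since $\sH^{\R^+,\sT} = \Psi^{\sX,\sT}$, the strategy constructed in step three attains the infimum defining $\sH^{\R^+,\sT}$. The main subtlety in the argument, and essentially its only substantive point beyond bookkeeping, is the costless-augmentation observation in step three: the extension of the hedge from $[0,x_J]$ to $\R^+$ is paid for entirely by the zero-cost $x_J$-strike calls, so the upper bound on the hedging value does not deteriorate when we broaden the class of admissible price paths.
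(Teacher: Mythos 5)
Your proposal is correct and follows essentially the same route as the paper: the same sandwich of inequalities $\Phi^{\sX,\sT} \leq \sP^{\R^+,\sT} \leq \sH^{\R^+,\sT} \leq \Psi^{\sX,\sT}$ closed by the LP duality $\Phi^{\sX,\sT}=\Psi^{\sX,\sT}$ from Theorem~\ref{thm:main1}, with the costless augmentation by zero-price $x_J$-strike calls (Definition~\ref{def:add} and Proposition~\ref{prop:linsuper2}) doing the work in the third step. The only cosmetic difference is that the paper phrases that step as the equality $\sH^{\R^+,\sT}=\sH^{\sX,\sT}$ whereas you use only the one-sided bound $\sH^{\R^+,\sT}\leq\Psi^{\sX,\sT}$, which is all the argument needs.
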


\begin{proof}
The proof of weak duality (Proposition~\ref{prop:wd}) does not use the fact that the price process take values in $\sX$, and so applies in this context. Then
we have
\[ \Phi^{\sX,\sT}(a, {\bf C}) = {\sP}^{\sX,\sT}(a, {\bf C}) \leq {\sP}^{\R^+,\sT}(a, {\bf C}) \leq {\sH}^{\R^+,\sT}(a, {\bf C}) = {\sH}^{\sX,\sT}(a, {\bf C}) = \Psi^{\sX,\sT}(a, {\bf C}) \]
where we use Theorem~\ref{thm:main1} for the outer two equalities, the set inclusion
$\sM^{\sX,\sT}({\bf C}) \subseteq \sM^{\R^+,\sT}({\bf C})$ for the first inequality, Proposition~\ref{prop:wd} for the second inequality and
Proposition~\ref{prop:linsuper2} to conclude that ${\sH}^{\R^+,\sT}(a, {\bf C}) = {\sH}^{\sX,\sT}(a, {\bf C})$.
But, $\Phi^{\sX,\sT}(a, {\bf C}) = \Psi^{\sX,\sT}(a, {\bf C})$ since they are the values of a pair of dual linear programmes (Proposition~\ref{prop:primaldualequality}).
\end{proof}

\subsection{Unrestricted exercise times}
\label{ssec:unrestricted}
In the prequel we have assumed that the price process was defined with discrete time-parameter set $\sT_0$, the American payoff was defined on $\R \times \sT$ and that the stopping time was restricted to lie in $\sT$. Now we assume that we are given option prices for a finite set of maturities $t_k \in \sT$ but we want to allow for more general exercise times. First we extend the set of allowable exercise dates to $\sT_0 = \{ 0 \} \cup \sT$. Then we extend the results to allow for exercise at any time $\tau \in \bbT=[0,T]$ under an monotonicity (in time) assumption on the American payoff.

Suppose $a : \R_+ \times \sT_0$ is the payoff function. Define $c^0_{j,1} = (s_0 - x_j)^+$ and $c^0_{j,n} = c^0_{j,n-1}$. Then ${\bf C}^0$ satisfies Assumption~\ref{ass:conditionsonC} and the analysis proceeds exactly as before. In particular, the corresponding primal and dual problems have a solution, the solutions are equal, and they correspond to the highest model based price and cheapest super-replicating strategy.

Now consider the more interesting case in which $\tau$ may take any value in $[0,\bbT]$.
\begin{assumption}
\label{ass:AassU} Time is continuous and takes values in the set $\bbT = [0,T]$. The price process $X = (X_t)_{t \in \bbT}$ takes values in $\R^+$.
The American option payoff $A:\R^+ \times \bbT \mapsto \R$ is positive, convex in its first argument with $\lim_x A(x,t)/x < R$ for each $t \in [0,T]$ and decreasing in its second argument.
\end{assumption}

We suppose we are given a set of call prices ${\bf C} = c_{j,n}$ for strikes $x_j \in \sX$ and maturities $t \in \sT$.

\begin{definition}
\label{def:modelcts}
$\sM^{\R^+, \bbT}({\bf C})$ is the set of continuous-time models (i.e. a filtration $\bbF = (\sF_t)_{0 \leq t \leq T}$ a probability measure $\Prob$ and a stochastic process $X= (X_{t})_{0 \leq t \leq T}$ taking values in $\R^+$) such that $X_0 = s_0$, and
\begin{enumerate}
\item the process $X$ is consistent with {\bf C} in the sense that $\E[(X_{t_n}-x_j)^+] = c_{j,n}$ for all $x_j \in \sX$ and all $t \in \sT$ ;
\item $X$ is a right-continuous $(\Prob,\bbF)$-martingale.
\end{enumerate}
We say such a model is {\em consistent} with the call prices ${\bf C}$.
\end{definition}

For $M \in \sM^{\R^+,\bbT}$ define $\phi^A(M) = \sup_\tau \E^M[A(X_\tau,\tau)]$ where $\tau$ takes values in $\bbT$.

If price processes are defined on $\bbT = [0,T]$ and exercise is allowed at any time, then we need to allow for more general dynamic hedging strategies than those given in Definition~\ref{def:tradingstrategy}.
In particular the set of admissible dynamic strategies must allow for piecewise constant positions in the stock with rebalancings at the times $t \in \sT$ {\em and} at the exercise time $\rho$, in which the size of the position at time $t$ depends on $(x_{t_1 \wedge t}, \ldots , x_{t_{N} \wedge t})$ (before $\rho$) and $(x_{t_1 \wedge t}, \ldots , x_{t_{N} \wedge t}, x_{\rho}, \rho)$ (after exercise).

Let $\sN(t) = \min \{n : t_n \geq t\}$ and suppose $\Theta$ is of this form.
Then the gains from trade from the dynamic hedging strategy is
\begin{eqnarray}
 \sG_T^{\bf \Theta} &  =  & \sum_{n=1}^{\sN(\sigma)-1} \Theta^1_{t_n}(x_{t_1},\ldots, x_{t_n})(x_{t_{n+1}} - x_{t_n})
+ \Theta^2_{\rho}(x_{t_1},\ldots, x_{t_{\sN(\rho)-1}},x_\rho,\rho)(x_{t_{\sN(\rho)}} - x_{\sigma}) \nonumber \\
&& \hspace{5mm} + \sum_{n=\sN(\rho)+1}^{N-1} \Theta^2_{t_n}(x_1,\ldots, x_\rho, \ldots, x_{t_n}, \sigma)(x_{t_{n+1}} - x_{t_n}) \label{eqn:Gdef} \end{eqnarray}

Let $\sS^{\R^+,\bbT}(A)$ denote the space of super-replicating semi-static strategies such that $\sG_T \geq A(x_\sigma,\sigma)$ for all non-negative price paths on $[0,T]$, and all exercise times taking values in $[0,T]$.

In continuous time our assumption is that the space of admissible semi-static strategies dynamic strategies includes gains from trade of the form in (\ref{eqn:Gdef}). In particular, the definition of a semi-static strategy must include the possibility of a rebalancing of the dynamic hedge at the moment of exercise. The space of admissible strategies may be larger, but any admissible strategy must have the twin properties that the gains from trade $\sG^{\bf \Theta}_T = (\Theta \cdot x)$ can be defined pathwise, and that $\E^M[ \sG^{\bf \Theta}_T ] \leq 0$ under any model. The latter is required to rule out doubling strategies. If this is the case then, we have weak duality $\sup_{M \in \sM^{\R^+,\bbT}({\bf C})} \phi^A(M) \leq \inf_{({\bf B},\Theta) \in \sS^{\R^+,\bbT}(A)} H({\bf B})$ as in Proposition~\ref{prop:wd}.

The intuition behind the following theorem is that European option prices determine the range of price movements between successive maturities $t_{n-1}$ and $t_n$, but they say nothing about when these price movements will occur. Since the payoff function is decreasing in $t$, the American option price is highest in a model in which the price movements occur at the beginning of each interval $(t_{n-1},t_n]$. In this way the option holder benefits from the convexity of $A$ without losing from the decline in time.

\begin{theorem}
Suppose the option payoff is given by $A(x,t)$ where $A:\R^+ \times \bbT \mapsto \R$ satisfies Assumption~\ref{ass:AassU}.
Define $a(x,t_k) = \lim _{t \downarrow t_{k-1}} A(x,t) = A(x, t_{k-1}+)$. Assume that the conditions on the space of dynamic strategies are such that weak duality holds.
Then
\[ \Phi^{\sX,\sT}(a,{\bf C}) = \sup_{M \in \sM^{\R^+,\bbT}({\bf C})} \phi^A(M) = \inf_{({\bf B},\Theta) \in \sS^{\R^+,\bbT}(A)} H_{{\bf C}}({\bf B}) = \Psi^{\sX,\sT}(a, {\bf C}). \]
In particular, the supremum of the American option price with payoff $A$ over models which are consistent with the call prices ${\bf C}$ and the cost of the cheapest super-replicating semi-static strategy are both equal to $\Psi^{\sX, \sT}(a,{\bf C})$.
\end{theorem}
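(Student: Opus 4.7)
The plan is to sandwich the middle two quantities between the equal values $\Phi^{\sX,\sT}(a,{\bf C}) = \Psi^{\sX,\sT}(a,{\bf C})$ supplied by Proposition~\ref{prop:primaldualequality}. Weak duality (assumed in the hypothesis) provides $\sup_M \phi^A(M) \leq \inf_{({\bf B},\Theta)} H_{{\bf C}}({\bf B})$, so it remains to show $\Phi^{\sX,\sT}(a,{\bf C}) \leq \sup_M \phi^A(M)$ and $\inf_{({\bf B},\Theta)} H_{{\bf C}}({\bf B}) \leq \Psi^{\sX,\sT}(a,{\bf C})$.

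For the lower bound on the supremum I would take an optimal primal model $M^\star \in \sM_2^{\sX,\sT}({\bf C})$ with discrete price/regime process $(\hat X,\Delta)$ and associated optimal stopping time $\tau^\star = \min\{t_n \in \sT : \Delta_{t_n} = 2\}$, and embed it in continuous time. For $\epsilon>0$ smaller than $\min_n(t_n-t_{n-1})$, set $\tilde X^\epsilon_t = \hat X_{t_{n-1}}$ on $[t_{n-1},t_{n-1}+\epsilon)$ and $\tilde X^\epsilon_t = \hat X_{t_n}$ on $[t_{n-1}+\epsilon,t_n]$. In the filtration obtained by making $\Delta_{t_n}$ observable at time $t_{n-1}+\epsilon$, the process $\tilde X^\epsilon$ is a right-continuous martingale with the correct marginals on $\sT$, hence a member of $\sM^{\R^+,\bbT}({\bf C})$; let $\sigma_\epsilon$ be the stopping time that equals $t_{n-1}+\epsilon$ on the event $\{\tau^\star = t_n\}$. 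Since $A$ is decreasing in its second argument,
\[ A(\tilde X^\epsilon_{\sigma_\epsilon},\sigma_\epsilon) = A(\hat X_{\tau^\star}, t_{\sN(\tau^\star)-1}+\epsilon) \uparrow A(\hat X_{\tau^\star}, t_{\sN(\tau^\star)-1}+) = a(\hat X_{\tau^\star},\tau^\star) \]
as $\epsilon \downarrow 0$, and monotone convergence gives $\sup_M \phi^A(M) \geq \Phi^{\sX,\sT}(a,{\bf C})$.

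For the upper bound on the infimum I would start from the optimal dual quintuple $({\bf E}^1,{\bf E}^2,{\bf D}^1,{\bf D}^2,{\bf V})$ of ${\bf L}^{\sX,\sT}_H$ and extend the super-hedge of Proposition~\ref{prop:linsuper2} to allow exercise at any $\rho \in \bbT$. The European payoffs and the hedge ratios $\tilde d^1_k(X_{t_k})$ and $\tilde d^2_k(X_{t_k})$ on the intact intervals $[t_{k-1},t_k]$ with $k \neq \sN(\rho)$ are unchanged; the new ingredient is the rebalance $\Theta^2_\rho$ at $\rho$ on the broken interval $[t_{n-1},t_n]$, where $n=\sN(\rho)$. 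Using $A(X_\rho,\rho) \leq a(X_\rho,t_n) \leq \bar v_n(X_\rho)$ --- the first inequality by monotonicity of $A$ in $t$, the second by constraint (i) combined with convexity of $a(\cdot,t_n)$ and the definition of linear interpolation --- and telescoping constraints (ii) and (iii) of ${\bf L}^{\sX,\sT}_H$ over the intact intervals exactly as in the proof of Proposition~\ref{prop:linsuper2}, pathwise super-replication reduces to a single inequality on the broken interval involving $\bar e^1_{n-1}(X_{t_{n-1}})$, $\bar e^2_n(X_{t_n})$, $\tilde d^1_{n-1}(X_{t_{n-1}})$, $\Theta^2_\rho$ and the values of $\bar v_n$ at $X_\rho$ and $X_{t_n}$. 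A suitable choice of $\Theta^2_\rho$ (depending on $X_\rho$ and $X_{t_{n-1}}$) then yields the required inequality, and the additional $x_J$-strike overlay of Definition~\ref{def:add} handles paths that leave $[0,x_J]$.

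The main obstacle is precisely the verification on the broken interval: the dual LP constraints only pin down behaviour at grid nodes in $\sX \times \sT$, so the key technical step is to show that a single off-grid rebalance at $\rho$ suffices, when combined with the LP-determined European payoffs and hedge ratios at the neighbouring grid times, to super-hedge the off-grid liability $A(X_\rho,\rho)$. Once this is established, weak duality and Proposition~\ref{prop:primaldualequality} close the sandwich
\[ \Phi^{\sX,\sT}(a,{\bf C}) \leq \sup_M \phi^A(M) \leq \inf_{({\bf B},\Theta)} H_{{\bf C}}({\bf B}) \leq \Psi^{\sX,\sT}(a,{\bf C}) = \Phi^{\sX,\sT}(a,{\bf C}), \]
yielding equalities throughout.
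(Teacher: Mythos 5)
Your overall architecture coincides with the paper's: sandwich the two middle quantities between $\Phi^{\sX,\sT}(a,{\bf C})=\Psi^{\sX,\sT}(a,{\bf C})$ using weak duality, prove the pricing inequality by embedding the optimal discrete price/regime model into continuous time with the jumps moved to $t_{n-1}+\epsilon$ and exercising at the regime change, and prove the hedging inequality by augmenting the optimal discrete super-hedge with one extra rebalance at the (possibly off-grid) exercise time. Your pricing half is essentially the paper's argument and is complete: monotonicity of $A$ in time plus $a(x,t_n)=A(x,t_{n-1}+)$ gives $\sum_{j,n}A(x_j,t_{n-1}+\epsilon)f_{j,n}\uparrow\Phi^{\sX,\sT}(a,{\bf C})$.

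The gap is in the hedging half, and it is exactly the step you flag as ``the main obstacle'' and then do not resolve: you assert that ``a suitable choice of $\Theta^2_\rho$ yields the required inequality'' without exhibiting it, and that assertion is the only genuinely new content of this theorem beyond Proposition~\ref{prop:linsuper}. The paper's resolution is concrete and short: at exercise time $\rho$ with $t_m<\rho<t_{m+1}$, take a short position of $a'_+(X_\rho,t_{m+1})$ units of stock, held to $t_{m+1}$, while \emph{leaving the pre-exercise hedge $\tilde d^1$ running over the whole broken interval} and only switching to the $\tilde d^2$/$\bar v$-telescope machinery from $t_{m+1}$ onwards (i.e.\ the hedge treats exercise as if it occurred at $t_{m+1}$, apart from the tangent position). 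The entire effect is then to add the single term
\[
a(y_{t_{m+1}},t_{m+1})-a(y_\rho,t_{m+1})-a'_+(y_\rho,t_{m+1})\bigl(y_{t_{m+1}}-y_\rho\bigr)\;\geq\;0
\]
to the discrete-case decomposition, the inequality being the supporting-line property of the convex function $a(\cdot,t_{m+1})$; the remaining floor is $a(y_\rho,t_{m+1})=A(y_\rho,t_m+)\geq A(y_\rho,\rho)$ by monotonicity of $A$ in its second argument. Note that your proposed reduction, which compares $A(X_\rho,\rho)$ with $\bar v_n(X_\rho)$ and lets $\Theta^2_\rho$ depend on $X_\rho$ and $X_{t_{n-1}}$, is not how the bookkeeping closes: the $\bar v$-telescope can only be anchored at grid times, so the comparison must be routed through $a(X_{t_n},t_n)$ at the next grid date via the tangent line at $X_\rho$, and the correct rebalance depends on $X_\rho$ (through $a'_+(X_\rho,t_n)$) but not on $X_{t_{n-1}}$. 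Without identifying this choice and the convexity inequality, the super-replication claim on the broken interval is unproved.
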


\begin{proof}

The payoff $a$ is positive, convex in its first argument, and $\lim_x A(x,t)/x < R$. In particular $a$ satisfies all the assumptions on the payoff function required for the results of the previous subsection to hold. We will argue that we can find a super-replicating strategy for the payoff function $A$ with associated super-hedging price $\Psi = \Psi(a,{\bf C})$ and that
there is a sequence of models which are consistent with the call prices ${\bf C}$
for maturities $t \in \sT$ for which the associated prices for the American option with payoff $A$ converge to $\Phi(a, {\bf C})$. By weak duality it will follow that we have solved both the primal and dual problems and that the superhedge outlined in the previous paragraph is the cheapest super-hedge for American options with unrestricted exercise dates.

First we show how to extend the notion of a superhedging strategy to processes (and exercise times) in continuous time.
Recall Definition~\ref{def:tradingstrategy} and suppose we are
given three $(J+1) \times N$ matrices ${\bf E^1}$, ${\bf E^2}$ and ${\bf V}$ and two $(J+1)\times(N-1)$ matrices ${\bf D^1}$
and ${\bf D^2}$. In addition to the elements of the trading strategy described in Definition~\ref{def:tradingstrategy} (using ${\bf \tilde{D}^1}$ as $\bf{\tilde{D}^2}$ as in Proposition~\ref{prop:linearized}), add that if the American option is exercised at a time $\tau \in [0,T]$ with $t_m < \tau < t_{m+1}$, and the asset price is $X_\tau$ then take a short position of $a'_+(X_\tau, t_{m+1})$ units of stock (financed by borrowing), and liquidate this position at $t_{m+1}$. Note that since $a$ is convex the right-derivative $a'_+$ is well defined everywhere.

If the option is exercised at a time $\tau \in \sT$ then the strategy super-replicates exactly as before.
Otherwise, the effect of this additional element of the strategy is to add a term $-a'_+(X_\tau, t_n)(X_{t_{m+1}} - X_{\tau})$, relative to the expressions in Proposition~\ref{prop:linsuper} to the payoff so that it becomes (we add and subtract $a(y_\tau,t_{m+1})$ and $a(y_{t_{m+1}},t_{m+1})$ rather than $a(y_\tau,\tau)$)
\begin{eqnarray*} \sG_T
& = & \bar{e}^2_1(y_1) + \bar{e}^1_N(y_N) + \{ \bar{v}_{m+1}(y_{t_{m+1}}) - a(y_{t_{m+1}},t_{m+1}) \} \\
&& \hspace{5mm}+ \{ a(y_{t_{m+1}},t_{m+1}) - a(y_\tau,t_{m+1}) - a'_+(y_\tau, t_{m+1})(y_{t_{m+1}} - y_{\tau})  \} \\
&& \hspace{5mm}+\sum_{1}^{m} \left\{ \bar{e}^1_n(y_n) + \bar{e}^2_{n+1}(y_{n+1}) + (y_{n+1}- y_n) \tilde{d}^1_{n}(y_n) \right\} \\
&& \hspace{5mm}+\sum_{m+1}^{N-1} \left\{ \bar{e}^1_n(y_n) + \bar{e}^2_{n+1}(y_{n+1}) + (y_{n+1}- y_n) \tilde{d}^2_{n}(y_n) - \bar{v}_n(y_n) + \bar{v}_{n+1}(y_{n+1}) \right\}  \\
&& \hspace{5mm} + a(y_\tau,t_{m+1})
\end{eqnarray*}

Since $a$ is convex we have $a(y_{t_{m+1}},t_{m+1}) \geq a(y_\tau,t_{m+1}) + a'_+(y_\tau, t_{m+1})(y_{t_{m+1}} - y_{\tau})$ and $\sG_T \geq a(y_\tau,t_{m+1}) = A(y_\tau, t_{m}+) \geq A(y_\tau,\tau)$ since $A$ is decreasing in its second argument. Hence we have a family of super-replicating strategies; minimising over the cost of such strategies gives
\[ \inf_{({\bf B},\Theta) \in \sS^{\R^+,\bbT}(A)} H({\bf B}) \leq \Psi^{\sX,\sT}(a, {\bf C}). \]

Now we turn to the pricing problem.
Suppose $\min_{1 \leq n \leq N} \{ t_n - t_{n-1} \} = \epsilon_0$. Let $(X,\Delta)$ be a process with time-parameter set $\sT$ taking values in $\sX \times \{1,2\}$, and such that $\E[(X_{t_n}-x_j)^+] = c_{j,n}$. The model-based price of the American option with payoff $A$ in this model is $\sum_{j,n} A(x_j,t_n) f_{j,n}$.

Choose $(X, \Delta)=(X^{a, {\bf C}}, \Delta^{a, {\bf C}})$ so that it is the discrete-time process associated with the optimiser in Linear Program~\ref{LP:primal} for the payoff $a$.
Let $\sF_{t_n} = \sigma( X_{t_m}, \Delta_{t_m}; m \leq n)$: extend the time-parameter set of the filtration to $[0,T]$ by setting $\sF_t = \cup_{n: t_n \leq t} \sF_{t_n}$.

For $\epsilon \in (0, \epsilon_0)$ define $\bbF^\epsilon = (\sF^\epsilon_t)_{0 \leq t \leq T}$ by $\sF^\epsilon_t  = \sF_0$ for $t < \epsilon$, $\sF^\epsilon_t = \sF_{t_n}$ for $\epsilon + t_{n-1} \leq t < (\epsilon + t_n)$ and $n<N$ and $\sF^\epsilon_t  = \sF_T$ for $\epsilon + t_{N-1} \leq t \leq T$. Define a family of piecewise constant, right-continuous, continuous-time, bivariate processes $(Y^\epsilon, \Gamma^\epsilon)$ by
\[ (Y^\epsilon_t, \Gamma^\epsilon_t) = \left\{ \begin{array}{ll} (s_0,1) & 0 \leq t < \epsilon   \\\
                                        (X_{t_n}, \Delta_{t_n}) & \epsilon + t_{n-1} \leq t < \epsilon + t_n \\
                                        (X_{T}, \Delta_{T}) & \epsilon + t_{N-1} \leq t \leq T   \end{array} \right. \]
Then $(Y^\epsilon, \Gamma^\epsilon)$ is a $\bbF^\epsilon$-stochastic process obtained from $(X, \Delta)$ by changing the jump times from $\{ t_1, t_2, \ldots t_n=T\}$ to $\epsilon, \epsilon+ t_1, \ldots,   \epsilon + t_{N-1}$ (and extending the time domain to $[0,T]$ by making the process constant between these jump-times). Moreover we have an identity in law $\sL(Y_{t_n}) = \sL(X_{t_n})$ by construction, so that $Y$ is consistent with the prices of traded calls (at the traded maturities $t \in \sT$). The model-based price of the American option with payoff $A$ if the asset price/regime pair is described by $(Y^\epsilon, \Gamma^\epsilon)$ (which is obtained by using a strategy of exercising as soon as the regime process $\Gamma^\epsilon$ has jumped to two) is
\[ \sum_{1 \leq n \leq N} \sum_{0 \leq j \leq J} A(x_j, t_{n-1}+ \epsilon) f_{j,n} \]
which increases to $\sum_{1 \leq n \leq N} \sum_{0 \leq j \leq J} a(x_j,t_{n}) f_{j,n} = \Phi^{\sX,\sT}(a, {\bf C})$ as $\epsilon$ decreases to zero.
Hence
\[ \sup_{M \in \sM^{\R^+,\bbT}({\bf C})} \phi^A(M)  \geq  \Phi^{\sX, \sT}(a, {\bf C}). \]

The proof is complete since $\Phi^{\sX, \sT}(a, {\bf C}) = \Psi^{\sX, \sT}(a, {\bf C})$.
\end{proof}

\section{Unbounded stock prices, and no call with zero price.}
\label{sec:nocall}

In the previous sections we solved for the most expensive model and the cheapest super-replicating strategy under the restriction that there is a large strike at which the associated call price is zero. Our goal in this section is to relax this assumption, under a slight strengthening of the other elements of Assumption~\ref{ass:conditionsonC}, so that the inequalities become strict.

We return to the case of discrete time, although the results can be extended to continuous time exactly as in Section~\ref{ssec:unrestricted}.

\begin{assumption}
\label{ass:AassNCZ} Time is discrete and takes values in the finite set $\sT_0$ and the exercise time of the option is restricted to lie in $\sT$. The price process $X = (X_t)_{t \in \sT_0}$ takes values in $\R^+$.
The American option payoff $a:\R^+ \times \sT \mapsto \R$ is such that $a$ is positive and convex in its first argument. It also has at most linear growth:  $\lim_{x \uparrow \infty} a(x,t_n)/x < R$ for each $t_n \in \sT$.
\end{assumption}

Again, we assume that there is a finite family of call options traded on the market, one for each pair of strike in $\sK$ and maturity in $\sT$, and again we consider the stock as a call with zero strike. The prices of these calls are written in matrix form as ${\bf C}$. We assume:
\begin{assumption}
\label{ass:strongconditionsonC}
\item The set of option prices has the following properties:
\begin{itemize}
\item For $1 \leq n \leq N$, $s_0 = c_{0,n} > c_{1,n} > c_{2,n} >  c_{J,n} > 0$.
\item For $1 \leq n \leq N$,
\( 1 > \frac{c_{0,n} - c_{1,n}}{x_1} > \frac{c_{1,n} - c_{2,n}}{x_2 -x_1} > \cdots > \frac{c_{J-1,n} - c_{J,n}}{x_J -x_{J-1}} > 0\).
\item For $1 \leq n \leq N-1$, and for $1 \leq j \leq J$,
$c_{j, n+1} > c_{j,n}$.
\end{itemize}
\end{assumption}

Recall the definition of $p_{j,n}$ in (\ref{eq:pdef}).
Introduce the $(J+2) \times N$ matrix $\bf \hat{P}$ via $\hat{p}_{j,n} = p_{j,n}$ for $0 \leq j \leq J$ and $\hat{p}_{J+1,n} = c_{J,n}$. Observe that $\sum_{0\leq j \leq J} \hat{p}_{j,n} = 1 < \sum_{0\leq j \leq J+1} \hat{p}_{j,n}$. Further, given a vector $(h_0, h_1, \ldots h_J)$ and a final element $h_{J+1}$ define the {\em extended linear interpolation} $\bar{h}: \R^+ \mapsto \R$ of $h$ by
\[ \bar{h}(x) = \left\{ \begin{array}{ll} h_j   & x=x_j \in \sX \\
                                         \frac{x-x_j}{x_{j+1}-x_j} h_{j+1} + \frac{x-x_j}{x_{j+1}-x_j} h_{j+1} & x< x_J, x_j < x < x_{j+1} \\
                                         h_J + (x-x_J) h_{J+1} & x>x_J. \end{array} \right.
                                         \]

Let $h$ be a vector $h = (h_0,  h_1, \ldots,  h_J, h_{J+1})$ and let $\bar{h}$ be the extended linear interpolation  of $h$. We can consider $\bar{h}(X_{t_n})$ as a payoff of a European option with maturity $t_n$.

\begin{lemma}
The cost of the claim with maturity $t_n$ and payoff $\bar{h}(X_{t_n})$ is $\sum_{0 \leq j \leq J+1} h_j \hat{p}_{j,n}$.
\end{lemma}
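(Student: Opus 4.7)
The plan is to replicate the payoff $\bar h(X_{t_n})$ by a static portfolio of the bond, the stock, and the traded calls with strikes in $\sK$, then read off the cost from the traded prices, and finally verify by summation by parts that this cost equals $\sum_{j=0}^{J+1} h_j \hat p_{j,n}$.

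Concretely, I would first introduce the slopes $m_j = (h_j - h_{j-1})/(x_j - x_{j-1})$ for $1 \le j \le J$, together with the trailing slope $m_{J+1} = h_{J+1}$. Since $\bar h$ is piecewise linear on $[0,x_J]$ with kinks at $x_1,\ldots,x_{J-1}$, and is linear beyond $x_J$ with slope $h_{J+1}$, it admits the decomposition
\[
\bar h(x) \;=\; h_0 + m_1\,x + \sum_{j=1}^{J-1}(m_{j+1}-m_j)(x-x_j)^+ + (h_{J+1}-m_J)(x-x_J)^+ ,
\]
which can be checked by inspecting each of the pieces $[0,x_1],[x_1,x_2],\ldots,[x_{J-1},x_J],[x_J,\infty)$. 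The cost of the corresponding portfolio is then $h_0 + m_1 s_0 + \sum_{j=1}^{J-1}(m_{j+1}-m_j)c_{j,n} + (h_{J+1}-m_J)c_{J,n}$, using $c_{0,n}=s_0$ and the traded call prices $c_{j,n}$.

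The second step is algebraic. Using $c_{0,n}=s_0$ and collecting terms, the cost equals
\[
h_0 + \sum_{j=1}^{J} m_j(c_{j-1,n}-c_{j,n}) + h_{J+1}\,c_{J,n}
\;=\; h_0 + \sum_{j=1}^{J}(h_j - h_{j-1})\,\frac{c_{j-1,n}-c_{j,n}}{x_j - x_{j-1}} + h_{J+1}\,c_{J,n}.
\]
Abel summation on the middle term regroups the coefficient of each $h_j$ and, recalling the definition of $p_{j,n}$ in (\ref{eq:pdef}) together with $\hat p_{0,n}=p_{0,n}=1-(s_0-c_{1,n})/x_1$ and $\hat p_{J+1,n}=c_{J,n}$, produces exactly $\sum_{j=0}^{J+1} h_j \hat p_{j,n}$.

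There is no real obstacle; the only thing to be careful about is the boundary bookkeeping, namely that the slope $h_{J+1}$ past $x_J$ contributes only through $c_{J,n}$ (so $h_{J+1}$ is paired with $\hat p_{J+1,n}=c_{J,n}$), and that the constant term $h_0$ picks up the $1 - \Delta_1$ combination from Abel's identity that matches $\hat p_{0,n}$. One could alternatively avoid summation by parts altogether by invoking weak duality: the cost above equals $\E^M[\bar h(X_{t_n})]$ under any model $M\in\sM_1^{\sX,\sT}(\bf C)$, and the atomic measure with masses $p_{j,n}$ on $x_j$ (plus mass $\hat p_{J+1,n}$ capturing the linear tail beyond $x_J$) integrates $\bar h$ to exactly $\sum_{j=0}^{J+1} h_j \hat p_{j,n}$, confirming the identity.
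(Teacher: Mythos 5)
Your proposal is correct and follows essentially the same route as the paper: decompose $\bar h$ as a constant plus a stock position plus a portfolio of calls at strikes $x_1,\ldots,x_J$ with the second-difference-of-slopes coefficients, price it off $s_0$ and $c_{j,n}$, and regroup by Abel summation using the definition of $p_{j,n}$ in (\ref{eq:pdef}) and $\hat p_{J+1,n}=c_{J,n}$. The boundary bookkeeping you flag is exactly what the paper's computation carries out.
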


\begin{proof}
$\bar{h}$ is piecewise linear with kinks at elements of $\sX$. In particular, we can write $\bar{h}$ as a sum of call payoffs:
\begin{eqnarray*}
\bar{h}(x) & = & h_0 + \frac{h_1 - h_0}{x_1} x + \sum_{1 \leq j \leq J-1} \left[ \frac{h_{j+1}-h_j}{x_{j+1}-x_j} - \frac{h_{j}-h_{j-1}}{x_j-x_{j-1}} \right] (x-x_j)^+ \\
           &&   \hspace{10mm}         + \left[ h_{J+1} - \frac{h_{J}-h_{J-1}}{x_J-x_{J-1}} \right] (x-x_J)^+ .
\end{eqnarray*}
The cost of this portfolio is
\begin{eqnarray*}
\lefteqn{h_0 + \frac{h_1 - h_0}{x_1} s_0 + \sum_{1 \leq j \leq J-1} \left[ \frac{h_{j+1}-h_j}{x_{j+1}-x_j} - \frac{h_{j}-h_{j-1}}{x_j-x_{j-1}} \right] c_{j,n} + \left[ h_{J+1} - \frac{h_{J}-h_{J-1}}{x_J-x_{J-1}} \right] c_{J,n}} \\
& = & h_0 \left[ 1 - \frac{s_0}{x_1} + \frac{c_{1,n}}{x_1} \right] +  \sum_{1 \leq j \leq J-1} h_j \left[ \frac{c_{j-1,n}-c_{j,n}}{x_{j}-x_{j-1}} - \frac{c_{j,n}-c_{j+1,n}}{x_{j+1}-x_{j}} \right]  \\
&& \hspace{10mm} + h_{J} \left[ \frac{c_{J-1,n}-c_{J,n}}{x_{j}-x_{j-1}} \right] + h_{J+1}c_{J+1,n} \\
& = & \sum_{0 \leq j \leq J+1} h_j \hat{p}_{j,n}
\end{eqnarray*}
\end{proof}

The above lemma motivates following linear program:

\begin{LP}
\label{LP:dualinfinityA}
The hedging problem ${\bf L^{\sX,\infty,\sT}_H}$ is to: \\
find the three $(J+2) \times N$ matrices ${\bf E^1}$, ${\bf E^2}$ and ${\bf V}$ and the two $(J+2)\times(N-1)$ matrices ${\bf D^1}$
and ${\bf D^2}$ which minimise
\[ \psi = \sum_{0 \leq j \leq J, 1 \leq n \leq N} (e^1_{j,n} + e^2_{j,n})\hat{p}_{j,n} + \sum_{0 \leq j \leq J+1} v_{j,N} \hat{p}_{j,N}  \]
subject to ${\bf V} \geq 0$, and
\begin{enumerate}
\item[(i)] $v_{j,n} \geq a(x_j,t_n)$; $0 \leq j \leq J$, $1 \leq n \leq N$;  \\
                $v_{J+1,n} \geq \lim_{x \uparrow \infty} a(x,t_n)/x$ for $1 \leq n \leq N$.
\item[(ii)] $e^1_{j,n} + e^2_{k,n+1} + (x_k-x_j)d^1_{j,n} \geq 0$; $0 \leq j,k \leq J$, $1 \leq n \leq N-1$; \\
            $e^1_{J+1,n} -d^1_{J+1,n} \geq 0$; $0 \leq k \leq J$, $1 \leq n \leq N-1$; \\
            $e^2_{J+1,n+1} + d^1_{j,n} \geq 0$; $0 \leq j \leq J$, $1 \leq n \leq N-1$; \\
            $e^1_{J+1,n} + e^2_{J+1,n+1} \geq 0$; $0 \leq j \leq J$, $1 \leq n \leq N-1$.
\item[(iii)] $e^1_{j,n} + e^2_{k,n+1} + (x_k-x_j)d^2_{j,n} - v_{j,n} + v_{k,n+1} \geq 0$; $0 \leq j,k \leq J$, $1 \leq n \leq N-1$; \\
          $e^1_{J+1,n} - d^2_{J+1,n} - v_{J+1,n} \geq 0$; $0 \leq k \leq J$, $1 \leq n \leq N-1$; \\
          $e^2_{J+1,n+1} + d^2_{j,n} + v_{J+1,n+1} \geq 0$; $0 \leq j \leq J$, $1 \leq n \leq N-1$; \\
          $e^1_{J+1,n} + e^2_{J+1,n+1} - v_{J+1,n} + v_{J+1,n+1} \geq 0$; $1 \leq n \leq N-1$,
\end{enumerate}
and $e^1_{j,N}=e^2_{j,1}=0$.
Let the optimum value be given by $\Psi^{\sX,\infty,\sT} = \Psi^{\sX,\infty,\sT}(a, {\bf C})$.
\end{LP}

\begin{lemma}
\label{lem:anypath}
Suppose the quintuple $({\bf E^1}, {\bf E^2}, {\bf D^1}, {\bf D^2}, {\bf V})$ satisfies the feasibility conditions of Linear Program~\ref{LP:dualinfinityA}.
For fixed $n$ let $\bar{e}^{1}_n$, $\bar{e}^{1}_n$ and $\bar{v}_n$ be the extended linear interpolations of $(e^{1}_{j,n})_{0 \leq j \leq J+1}$, $(e^{2}_{j,n})_{0 \leq j \leq J+1}$ and $(v_{j,n})_{0 \leq j \leq J+1}$ and let $\tilde{d}^\delta_{n}(x)$ be given by (\ref{eq:tilded}) for $0 \leq x \leq x_J $ and for $x>x_J$,
\[ \tilde{d}^{1}_{n}(x) = \tilde{d}^{1}_{n} = \min \{ d^{1}_{J,n}, e^{1}_{J+1,n} \} \hspace{10mm} \tilde{d}^{2}_{n}(x) = \tilde{d}^{2}_{n} = \min \{ d^{2}_{J,n}, e^{1}_{J+1,n} - v_{J+1,n} \}. \]
Define also $\bar{e}^1_N(x)=0=\bar{e}^2_1(y)$.

Then we have
\begin{eqnarray*}
\bar{v}_n(x)  \geq  a(x,t_n)   & \; \; & x \geq 0, 1 \leq n \leq N \\
\bar{e}^1_n(x) + \bar{e}^2_{n+1}(y) + (y-x) \tilde{d}^1_n(x) \geq 0 && x,y \geq 0, 1 \leq n < N \\
\bar{e}^1_n(x) + \bar{e}^2_{n+1}(y) + (y-x) \tilde{d}^2_n(x) - \bar{v}_n(x) + \bar{v}_{n+1}(y) \geq 0 && x,y \geq 0, 1 \leq n < N.
\end{eqnarray*}
\end{lemma}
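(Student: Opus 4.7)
The plan is to verify each of the three inequalities by partitioning $\{x,y\ge 0\}$ according to whether $x,y$ lie in $[0,x_J]$ or $(x_J,\infty)$, reducing the bounded parts to Proposition~\ref{prop:linearized} and handling the extensions past $x_J$ via the new LP constraints at index $J{+}1$ that were absent from Linear Program~\ref{LP:dual}.

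For the first inequality $\bar v_n(x)\ge a(x,t_n)$, I would split $x\ge 0$ into three pieces. If $x=x_j\in\sX$, apply constraint~(i) directly. If $x\in(x_j,x_{j+1})$ with $j<J$, use that $\bar v_n$ is the chord of the convex function $a(\cdot,t_n)$ between two values that majorise it. If $x>x_J$, note $\bar v_n(x)=v_{J,n}+(x-x_J)v_{J+1,n}$; the constraint $v_{J+1,n}\ge\lim_{y\to\infty}a(y,t_n)/y =: R$ together with the fact that for a convex $a$ with asymptotic slope $R$ the right-derivative $a'_+(\cdot,t_n)$ is bounded above by $R$, gives $a(x,t_n)\le a(x_J,t_n)+R(x-x_J)\le \bar v_n(x)$.

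For the second inequality write $\Upsilon(x,y) = \bar e^1_n(x)+\bar e^2_{n+1}(y)+(y-x)\tilde d^1_n(x)$ and consider the four cases determined by the position of $x,y$ relative to $x_J$. On $[0,x_J]^2$ the bound is Proposition~\ref{prop:linearized} verbatim, since the constraints of ${\bf L^{\sX,\infty,\sT}_H}$ at indices $0\le j,k\le J$ coincide with those of ${\bf L^{\sX,\sT}_H}$. For the cases with $x>x_J$ or $y>x_J$, exploit the affine structure of $\bar e^1_n,\bar e^2_{n+1}$ past $x_J$ (with slopes $e^1_{J+1,n},e^2_{J+1,n+1}$) to rewrite $\Upsilon$ as its value at the point obtained by replacing the unbounded coordinate(s) with $x_J$, plus linear remainders with coefficients $e^1_{J+1,n}-\tilde d^1_n$ and $e^2_{J+1,n+1}+\tilde d^1_n$. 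The former is $\ge 0$ by the definition $\tilde d^1_n=\min\{d^1_{J,n},e^1_{J+1,n}\}$; for the latter, case-split on which branch of the $\min$ is active and apply whichever of $e^2_{J+1,n+1}+d^1_{J,n}\ge 0$ or $e^1_{J+1,n}+e^2_{J+1,n+1}\ge 0$ matches. The third inequality follows in parallel, with the additional $\bar v$-terms handled by the $J{+}1$-constraints from~(iii) and the analogous definition $\tilde d^2_n=\min\{d^2_{J,n},e^1_{J+1,n}-v_{J+1,n}\}$.

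The main obstacle is the mixed case $x>x_J,\,y\le x_J$ (and its mirror), where $\bar e^2_{n+1}(y)$ is not affine-at-infinity so the decomposition is one-sided and must be carried out by direct comparison with the case-(a) bound at $(x_J,y)$. The comparison again splits on which branch of $\tilde d^\delta_n$ is active, and in the branch $\tilde d^1_n=e^1_{J+1,n}<d^1_{J,n}$ the argument depends crucially on $y-x_J\le 0$, which flips the sign of the slope difference in the right direction to preserve non-negativity. Once this branching is bookkept carefully for both $\tilde d^1_n$ and $\tilde d^2_n$, every remainder reduces to a single LP constraint and the lemma follows.
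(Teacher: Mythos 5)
Your proposal is correct and follows essentially the same route as the paper: the $\bar v$ inequality via convexity of $a$ and the constraint $v_{J+1,n}\ge\lim_x a(x,t_n)/x$, and the other two inequalities via the four-way case split on the positions of $x,y$ relative to $x_J$, reducing $[0,x_J]^2$ to Proposition~\ref{prop:linearized} and handling the unbounded coordinates through the affine extensions, the $\min$-definition of $\tilde d^\delta_n$ past $x_J$, and the new $J{+}1$-indexed LP constraints (together with the bound $\tilde d^\delta_n(x)\ge\min_j d^\delta_{j,n}$ when the unbounded coordinate is $y$). The only cosmetic difference is that the paper uses both inequalities $\tilde d^1_n\le d^1_{J,n}$ and $\tilde d^1_n\le e^1_{J+1,n}$ simultaneously rather than branching on which term of the $\min$ is active, but the two bookkeepings are equivalent.
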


\begin{proof}
The inequality for $\bar{v}$ follows from the convexity of $a$. For the two other inequalities the case of $x,y \leq x_J$ has already been covered in Proposition~\ref{prop:linearized}. So, as an example of the remaining analysis consider $W=\bar{e}^1_n(x) + \bar{e}^2_{n+1}(y) + (y-x) \tilde{d}^1_n(x)$.

For $y \leq x_J < x$ we have
\begin{eqnarray*}
W & = & {e}^1_{J,n} + (x-x_J)e^1_{J+1,n} + \bar{e}^2_{n+1}(y) + (y-x_J + x_J-x)) \tilde{d}^1_n(x) \\
& = & ({e}^1_{J,n} + \bar{e}^2_{n+1}(y) + (y-x_J)\tilde{d}^1_n) + (x-x_J)(e^1_{J+1,n} - \tilde{d}^1_n) \\
&\geq &  ({e}^1_{J,n} + \bar{e}^2_{n+1}(y) + (y-x_J){d}^1_{J,n}) \geq  0
\end{eqnarray*}
where we use $\tilde{d}^1_{n} \leq d^1_{J,n}$ and $\tilde{d}^1_{n} \leq e^1_{J+1,n}$.

For $x \leq x_J < y$ we have
\begin{eqnarray*}
W & = & \bar{e}^1_{n}(x) + {e}^2_{J,n+1} + (y-x_J)e^2_{J+1,n+1} + (y-x_J + x_J-x)) \tilde{d}^1_n(x) \\
& = & (\bar{e}^1_{n}(x) + \bar{e}^2_{J,n+1} + (x_J-x)\tilde{d}^1_n(x)) + (y-x_J)(e^2_{J+1,n+1} + \tilde{d}^1_n(x)) \geq 0
\end{eqnarray*}
since $e^2_{J+1,n+1} + \tilde{d}^1_n \geq \min_{0 \leq j \leq J} \{ e^2_{J+1,n+1} + {d}^1_{j,n} \} \geq 0 $.

Now suppose $x_J < x \leq y$ and note that $\tilde{d}^1_n + e^2_{J+1,n+1} = \min \{ {d}^1_{J,n} + e^2_{J+1,n+1} ,   e^1_{J+1,n} + e^2_{J+1,n+1}\} \geq 0$:
\begin{eqnarray*}
W & = & {e}^1_{J,n} + (x-x_J)e^1_{J+1,n} + {e}^2_{J,n+1} + (y-x+x-x_J) {e}^2_{J+1,n+1} + (y-x) \tilde{d}^1_n(x) \\
& = & ({e}^1_{J,n} + {e}^2_{J,n+1}) + (x - x_J) (e^1_{J+1,n} + {e}^2_{J+1,n+1}) + (y-x)(e^2_{J+1,n+1} + \tilde{d}^1_n) \geq  0.
\end{eqnarray*}
Finally for $x_J < y < x$, and using $\tilde{d}^1_{n} \leq e^1_{J+1,n}$,
\begin{eqnarray*}
W & = & {e}^1_{J,n} + [(y-x_J+x-y)]e^1_{J+1,n} + {e}^2_{J,n+1} + (y-x_J){e}^2_{J+1,n+1} + (y-x) \tilde{d}^1_n(x) \\
& = & ({e}^1_{J,n} + {e}^2_{J,n+1}) + (y - x_J) (e^1_{J+1,n} + {e}^2_{J+1,n+1}) + (x-y)(e^1_{J+1,n} - \tilde{d}^1_n) \geq  0.
\end{eqnarray*}
\end{proof}

\begin{corollary}
\label{cor:cheapest}
The optimal value $\Psi^{\sX,\infty,\sT} = \Psi^{\sX,\infty,\sT}(a, {\bf C})$ of ${\bf L^{\sX,\infty,\sT}_H}$  exists. The problem can be interpreted as the search for the cheapest semi-static strategy (of a certain class) which super-replicates the American claim for all exercise dates in $\sT$, and along all price paths. We have
\[ \inf_{ ({\bf B}, \Theta) \in \sS^{\R^+,\sT}(a) } H({\bf B}) \leq \Psi^{\sX,\infty,\sT}(a, {\bf C}). \]
\end{corollary}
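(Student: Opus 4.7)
The argument proceeds in two main stages. First, I show that every feasible quintuple $({\bf E^1}, {\bf E^2}, {\bf D^1}, {\bf D^2}, {\bf V})$ for ${\bf L^{\sX,\infty,\sT}_H}$ encodes a semi-static strategy in $\sS^{\R^+,\sT}(a)$ whose time-$0$ cost equals the LP objective $\psi$. Form the extended linear interpolations $\bar{e}^1_n, \bar{e}^2_n, \bar{v}_n$ and the mixed interpolations $\tilde{d}^\delta_n$ from Lemma~\ref{lem:anypath}, and define the Markovian semi-static strategy that holds European payoffs $\bar{e}^1_n(X_{t_n})+\bar{e}^2_n(X_{t_n})$ at $t_n$ for $n<N$ and $\bar{e}^1_N(X_{t_N})+\bar{e}^2_N(X_{t_N})+\bar{v}_N(X_{t_N})$ at $t_N$, each realised as a combination of calls via the decomposition in the preceding lemma, together with the dynamic stock position $\tilde{d}^1_n(X_{t_n})$ before exercise and $\tilde{d}^2_n(X_{t_n})$ after. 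By the preceding lemma the time-$0$ cost reduces to $\psi$.

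Super-replication is then established via the telescoping identity of Proposition~\ref{prop:linsuper}. For any path $(s_0, y_1, \ldots, y_N) \in \R_+^N$ and exercise time $\rho \in \sT$,
\[ \sG_T = \bar{e}^2_1(y_1) + \bar{e}^1_N(y_N) + [\bar{v}_{\sN(\rho)}(y_\rho) - a(y_\rho,\rho)] + \Sigma_1 + \Sigma_2 + a(y_\rho,\rho), \]
where $\Sigma_1, \Sigma_2$ collect the step-by-step increments pairing the (ii)- and (iii)-type terms across consecutive time-points. The boundary conditions $e^1_{j,N}=e^2_{j,1}=0$ kill the first two terms; $\bar{v}_n(\cdot) \geq a(\cdot,t_n)$ from Lemma~\ref{lem:anypath} makes the bracketed term non-negative; and $\Sigma_1, \Sigma_2 \geq 0$ by the two other inequalities in Lemma~\ref{lem:anypath}, crucially valid for all $x, y \geq 0$ including excursions beyond $x_J$. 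Hence $\sG_T \geq a(y_\rho,\rho)$ pathwise, so the strategy belongs to $\sS^{\R^+,\sT}(a)$.

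The second stage is to conclude existence of the LP optimum and the inequality. ${\bf L^{\sX,\infty,\sT}_H}$ is plainly feasible (take $v_{J+1,n}=R$, $v_{j,n}$ large enough to dominate $a(x_j,t_n)$, $d^\delta \equiv 0$, and $e^\delta$ large enough to satisfy (ii)--(iii)), and the objective is bounded below by $0$: for any consistent $M \in \sM^{\R^+,\sT}({\bf C})$ (non-empty by an analogue of Proposition~\ref{prop:markov}) and any feasible quintuple, the super-hedging interpretation of the first stage gives $\psi = \E^M[\sG_T] \geq \E^M[a(X_\tau,\tau)] \geq 0$. Standard LP theory then yields that $\Psi^{\sX,\infty,\sT}$ is attained, and taking the infimum over feasible quintuples in the super-hedging correspondence gives $\inf_{({\bf B},\Theta) \in \sS^{\R^+,\sT}(a)} H({\bf B}) \leq \Psi^{\sX,\infty,\sT}(a,{\bf C})$.

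The main obstacle is the tail super-replication for paths excursing above $x_J$, where call prices are no longer zero and the clean argument of Section~\ref{sec:extension} fails. The $(J+1)$-indexed variables $e^\delta_{J+1,n}, d^\delta_{J+1,n}, v_{J+1,n}$ and the extra tail constraints in (ii)--(iii), together with the cap on $\tilde{d}^\delta_n$ for $x>x_J$ in Lemma~\ref{lem:anypath}, are engineered precisely to guarantee that each step-by-step increment in $\Sigma_1$ and $\Sigma_2$ remains non-negative when either endpoint lies beyond $x_J$; the careful bookkeeping of tail calls with strike $x_J$ then ensures the constructed super-hedge matches the LP cost $\psi$.
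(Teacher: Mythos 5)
Your proposal is correct and follows essentially the same route as the paper: super-replication of any feasible quintuple along all of $\R^+$ via Lemma~\ref{lem:anypath} combined with the telescoping decomposition of Proposition~\ref{prop:linsuper}, an explicit feasible point, and boundedness below of the objective to secure attainment and the inequality. The only cosmetic differences are your choice of feasible quintuple (the paper takes ${\bf E^1}={\bf E^2}={\bf D^1}=0$ with $d^2_{J+1,n}=-R$ rather than $d^\delta\equiv 0$ with large $e^1_{J+1,n}$) and your use of weak duality against a consistent model in place of the paper's remark that a super-hedge on unbounded paths is in particular a super-hedge on $\sX$-valued paths.
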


\begin{proof}
Lemma~\ref{lem:anypath} together with Proposition~\ref{prop:linsuper} imply that any feasible quintuple is associated with a strategy which superreplicates the claim.

Let $({\bf E^1}, {\bf E^2}, {\bf D^1}, {\bf D^2}, {\bf V})$ be given by ${\bf E^1}= {\bf E^2}= {\bf D^1}=0$, together with $v_{j,n}= \max_{1 \leq n \leq N} \max_{0 \leq j \leq J} a(x_j,t_n)$ for $0 \leq j \leq J$, $v_{J+1,n} = R$, $d^2_{j,n}= 0$ for $0 \leq j \leq J$ and $d^2_{J+1,n} = -R$.
Then this quintuple is feasible.

Note also that the objective function is bounded below: any strategy which superreplicates the claim for unbounded price paths, also super-replicates the claim for paths constrained to lie in $\sX$.
\end{proof}

The dual ${\bf L^{\sX,\infty,\sT}_P}={\bf L^{\sX,\infty,\sT}_P}({\bf C})$ of the hedging linear program ${\bf L^{\sX,\infty,\sT}_H}$ is the following linear program.
If $M =({\bf F}, {\bf G^1}, {\bf G^2})$ we can define
\begin{equation}
 \label{eq:phiM}
\phi_*(M) = \sum_{0 \leq j \leq J} \sum_{ 1 \leq n \leq N} a(x_j, t_n) f_{j,n} + \sum_{1 \leq n \leq N} f_{J+1,n} \lim_{x \uparrow \infty} \frac{a(x,t_n)}{x} \end{equation}
and the aim of ${\bf L^{\sX,\infty,\sT}_P}$ is to minimise $\phi_*(M)$ over feasible models.
Note that the definition of $\phi_*(M)$ in (\ref{eq:phiM}) makes sense even if the model $M$ is not feasible.

\begin{LP}
\label{LP:primalA}
The pricing problem ${\bf L^{\sX,\infty,\sT}_P}$ is to: \\
find the $(J+2) \times N$ matrix $\bf F$ and the two $(J+2)\times(J+2)\times(N-1)$ matrices ${\bf G^1}$
and ${\bf G^2}$ which maximise
\[ \sum_{0 \leq j \leq J} \sum_{ 1 \leq n \leq N} a(x_j, t_n) f_{j,n} + \sum_{1 \leq n \leq N} f_{J+1,n} \lim_{x \uparrow \infty} \frac{a(x,t_n)}{x} \]
subject to ${\bf F} \geq 0$, ${\bf G^1}\geq 0$, ${\bf G^2}\geq 0$, and
\begin{enumerate}
\item[(a)] $\sum_{0 \leq k \leq J} (g^1_{j,k,n} + g^2_{j,k,n}) = \hat{p}_{j,n}$; \hspace{10mm} $0 \leq j \leq J$, $1 \leq n \leq N-1$. \\
           $\sum_{0 \leq k \leq J+1} (g^1_{J+1,k,n} + g^2_{J+1,k,n}) = \hat{p}_{J+1,n}$; \hspace{10mm} $1 \leq n \leq N-1$.
\item[(b)] $\sum_{0 \leq i \leq J} (g^1_{i,j,n-1} + g^2_{i,j,n-1}) = \hat{p}_{j,n}$; \hspace{10mm} $0 \leq j \leq J$, $2 \leq n \leq N$. \\
           $\sum_{0 \leq i \leq J+1} (g^1_{i,J+1,n-1} + g^2_{i,J+1,n-1}) = \hat{p}_{J+1,n}$; \hspace{10mm} $2 \leq n \leq N$.
\item[(c)] $\sum_{0 \leq k \leq J} (x_k - x_j) g^1_{j,k,n} + g^1_{j, J+1, n}= 0$; \hspace{10mm} $0 \leq j \leq J, 1 \leq n \leq N-1$. \\
           $\sum_{0 \leq k \leq J}  g^1_{J+1,k,n} = 0$; \hspace{10mm} $1 \leq n \leq N-1$.
\item[(d)] $\sum_{0 \leq k \leq J} (x_k - x_j) g^2_{j,k,n} + g^2_{j, J+1, n}= 0$; \hspace{10mm} $0 \leq j \leq J, 1 \leq n \leq N-1$. \\
           $\sum_{0 \leq k \leq J}  g^2_{J+1,k,n} = 0$; \hspace{10mm} $ 1 \leq n \leq N-1$.
\item[(e)] \[ \left\{ \begin{array}{ll} f_{j,1} - \sum_{0 \leq k \leq J} g^2_{j,k,1} \leq 0 & 0 \leq j \leq J \\
                                f_{J+1,1} -\sum_{0 \leq k \leq J+1} g^2_{J+1,k,1} \leq 0 &  \\
                                f_{j,n} - \sum_{0 \leq k \leq J} g^2_{j,k,n} + \sum_{0 \leq i \leq J} g^2_{i,j,n-1} \leq 0 & 0 \leq j \leq J, 1<n<N \\
                                f_{J+1,n} - \sum_{0 \leq k \leq J+1} g^2_{J+1,k,n} + \sum_{0 \leq i \leq J+1} g^2_{i,J+1,n-1} \leq 0 &  1<n<N \\
                                f_{j,N} + \sum_{0 \leq i \leq J} g^2_{i,j,N-1} \leq \hat{p}_{j,N} & 0 \leq j \leq J \\
                                f_{J+1,N} + \sum_{0 \leq i \leq J+1} g^2_{i,J+1,N-1} \leq \hat{p}_{J+1,N} &
                  \end{array} \right. . \]
\end{enumerate}
Let the optimum value be given by $\Phi^{\sX,\infty,\sT}= \Phi^{\sX,\infty,\sT}(a, {\bf C})$.
\end{LP}
Since this program is the dual of ${\bf L^{\sX,\infty,\sT}_H}$ we conclude that $\Phi^{\sX,\infty,\sT}= \Psi^{\sX,\infty,\sT}$.

We would like to interpret the optimal solution to this program as a pricing model in a suitable modification of $\sM^{\sX, \sT}_2({\bf C})$. However, there is no consistent model for which the model price equals $\Phi^{\sX,\infty,\sT}(a, {\bf C})$. Instead,
we give a sequence of consistent models which are based on a finite state space and for which the model based price of the American option converges to $\Phi^{\sX,\infty,\sT}$.

Let $\xi_0 = \max_{ 1 \leq n \leq N } \frac{ (x_J c_{J-1,n} - x_{J-1}c_{J,n})}{(c_{J-1,n} - c_{J,n}) } > x_J$. Take $\xi > \xi_0$. The idea is that we are going to consider a market in which calls at an additional strike $\xi$ are traded for zero price. Let $\sK^\xi = \sK \cup \{ \xi \}$, $\sX^\xi = \sX \cup \{ \xi \}$
and let ${\bf C}^\xi$ be the $(J+2) \times N$ matrix of call prices given by $c^{\xi}_{j,n} = c_{j,n}$ for $0 \leq j \leq J$ and $c^{\xi}_{J+1,n}=0$. The requirement $\xi > \xi_0$ ensures that ${\bf C}^\xi$ satisfies both parts of Assumption~\ref{ass:conditionsonC} and hence we can define the matrix ${\bf P}^\xi$ of probabilities, the spaces $\sM^{\sX^\xi, \sT}({\bf C}^\xi)$ with associated subset $\sM_2^{\sX^\xi, \sT}({\bf C}^\xi)$ and the pricing and hedging linear programs ${\bf L}_{\bf P}^{\sX^\xi, \sT}({\bf C}^\xi)$ and ${\bf L}_{\bf H}^{\sX^\xi, \sT}({\bf C}^\xi)$.

For $1 \leq n \leq N$ we have $p^\xi_{j,n} = \hat{p}_{j,n}$ for $0 \leq j \leq J-1$,
\[ p^\xi_{J,n} = \frac{c_{J-1,n} - c_{J,n}}{x_J - x_{J-1}} - \frac{c_{J,n}}{\xi - x_J } = \hat{p}_{J,n}  - \frac{\hat{p}_{J+1,n}}{\xi - x_J } \]
and $p^\xi_{J+1,n} =  \frac{c_{J,n}}{\xi - x_J } = \frac{\hat{p}_{J+1,n}}{\xi - x_J }$.

Let $M=({\bf F},{\bf G^1},{\bf G^2})$ be feasible for ${\bf L}_{\bf P}^{\sX, \infty, \sT}({\bf C})$. $M$ does not define a model on $\sX \times \sT$ since
$\sum_{0 \leq j \leq J} \hat{p}_{j,n} x_j = s_0 - c_{J,n} < s_0$, so that $M$ could not correspond to a martingale. (If as an alternative we hope that $(\hat{p}_{j,n})_{0 \leq j \leq J+1}$ define the marginal laws then $\sum_{j=0}^{J+1} \hat{p}_{j,n} = 1 + c_{J,n}>1$ and the set $(\hat{p}_{j,n})_{0 \leq j \leq J+1}$ is not a set of probabilities.) Instead, the idea is to use $M$ to define a model $M^\xi$ on $\sX^\xi \times \sT$ which is consistent with the call prices ${\bf C}^\xi$. To this end define $({\bf G^{1,\xi}},{\bf G^{2,\xi}})$ by $g^{\delta,\xi}_{j,n} = g^{\delta}_{j,n}$ for $0 \leq j \leq J-1$, $1 \leq n \leq N$, $\delta = 1,2$ together with
\[ g^{\delta, \xi}_{j,J,n}  = g^{\delta}_{j,J,n} - \frac{g^\delta_{j,J+1,n}}{\xi - x_J} \hspace{10mm} 0 \leq j \leq J-1, 1 \leq n \leq N, \delta = 1,2 \]
\[ g^{\delta, \xi}_{J,J,n}  = g^{\delta}_{J,J,n} - \frac{g^\delta_{J,J+1,n}}{\xi - x_J} - \frac{g^\delta_{J+1,J+1,n}}{\xi - x_J} \hspace{10mm} 1 \leq n \leq N, \delta = 1,2 \]
\[ g^{\delta, \xi}_{j,J+1,n}  =  \frac{g^\delta_{j,J+1,n}}{\xi - x_J} \hspace{10mm} 0 \leq j \leq J, 1 \leq n \leq N, \delta = 1,2\]
\[ g^{\delta, \xi}_{J+1,j,n}  =  \frac{g^\delta_{J+1,j,n}}{\xi - x_J} \hspace{10mm} 0 \leq j \leq J+1, 1 \leq n \leq N, \delta = 1,2, \]
and then define ${\bf F^{\xi}}$ via
$f^{\xi}_{j,n} = f_{j,n}$ for $0 \leq j \leq J-1$, $1 \leq n \leq N$ together with
\[ f^{\xi}_{J,1}  = f_{J,1} - \frac{g^2_{J+1,J+1,1}}{\xi - x_J}  \]
\[ f^{\xi}_{J,n}  = f_{J,n} - \frac{1}{\xi - x_J} \left( g^2_{J+1,J+1,n} - g^2_{J,J+1,n-1} - g^2_{J+1,J+1,n-1} \right) \hspace{10mm} 2 \leq n \leq N-1 \]
\[ f^{\xi}_{J,N}  = f_{J,N} + \frac{1}{\xi - x_J} \left( g^2_{J,J+1,N-1} + g^2_{J+1,J+1,N-1} - c_{J,N}  \right)  \]
\[  f^{\xi}_{J+1,n}  =  \frac{f_{J+1,n}}{\xi - x_J} \hspace{10mm} 1 \leq n \leq N .\]
Note that by (c) and (d) $g^{\delta}_{J+1,k,n} = 0$ for $k \leq J$.
Let $M^\xi=({\bf F^\xi},{\bf G^{1,\xi}},{\bf G^{2,\xi}})$.

\begin{lemma}
\label{lem:verylong}
Suppose $M=({\bf F},{\bf G^1},{\bf G^2})$ satisfies the feasibility conditions (a) to (e) for ${\bf L_P}^{\sX,\infty,\sT}({\bf C})$. Let $M^\xi=({\bf F^\xi},{\bf G^{1,\xi}},{\bf G^{2,\xi}})$. Then $M^\xi$ satisfies the feasibility conditions (a) to (e) for ${\bf L_P}^{\sX^{\xi}, \sT}({\bf C^\xi})$.
\end{lemma}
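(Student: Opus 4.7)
The lemma is purely verificational: we must check that $M^\xi$ satisfies the five families of constraints (a)--(e) together with non-negativity in the extended LP ${\bf L^{\sX^\xi,\sT}_P}({\bf C^\xi})$, given that $M$ satisfies the constraints in ${\bf L^{\sX,\infty,\sT}_P}({\bf C})$. The underlying idea that drives every step is that the index $J+1$ in $M$ plays the role of a fictitious ``state at infinity'' whose weights $\hat{p}_{J+1,n}=c_{J,n}$ encode the contribution to the expected call payoff at strike $x_J$; in $M^\xi$ this is replaced by genuine probability mass $\hat{p}_{J+1,n}/(\xi-x_J)$ sitting at the real state $\xi$. The scaling factor $1/(\xi-x_J)$ is chosen precisely so that (i) the contribution to $\E[(X_{t_n}-x_J)^+]$ is preserved, and (ii) the martingale identity transmutes cleanly via the identity $(\xi-x_j)-(x_J-x_j)=\xi-x_J$.

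I would check the conditions in the order marginals, martingale, regime, non-negativity. For (a) and (b), the row/column sums over $k\le J-1$ are unaltered; at column $J$ the reduction $-g^\delta_{j,J+1,n}/(\xi-x_J)$ is exactly compensated by the new column-$J+1$ entry $g^\delta_{j,J+1,n}/(\xi-x_J)$, so for $j\le J-1$ the marginal is unchanged and equals $\hat{p}_{j,n}=p^\xi_{j,n}$. The case $j=J$ additionally removes $g^\delta_{J+1,J+1,n}/(\xi-x_J)$, and invoking constraints (c), (d) of $M$ with $g^\delta\ge 0$ one obtains $g^\delta_{J+1,k,n}=0$ for $k\le J$, hence $g^1_{J+1,J+1,n}+g^2_{J+1,J+1,n}=\hat{p}_{J+1,n}$, giving the required value $\hat{p}_{J,n}-\hat{p}_{J+1,n}/(\xi-x_J)=p^\xi_{J,n}$. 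The $j=J+1$ marginal is just $\hat{p}_{J+1,n}/(\xi-x_J)=p^\xi_{J+1,n}$; the analysis of (b) is symmetric.

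For the martingale conditions (c), (d), the central algebraic identity is, for $j\le J-1$,
\[
(x_J-x_j)\Bigl[g^1_{j,J,n}-\tfrac{g^1_{j,J+1,n}}{\xi-x_J}\Bigr]+(\xi-x_j)\tfrac{g^1_{j,J+1,n}}{\xi-x_J}=(x_J-x_j)g^1_{j,J,n}+g^1_{j,J+1,n},
\]
which, added to the unchanged terms $\sum_{k\le J-1}(x_k-x_j)g^1_{j,k,n}$, reproduces the original martingale constraint and yields $0$. The cases $j=J$ (where $x_J-x_J=0$ kills the middle term) and $j=J+1$ (where $g^1_{J+1,k,n}=0$ for $k\le J$ forces the whole sum to be supported at $k=J+1$, contributing $(\xi-\xi)\cdot(\,\cdot\,)=0$) reduce immediately. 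For (e), $\mathbf{F}^\xi$ has been defined so that on each row the offset between $\sum_k g^{2,\xi}_{j,k,n}$ and $\sum_i g^{2,\xi}_{i,j,n-1}$ differs from the original by $\tfrac{1}{\xi-x_J}$ times the appropriate combination of $g^2_{\cdot,J+1,\cdot}$ terms, which is exactly what $f^\xi_{j,n}-f_{j,n}$ absorbs; each inequality therefore reduces term-by-term to the inequality for $M$.

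\paragraph{The main obstacle.}
I expect the one substantive point, rather than bookkeeping, to be non-negativity. The rescaled entries $g^{\delta,\xi}_{j,J,n}=g^\delta_{j,J,n}-g^\delta_{j,J+1,n}/(\xi-x_J)$ (and the analogue at $j=J$) are non-negative for $\xi$ sufficiently large \emph{provided} $g^\delta_{j,J,n}>0$ whenever $g^\delta_{j,J+1,n}>0$. In the degenerate case $g^\delta_{j,J,n}=0<g^\delta_{j,J+1,n}$, no choice of $\xi$ works. I would handle this by an initial perturbation of $M$: since the feasible set of ${\bf L^{\sX,\infty,\sT}_P}({\bf C})$ has non-empty interior in the relevant coordinates (one can always reroute an infinitesimal amount of mass through $x_J$ before shipping it out), we may replace $M$ by an arbitrarily close feasible $\tilde M$ with $\tilde g^\delta_{j,J,n}>0$ whenever $\tilde g^\delta_{j,J+1,n}>0$, losing only $o(1)$ in the objective; applying the construction to $\tilde M$ and then taking $\xi\to\infty$ recovers the full supremum and is what the outer argument following this lemma needs. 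All remaining new entries, being positive multiples of entries of $M$, are automatically non-negative.
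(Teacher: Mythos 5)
Your verification of conditions (a)--(e) is correct and follows essentially the same route as the paper: the new state $\xi$ carries mass $\hat{p}_{J+1,n}/(\xi-x_J)$, the compensating subtraction at column $J$ makes the marginal sums telescope back to those of $M$, the identity $(\xi-x_j)-(x_J-x_j)=\xi-x_J$ converts the extended martingale constraint $\sum_{k\le J}(x_k-x_j)g^\delta_{j,k,n}+g^\delta_{j,J+1,n}=0$ into the genuine one, the vanishing of $g^\delta_{J+1,k,n}$ for $k\le J$ (forced by (c), (d) and non-negativity of $M$) handles the rows at $J$ and $J+1$, and ${\bf F^\xi}$ is built precisely to absorb the $\tfrac{1}{\xi-x_J}$ corrections in (e). That is the paper's proof, stated more compactly.

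Where you depart from the paper is in your ``main obstacle'' paragraph, and here you have misread the scope of the lemma. The statement deliberately claims only that $M^\xi$ satisfies conditions (a) to (e); it does \emph{not} claim ${\bf F^\xi},{\bf G^{1,\xi}},{\bf G^{2,\xi}}\ge 0$, and indeed the paper says explicitly, immediately after introducing $M^\xi$, that these matrices may fail to be non-negative, so that $M^\xi$ may not be feasible for any $\xi$. If non-negativity were part of the assertion the lemma would be false, exactly for the degenerate case $g^\delta_{j,J,n}=0<g^\delta_{j,J+1,n}$ that you identify. The paper resolves this not inside this lemma but in Lemma~\ref{lem:converges}, by mixing $M^{\xi_2}$ with a strictly positive consistent model $\tilde{M}_2$ (whose existence rests on the strict inequalities of Assumption~\ref{ass:strongconditionsonC} and Lemma~\ref{lem:jtlaw}), choosing the mixing weight $\zeta$ and $\xi_2$ so that the explicit lower bound $g^{\delta,\xi_2}_{j,J,n}\ge -1/(\xi_2-x_J)$ is dominated by $\zeta\eta/N$. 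Your alternative --- perturbing $M$ so that $\tilde g^\delta_{j,J,n}>0$ whenever $\tilde g^\delta_{j,J+1,n}>0$ --- is plausible but under-justified as written: the claim that one ``can always reroute an infinitesimal amount of mass through $x_J$'' while preserving all of (a)--(e) and the martingale constraints, and losing only $o(1)$ in the objective, is essentially the content that the paper's mixing construction supplies with explicit quantitative control, so you would need to carry out an argument of comparable length to make it rigorous. For the lemma as actually stated, your proof is complete once the (a)--(e) computations are written out; the non-negativity discussion belongs to, and is handled differently in, the surrounding argument.
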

\begin{proof} {\small
We want to show that the family $({\bf F^\xi},{\bf G^{1,\xi}},{\bf G^{2,\xi}})$ satisfy the feasibility conditions (a) to (e) in Linear Program~\ref{LP:primal} (where now $\sX = \{ x_0 , \ldots , x_J, \xi \}$, the sums range over $0 \leq i,j,k \leq J+1$ and the probabilities are given by the matrix ${\bf P}^\xi = ({p}^\xi_{j,n})_{0 \leq j \leq J+1, 1 \leq n \leq N}$).

(a)
Suppose $1 \leq n \leq N-1$ and $\delta \in \{1,2\}$.
For $0 \leq j \leq J-1$,
\[ 
\sum_{0 \leq k \leq J+1}  g^{\delta,\xi}_{j,k,n}  =  \sum_{0 \leq k \leq J-1} g^{\delta}_{j,k,n} + \left( g^{\delta}_{j,J,n} - \frac{g^{\delta}_{j,J+1,n}}{\xi - x_J} \right) + \frac{g^{\delta}_{j,J+1,n}}{\xi - x_J}
 =  \sum_{0 \leq k \leq J} g^{\delta}_{j,k,n}
\] 
and so
\[ \sum_{0 \leq k \leq J+1} (g^{1,\xi}_{j,k,n} + g^{2,\xi}_{j,k,n}) = \sum_{0 \leq k \leq J} (g^{1}_{j,k,n} + g^{2}_{j,k,n}) = \hat{p}_{j,n} = p^\xi_{j,n} . \]
For $j=J$,
\begin{eqnarray*}
\sum_{0 \leq k \leq J+1}  g^{\delta,\xi}_{J,k,n}  &= & \sum_{0 \leq k \leq J-1} g^{\delta}_{J,k,n} +
g^{\delta}_{J,J,n} - \frac{g^\delta_{J,J+1,n}}{\xi - x_J} - \frac{g^\delta_{J+1,J+1,n}}{\xi - x_J} + \frac{g^{\delta}_{J,J+1,n}}{\xi - x_J} \\
& = & \sum_{0 \leq k \leq J} g^{\delta}_{J,k,n} - \frac{g^\delta_{J+1,J+1,n}}{\xi - x_J} \\
& = & \sum_{0 \leq k \leq J} g^{\delta}_{J,k,n} - \sum_{0 \leq k \leq J+1}\frac{g^\delta_{J+1,k,n}}{\xi - x_J}.
\end{eqnarray*}
Then
\[ \sum_{0 \leq k \leq J+1} (g^{1,\xi}_{J,k,n} + g^{2,\xi}_{J,k,n}) = \hat{p}_{J,n} - \frac{\hat{p}_{J+1,n}}{\xi - x_J} = p^\xi_{j,n} . \]
For $j = J+1$,
\[
\sum_{0 \leq k \leq J+1}  g^{\delta,\xi}_{J+1,k,n}  = \sum_{0 \leq k \leq J+1} \frac{g^{\delta}_{J+1,k,n}}{\xi - x_J} = \frac{\hat{p}_{J+1,n}}{\xi - x_J} = p^\xi_{J+1,n} . \]

(b)
Suppose $2 \leq n \leq N$
For $0 \leq j \leq J-1$,
\[
\sum_{0 \leq i \leq J+1}  g^{\delta,\xi}_{i,j,n-1}  =  \sum_{0 \leq i \leq J} g^{\delta}_{i,j,n-1} + \frac{g^{\delta}_{J+1,j,n-1}}{\xi - x_J} = \sum_{0 \leq i \leq J} g^{\delta}_{i,j,n-1} \]
since $g^{\delta}_{J+1,j,n-1}=0$. Then
\[ \sum_{0 \leq i \leq J+1}  (g^{1,\xi}_{i,j,n-1} + g^{2,\xi}_{i,j,n-1}) =  \sum_{0 \leq i \leq J} (g^{1}_{i,j,n-1}+g^{2}_{i,j,n-1}) = \hat{p}_{j,n} = p^\xi_{j,n}. \]
For $j=J$,
\begin{eqnarray*}
\sum_{0 \leq i \leq J+1}  g^{\delta,\xi}_{i,J,n-1} & = &  \sum_{0 \leq i \leq J-1} \left( g^{\delta}_{i,J,n-1} - \frac{g^{\delta}_{i,J+1,n-1}}{\xi - x_J} \right) \\
&& \hspace{10mm} + g^{\delta}_{J,J,n-1} - \frac{g^\delta_{J,J+1,n-1}}{\xi - x_J} - \frac{g^\delta_{J+1,J+1,n-1}}{\xi - x_J}
+ \frac{g^{\delta}_{J+1,J,n-1}}{\xi - x_J}         \\
& = & \sum_{0 \leq i \leq J} g^{\delta}_{i,J,n-1} - \sum_{0 \leq i \leq J+1} \frac{g^{\delta}_{i,J+1,n-1}}{\xi -x_J}
\end{eqnarray*}
Then $\sum_{0 \leq i \leq J+1}  (g^{1,\xi}_{i,J,n-1} + g^{2,\xi}_{i,J,n-1}) =  \hat{p}_{J,n} - \frac{\hat{p}_{J+1,n}}{\xi - x_J} = p^\xi_{J,n}$.

For $j=J+1$
\[
\sum_{0 \leq i \leq J+1}  g^{\delta,\xi}_{i,J+1,n-1} = \sum_{0 \leq i \leq J+1}  \frac{g^{\delta}_{i,J+1,n-1}}{\xi - x_J}  \]
and so
\[ \sum_{0 \leq i \leq J+1}  (g^{1,\xi}_{i,J+1,n-1} + g^{2,\xi}_{i,J+1,n-1}) =  \frac{1}{\xi - x_J} \sum_{0 \leq i \leq J+1} (g^{1}_{i,J+1,n-1} + g^{2}_{i,J+1,n-1})
= \frac{\hat{p}_{J+1,n}}{\xi - x_J} = p^\xi_{J+1,n}. \]

(c) and (d)
Now for the martingale condition. Fix $1 \leq n \leq N-1$ and $\delta \in \{1,2\}$.  For $ 0 \leq j \leq J-1$
\begin{eqnarray*}
 \sum_{0 \leq k \leq J+1} (x_k - x_j) g^{\delta,\xi}_{j,k,n} & = & \sum_{0 \leq k \leq J-1} (x_k - x_j) g^{\delta}_{j,k,n} + (x_J - x_j) \left( g^{\delta}_{j,J,n}
- \frac{g^{\delta}_{j,J+1,n}}{\xi - x_J} \right) \\
&&  \hspace{10mm} +  (\xi - x_j) \frac{g^{\delta}_{j,J+1,n}}{\xi - x_J} \\
 & = & \sum_{0 \leq k \leq J} (x_k - x_j) g^{\delta}_{j,k,n} + g^{\delta}_{j,J+1,n} = 0.
\end{eqnarray*}
For $j=J$,
\begin{eqnarray*}
 \sum_{0 \leq k \leq J+1} (x_k - x_J) g^{\delta,\xi}_{J,k,n} & = & \sum_{0 \leq k \leq J-1} (x_k - x_J) g^{\delta}_{j,k,n} + (\xi - x_J) \frac{g^{\delta}_{J,J+1,n}}{\xi - x_J} \\
 & = & \sum_{0 \leq k \leq J} (x_k - x_J) g^{\delta}_{j,k,n} + g^{\delta}_{J,J+1,n} = 0,
\end{eqnarray*}
and for $j=J+1$
\[ \sum_{0 \leq k \leq J+1} (x_k - \xi) g^{\delta,\xi}_{J+1,k,n} =\sum_{0 \leq k \leq J} (x_k - \xi) g^{\delta}_{J+1,k,n} = 0 \]
since each of the terms in the sum is zero.

\item[(e)] For $0 \leq j \leq J-1$,
\begin{eqnarray*}
f^\xi_{j,1} - \sum_{0 \leq k \leq J+1} g^{2,\xi}_{j,k,1} & = & f_{j,1} - \sum_{0 \leq k \leq J-1} g^{2}_{j,k,1} - \left(g^2_{j,J,1} - \frac{g^2_{j,J+1,1}}{\xi - x_J} \right) - \frac{g^2_{j,J+1,1}}{\xi - x_J} \\
& = & f_{j,1} - \sum_{0 \leq k \leq J} g^{2}_{j,k,1} \leq 0.
\end{eqnarray*}

For $j=J$
\begin{eqnarray*}
f^\xi_{J,1} - \sum_{0 \leq k \leq J+1} g^{2,\xi}_{J,k,1} & = & f_{j,1} - \frac{g^2_{J+1,J+1,1}}{(\xi - x_J)}  - \sum_{0 \leq k \leq J-1} g^2_{J,k,1} \\
&& \hspace{10mm}  - g^2_{J,J,1} + \frac{1}{(\xi - x_J)} (g^2_{J,J+1,1} + g^2_{J+1,J+1,1}) - \frac{g^2_{J,J+1,1}}{\xi - x_J} \\
& = & f_{J,1} - \sum_{0 \leq k \leq J} g^{2}_{J,k,1}  \leq 0.
\end{eqnarray*}
For $j = J+1$
\[
f^\xi_{J+1,1} - \sum_{0 \leq k \leq J+1} g^{2,\xi}_{J+1,k,1}  = \frac{f_{J+1,1}}{\xi-x_J} - \frac{\sum_{0 \leq k \leq J+1} g^2_{J+1,k,1}}{(\xi - x_J)} \leq 0.
\]

For $2 \leq n \leq N-1$, and $0 \leq j \leq J-1$,
\begin{eqnarray*}
\lefteqn{f^\xi_{j,n} - \sum_{0 \leq k \leq J+1} g^{2,\xi}_{j,k,n} + \sum_{0 \leq i \leq J+1} g^{2,\xi}_{i,j,n-1}} \\
& = & f_{j,n} - \left[ \sum_{0 \leq k \leq J-1} g^{2}_{j,k,n} + \left(g^2_{j,J,n} - \frac{g^2_{j,J+1,n}}{\xi - x_J} \right) + \frac{g^2_{j,J+1,n}}{\xi - x_J} \right] + \sum_{0 \leq i \leq J} g^2_{i,j,n-1} \\
& = & f_{j,n} - \sum_{0 \leq k \leq J} g^{2}_{j,k,n} + \sum_{0 \leq i \leq J} g^{2}_{i,j,n-1} \leq 0 .
\end{eqnarray*}
For $j=J$,
\begin{eqnarray*}
\lefteqn{f^\xi_{J,n} - \sum_{0 \leq k \leq J+1} g^{2,\xi}_{J,k,n} + \sum_{0 \leq i \leq J+1} g^{2,\xi}_{i,J,n-1}} \\
& = & f_{J,n} - \frac{1}{(\xi - x_J)} (g^2_{J+1,J+1,n} - g^2_{J,J+1,n-1} - g^2_{J+1,J+1,n-1}) \\
&&   - \sum_{0 \leq k \leq J-1} g^2_{J,k,n} - g^2_{J,J,n} + \frac{1}{(\xi - x_J)} (g^2_{J,J+1,n} + g^2_{J+1,J+1,n}) - \frac{g^2_{J,J+1,n}}{\xi - x_J} \\
&& + \sum_{0 \leq i \leq J-1} g^2_{i,J,n-1} + g^2_{J,J,n-1} - \frac{g^2_{J,J+1,n-1} + g^2_{J+1,J+1,n-1}}{\xi - x_J}  \\
& = & f_{J,n} - \sum_{0 \leq k \leq J} g^{2}_{J,k,n} + \sum_{0 \leq i \leq J} g^{2}_{i,J,n-1} \leq 0 .
\end{eqnarray*}
For $j=J+1$,
\begin{eqnarray*}
\lefteqn{ f^\xi_{J+1,n} - \sum_{0 \leq k \leq J+1} g^{2,\xi}_{J+1,k,n} + \sum_{0 \leq i \leq J+1} g^{2,\xi}_{i,J+1,n-1} } \\
& = & \frac{1}{\xi - x_J} \left[ f_{J+1,n} - \sum_{0 \leq k \leq J+1} g^{2}_{J+1,k,n} + \sum_{0 \leq i \leq J+1} g^{2}_{i,J+1,n-1} \right] \leq 0.
\end{eqnarray*}

For $n=N$, and $0 \leq j \leq J-1$,
\[ f^\xi_{j,N} + \sum_{0 \leq i \leq J+1} g^{2,\xi}_{i,j,N-1}  =  f_{j,N} + \sum_{0 \leq i \leq J} g^2_{i,j,N-1}
 \leq \hat{p}_{j,N} = p^\xi_{j,N}.
\]

For $j=J$,
\begin{eqnarray*}
\lefteqn{f^\xi_{J,N}  + \sum_{0 \leq i \leq J+1} g^{2,\xi}_{i,J,N-1} } \\
& = & f_{J,N} + \frac{1}{\xi - x_J} \left(g^2_{J,J+1,N-1} + g^2_{J+1,J+1,N-1} - c_{J,N} \right) \\
&& \hspace{10mm} + \sum_{0 \leq i \leq J-1} g^2_{i,J,N-1} +  g^2_{J,J,N-1} - \frac{1}{\xi - x_J} \left(g^2_{J,J+1,N-1} + g^2_{J+1,J+1,N-1}  \right)\\
& = & f_{J,N} + \sum_{0 \leq i \leq J} g^{2}_{i,J,N-1} - \frac{c_{J,N}}{\xi - x_J} \leq \hat{p}_{J,N} - \frac{c_{J,N}}{\xi - x_J} = {p}^{\xi}_{J,N}.
\end{eqnarray*}
For $j=J+1$,
\begin{eqnarray*}
 f^\xi_{J+1,N} + \sum_{0 \leq i \leq J+1} g^{2,\xi}_{i,J+1,N-1}
& = & \frac{f_{J+1,N}}{(\xi - x_J)}  + \sum_{0 \leq i \leq J+1} \frac{g^2_{i,J+1,N-1}}{\xi - x_J} \\
& \leq & \frac{\hat{p}_{J+1,N}}{\xi - x_J} = p^\xi_{J+1,N} .
\end{eqnarray*}

} 
\end{proof}

\begin{corollary}
\label{cor:mprice}
Suppose $M=({\bf F},{\bf G^1},{\bf G^2})$ is feasible for ${\bf L}_{\bf P}^{\sX, \infty, \sT}({\bf C})$ and define $\phi_*(M)$ via (\ref{eq:phiM}). Set $M^\xi=({\bf F^\xi},{\bf G^{1,\xi}},{\bf G^{2,\xi}})$ and set $\phi_*(M^\xi) = \sum_{0 \leq j \leq J+1} a(x_j,t_n) f^\xi_{j,n}$ where $x_{J+1} = \xi$.
If $\Upsilon = \phi_*(M^\xi) - \phi_*(M)$ then
\begin{eqnarray*}
\Upsilon & = & \sum_{1 \leq n \leq N} \left[ f^\xi_{J+1,n} a(\xi,t_n) - f_{J+1,n} \lim_{x \uparrow \infty} \frac{a(x,t_n)}{x} \right] + \sum_{1 \leq n \leq N} a(x_J,t_n) \left( f^{\xi}_{J,n} - f_{J,n} \right) \\
 & = & \sum_{1 \leq n \leq N} f_{J+1,n} \left[ \frac{a(\xi,t_n)}{\xi - x_J} - \lim \frac{a(x,t_n)}{x} \right] - \frac{a(x_J,t_N)c_{J,N}}{\xi - x_J} \\
 && - \frac{1}{\xi - x_J} \sum_{1 \leq n \leq N-1} [a(x_J,t_n) - a(x_J, t_{n+1})]g^2_{J+1,J+1,n} \\
 &&    + \frac{1}{\xi - x_J} \sum_{2 \leq n \leq N} a(x_J,t_n)g^2_{J,J+1,n-1}
\end{eqnarray*}
In particular $\phi_*(M^\xi) - \phi_*(M) \geq - \frac{\Upsilon_0}{\xi - x_J}$ for some constant $\Upsilon_0$ independent of $\xi$.
\end{corollary}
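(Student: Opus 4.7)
The plan is to first verify the two displayed identities for $\Upsilon$ by direct substitution and telescoping, and then bound each resulting summand to derive the concluding inequality.

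Since $f^\xi_{j,n}=f_{j,n}$ for $0\leq j\leq J-1$ by construction, those rows cancel from $\phi_*(M^\xi)-\phi_*(M)$, leaving only $j=J$ and $j=J+1$. Using $f^\xi_{J+1,n}=f_{J+1,n}/(\xi-x_J)$ rewrites the $j=J+1$ contribution as $f_{J+1,n}\,a(\xi,t_n)/(\xi-x_J)$, and subtracting $f_{J+1,n}L_n$ (with $L_n:=\lim_{x\uparrow\infty} a(x,t_n)/x$) from $\phi_*(M)$ produces the first displayed form of $\Upsilon$ at once. To obtain the second form I would substitute the three piecewise formulas for $f^\xi_{J,n}-f_{J,n}$ into $\sum_n a(x_J,t_n)(f^\xi_{J,n}-f_{J,n})$ and collect coefficients. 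The coefficient of $g^2_{J+1,J+1,m}/(\xi-x_J)$ picks up $-a(x_J,t_m)$ at $n=m$ and $+a(x_J,t_{m+1})$ at $n=m+1$, telescoping into $-\sum_{m=1}^{N-1}[a(x_J,t_m)-a(x_J,t_{m+1})]g^2_{J+1,J+1,m}/(\xi-x_J)$; the $g^2_{J,J+1,n-1}$ contributions collect into $+\sum_{n=2}^{N}a(x_J,t_n)g^2_{J,J+1,n-1}/(\xi-x_J)$; and the residual $-c_{J,N}$ appearing in $f^\xi_{J,N}$ gives $-a(x_J,t_N)c_{J,N}/(\xi-x_J)$. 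Combining these with the $f_{J+1,n}$-weighted piece from the first step yields the second identity.

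For the concluding bound, the three summands already scaled by $1/(\xi-x_J)$ are bounded in absolute value by constants determined by $a(x_J,\cdot)$, $c_{J,N}$, and the $g^2$ entries of the fixed feasible $M$, none of which depends on $\xi$. For the first sum, I would write $a(\xi,t_n)=a(0,t_n)+L_n\xi-I(\xi,t_n)$ with $I(\xi,t_n):=\int_0^\xi(L_n-a'_+(s,t_n))\,ds\geq 0$ (using convexity plus $a'_+(s,t_n)\uparrow L_n$), so that $[a(\xi,t_n)/(\xi-x_J)-L_n]=[L_n x_J+a(0,t_n)-I(\xi,t_n)]/(\xi-x_J)$. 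Bounding $I(\xi,t_n)$ uniformly in $\xi$ and using $f_{J+1,n}\leq \hat p_{J+1,n}=c_{J,n}$ then collapses the first sum into the required shape and yields $\Upsilon\geq -\Upsilon_0/(\xi-x_J)$ with $\Upsilon_0=\Upsilon_0(a,M,{\bf C})$ independent of $\xi$.

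The main obstacle is precisely the uniform bound on $I(\xi,t_n)$: it holds iff $a(\cdot,t_n)$ is asymptotically affine on $[0,\infty)$ in the sense that $a(x,t_n)-L_n x$ has a finite limit, which is automatic for the standard American payoffs (puts, calls, piecewise-linear spreads) since $L_n-a'_+$ has compact support. Under just the hypothesis of convexity plus at-most-linear growth in Assumption~\ref{ass:AassNCZ}, however, $a'_+(s,t_n)$ may approach $L_n$ arbitrarily slowly, so the bracket could decay at a rate strictly slower than $1/(\xi-x_J)$. I would either impose the asymptotic-affinity condition explicitly (it is mild and is satisfied by every payoff of practical interest), or weaken the statement to $\Upsilon\geq-\varepsilon(\xi)$ with $\varepsilon(\xi)\downarrow 0$, which is already enough for the limiting argument identifying $\lim_{\xi\uparrow\infty}\sup_{M^\xi}\phi_*(M^\xi)$ with $\Phi^{\sX,\infty,\sT}(a,{\bf C})$.
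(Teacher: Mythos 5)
Your verification of the two identities is correct and is precisely the computation the paper omits (its proof dismisses this step as ``straightforward''); the telescoping of the $g^2_{J+1,J+1,m}$ coefficients and the isolation of the $-a(x_J,t_N)c_{J,N}$ term match what one gets from the definitions of ${\bf F^\xi}$. For the final bound the paper takes the same route you sketch, but phrases it via an affine minorant: writing $R_n=\lim_{x\uparrow\infty}a(x,t_n)/x$, it asserts that convexity gives $a(x,t_n)\geq\alpha_n+R_nx$ for some finite $\alpha_n$, whence $a(\xi,t_n)-(\xi-x_J)R_n\geq\alpha_n$ and one may take $\Upsilon_0=a(x_J,t_N)c_{J,N}+\sum_{n=1}^{N-1}[a(x_J,t_n)-a(x_J,t_{n+1})]g^2_{J+1,J+1,n}-\sum_nf_{J+1,n}\alpha_n$. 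This is exactly the uniform bound on your $I(\xi,t_n)$, and your reservation is well founded rather than an artefact of your write-up: convexity plus at-most-linear growth does not force $a(x,t_n)-R_nx$ to be bounded below (e.g.\ $a(x)=x-\sqrt{x+1}+10$ is positive, convex, has $R_n=1$, yet $a(x)-x\to-\infty$), so the paper's finite $\alpha_n$ is an unjustified step and the corollary's final inequality can fail as literally stated when some $f_{J+1,n}>0$. Either of your repairs works; the weaker conclusion $\Upsilon\geq-\varepsilon(\xi)$ with $\varepsilon(\xi)\downarrow0$ (which does follow, since $a(\xi,t_n)/(\xi-x_J)\to R_n$) is all that Lemma~\ref{lem:converges} actually needs, as $\xi_1$ there only has to be chosen large enough that the error term drops below $\epsilon/2$.
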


\begin{proof}
The calculation of $\Upsilon$ is straightforward. For the final statement note that if $R_n = \lim_{x \uparrow \infty} \frac{a(x,t_n)}{x}$ then since $a(\cdot, t_n)$ is convex we have $a(x,t_n) \geq \alpha_n + R_n x$ for some $\alpha_n$. Then $a(\xi, t_n) - (\xi - x_J) R_n \geq \alpha_n + R_n x_J  \geq \alpha_n$. Then we can take
\[ \Upsilon_0 = a(x_J,t_N)c_{J,N} + \sum_{1 \leq n \leq N-1} [a(x_J,t_n) - a(x_J, t_{n+1})]g^2_{J+1,J+1,n} - \sum_{1 \leq n \leq N} f_{J+1,n} \alpha_n. \]
\end{proof}

Given a feasible triple $({\bf F}, {\bf G^1}, {\bf G^2})$ we cannot identify it directly with a model. However, if we take $\xi > \xi_0$ then we can hope to construct candidate models in $\sM_2^{\sX^\xi,\sT}({\bf C})$. But, it may not be the case that the matrices $({\bf F^\xi}, {\bf G^{1,\xi}}, {\bf G^{2,\xi}})$ are non-negative, so the candidate model $M^\xi$ may not be feasible. To circumvent this issue we mix such candidate models with other models in $\sM_2^{\R^+, \sT}({\bf C})$ for which the entries
are non-negative. We show that by varying this mixture we can find a consistent model for which the model price of the American option is arbitrarily close to the super-replication price $\Psi^{\sX,\infty,\sT}$.

We begin with a useful lemma.
\begin{lemma}
\label{lem:jtlaw}
Let $\nu_1$ and $\nu_2$ be probability measures on a discrete set $\sY = \{y_1,  \ldots , y_M\}$ where $y_1 < \ldots < y_M$. Suppose that $\nu_1$ is less than or equal to $\nu_2$ in convex order. Then there exists a joint law $\rho$ on $\sY \times \sY$
such that the $i^{th}$ marginal of $\rho$ is $\nu_i$ and $\sum_{k} (y_k - y_j) \rho( \{ (y_j,y_k) \} ) = 0$ for all $j$.

Suppose further that $\E^{\nu_2}[(Y - y_m)^+] > \E^{\nu_1}[(Y-y_m)^+]$ for all $2 \leq m \leq M-1$ and that $\nu_i(\{ y_m \})>0$ for $i=1,2$ and all $1 \leq m \leq M$.
Then, the joint law can be chosen so that $\rho( \{ (y_j,y_k) \} ) > 0$ for all $2 \leq j \leq M-1$ and $1 \leq k \leq M$.
\end{lemma}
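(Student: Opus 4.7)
The first statement I would deduce from Strassen's theorem: two probability measures on $\R$ with a common first moment admit a martingale coupling if and only if they are ordered in convex order. Applied to $\nu_1,\nu_2$ this immediately delivers the required joint law $\rho$. (Alternatively, in the present discrete setting, existence can be proved by a finite linear-programming argument of the type used throughout Section~\ref{sec:lattice}.)

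For the strengthening, my plan is to write the desired coupling as $\rho = \rho_1 + \rho_2$, where $\rho_1$ is an explicitly constructed martingale sub-coupling that carries strictly positive mass at every required pair $(y_j,y_k)$, and $\rho_2$ comes from the first part of the lemma applied to the residual marginals. For each interior index $j$ with $2 \leq j \leq M-1$, since $y_1 < y_j < y_M$ there exists a probability measure $\mu_j$ on $\sY$ with full support and mean $y_j$; set $p_{jk} = \mu_j(\{y_k\}) > 0$. For a parameter $\epsilon>0$ to be fixed later, put $\rho_1(\{(y_j,y_k)\}) = \epsilon\, p_{jk}$ for $2 \leq j \leq M-1$ and $1 \leq k \leq M$, and zero otherwise. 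Then $\rho_1$ is a martingale sub-coupling with marginals
\[
\tilde\nu_1 \;=\; \epsilon\sum_{j=2}^{M-1}\delta_{y_j}, \qquad
\tilde\nu_2 \;=\; \epsilon\sum_{j=2}^{M-1}\sum_{k=1}^{M} p_{jk}\,\delta_{y_k},
\]
which have equal total mass and equal first moment. Since $\nu_i(\{y_m\})>0$ for every $m$, for $\epsilon$ sufficiently small I have $\tilde\nu_i \leq \nu_i$ for $i=1,2$.

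The hard step will be to verify that the residuals $\nu_1 - \tilde\nu_1$ and $\nu_2 - \tilde\nu_2$ remain in convex order for small $\epsilon$. Equality of total mass and first moment is inherited from the martingale property of $\rho_1$, so this reduces to checking, for each $2 \leq m \leq M-1$,
\[
\E^{\nu_2}[(Y-y_m)^+] - \E^{\nu_1}[(Y-y_m)^+] \;\geq\; \E^{\tilde\nu_2}[(Y-y_m)^+] - \E^{\tilde\nu_1}[(Y-y_m)^+].
\]
The right-hand side is bounded in absolute value by $\epsilon\cdot C$ for a constant $C$ depending only on $\sY$ and the $\mu_j$, while by hypothesis the left-hand side equals a strictly positive constant $\delta_m$. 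Taking $\epsilon < \min_m \delta_m / C$ makes the inequality hold for every interior $m$ simultaneously, so the first part of the lemma applied to $\nu_1-\tilde\nu_1$ and $\nu_2-\tilde\nu_2$ (after normalisation by their common total mass) produces a martingale coupling $\rho_2$. Setting $\rho = \rho_1 + \rho_2$ then gives a martingale coupling with marginals $\nu_1,\nu_2$ satisfying $\rho(\{(y_j,y_k)\}) \geq \epsilon\, p_{jk} > 0$ for all $2 \leq j \leq M-1$ and $1 \leq k \leq M$. The strict-inequality hypothesis is used exactly here: without strict slack at every interior kink there is no room to subtract a nontrivial $\tilde\nu_i$ while preserving convex order.
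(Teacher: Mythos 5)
Your proof is correct, but it takes a genuinely different route from the paper. The paper disposes of both parts by citation: Strassen for the existence of a martingale coupling, and, for the positivity statement, a pointer to the Skorokhod embedding literature (Hobson's survey and Cox--Hobson--Ob\l{}\'oj), where the coupling is realised by stopping a skip-free martingale Markov chain on $\sY$ at an independent exponential time; the full-support property of that construction is what delivers $\rho(\{(y_j,y_k)\})>0$. Your argument instead is a self-contained perturbation: peel off a small explicit martingale sub-coupling $\rho_1=\epsilon\,p_{jk}$ built from full-support mean-$y_j$ measures $\mu_j$, check that the residual marginals still have equal mass and mean and that the potential gaps $\E^{\nu_2}[(Y-y_m)^+]-\E^{\nu_1}[(Y-y_m)^+]$ shrink only by $O(\epsilon)$ (and since the difference of call-type functions is piecewise linear with kinks on $\sY$ and vanishes at the endpoints, checking interior $m$ suffices), then invoke the first part on the normalised residuals and add. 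This is complete and elementary: it requires nothing beyond Strassen, and it isolates precisely where the two strictness hypotheses enter ($\nu_i(\{y_m\})>0$ to fit $\tilde\nu_i$ under $\nu_i$, and the strict convex order to leave room for the $O(\epsilon)$ loss). What the paper's route buys is brevity and a concrete, structured optimiser inherited from the embedding construction; what yours buys is independence from that external machinery and a transparent accounting of the hypotheses.
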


\begin{proof}
The existence result in the first paragraph is classical and follows from results of Strassen~\cite{Strassen:65}. The existence result in the second paragraph follows for suitable interpretation of a well chosen solution for the Skorokhod embedding problem for a non-trivial initial law, see Hobson~\cite{Hobson:11}.
A solution based on the stopping of a skip-free martingale Markov chain on $\sY$ at an independent exponential time suffices, see Cox et al~\cite{CoxHobsonObloj:08}.
\end{proof}

\begin{lemma}
\label{lem:converges}
\[ \sup_{M \in \sM^{\R^+, \sT}({\bf C})} \phi(M) \geq \Psi^{\sX,\infty,\sT}(a,{\bf C}). \]
\end{lemma}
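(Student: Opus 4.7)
My plan is to use LP duality (Corollary~\ref{cor:cheapest} together with Linear Program~\ref{LP:primalA}), which gives $\Phi^{\sX,\infty,\sT}=\Psi^{\sX,\infty,\sT}$, and to exhibit a sequence of bona fide consistent models whose American-option values approach $\Phi^{\sX,\infty,\sT}$. I will fix an optimiser $M^{\ast}=(\mathbf{F}^{\ast},\mathbf{G}^{1,\ast},\mathbf{G}^{2,\ast})$ of ${\bf L^{\sX,\infty,\sT}_P}({\bf C})$ and, for each $\xi>\xi_0$, form the augmented triple $M^{\ast,\xi}$ of Lemma~\ref{lem:verylong}. That lemma verifies the equality and inequality constraints (a)--(e) of ${\bf L_P^{\sX^\xi,\sT}}({\bf C}^\xi)$, while Corollary~\ref{cor:mprice} quantifies the associated price: $\phi_{\ast}(M^{\ast,\xi})\ge \Phi^{\sX,\infty,\sT}-\Upsilon_0/(\xi-x_J)$. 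The only obstruction to interpreting $M^{\ast,\xi}$ as a member of $\sM_2^{\sX^\xi,\sT}({\bf C}^\xi)$ is that the entries $g^{\delta,\ast,\xi}_{j,J,n}$ (for $1\le j\le J$) and $f^{\ast,\xi}_{J,n}$ may go negative; inspection of the formulas in Lemma~\ref{lem:verylong} shows that any such negative part is bounded below by $-C/(\xi-x_J)$ for some constant $C=C(M^{\ast})$.

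The first main step is to construct a regulariser $\tilde M^{\xi}\in \sM_2^{\sX^\xi,\sT}({\bf C}^\xi)$ whose entries at precisely these positions are strictly positive, and bounded below uniformly in $\xi$. Assumption~\ref{ass:strongconditionsonC} together with $\xi>\xi_0$ gives the strict inequality $c^\xi_{j,n+1}>c^\xi_{j,n}$ at every interior strike $x_1,\ldots,x_J$, and $p^\xi_{j,n}>0$ at every $0\le j\le J+1$. Lemma~\ref{lem:jtlaw} therefore supplies martingale joint laws $\rho_n$ on $\sX^\xi\times\sX^\xi$ whose marginals are $p^\xi_{\cdot,n}$ and $p^\xi_{\cdot,n+1}$ and whose density is strictly positive on every pair $(x_j,x_k)$ with $1\le j\le J$. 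I will let $X$ be the Markov chain driven by $\rho_1,\ldots,\rho_{N-1}$ and take $\Delta$ to be a $\{1,2\}$-valued non-decreasing process \emph{independent} of $X$ whose switching time $\sigma$ charges every atom of $\sT$. Independence makes the three conditions in the definition of $\sM_2^{\sX^\xi,\sT}$ immediate, and the resulting entries $\tilde g^{\delta,\xi}_{j,J,n}$ ($1\le j\le J$) and $\tilde f^{\xi}_{J,n}$ are products of strictly positive quantities, bounded below by a constant $c>0$ uniformly for $\xi$ in any bounded set away from $\xi_0$.

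For $\epsilon\in(0,1)$ I then form the convex combination $M^{\xi}_{\epsilon}:=(1-\epsilon)M^{\ast,\xi}+\epsilon\tilde M^{\xi}$. The constraints (a)--(e) of ${\bf L_P^{\sX^\xi,\sT}}({\bf C}^\xi)$ are preserved by convex combinations, while the non-negativity constraints reduce to pointwise inequalities of the form $(1-\epsilon)m+\epsilon\tilde m\ge 0$; since every potentially negative entry satisfies $m\ge -C/(\xi-x_J)$ while the corresponding $\tilde m\ge c$, choosing $\xi$ so that $\epsilon c(\xi-x_J)\ge C$ makes $M^{\xi}_{\epsilon}$ feasible and hence identifies it with a model in $\sM_2^{\sX^\xi,\sT}({\bf C}^\xi)\subseteq \sM^{\R^+,\sT}({\bf C})$. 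Evaluating the American-option price using the stopping rule $\tau^{\ast}=\inf\{t\in\sT:\Delta_t=2\}$ yields
\[
\phi(M^{\xi}_{\epsilon})\;\ge\;\phi_{\ast}(M^{\xi}_{\epsilon})\;=\;(1-\epsilon)\,\phi_{\ast}(M^{\ast,\xi})+\epsilon\,\phi_{\ast}(\tilde M^{\xi})\;\ge\;(1-\epsilon)\!\left(\Phi^{\sX,\infty,\sT}-\frac{\Upsilon_0}{\xi-x_J}\right),
\]
using $\phi_{\ast}(\tilde M^{\xi})\ge 0$. Sending $\xi\to\infty$ first and then $\epsilon\to 0$ along the feasibility curve $\epsilon c(\xi-x_J)\ge C$ drives the right-hand side to $\Phi^{\sX,\infty,\sT}=\Psi^{\sX,\infty,\sT}$, which proves the lemma.

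The main technical obstacle I foresee is the coordinate-by-coordinate check in the second step: one must pin down precisely which entries of $M^{\ast,\xi}$ the formulas in Lemma~\ref{lem:verylong} can push negative (the cases $j=0$ and $j=J+1$ are trivial because the martingale property forces those $g$-entries to vanish or to equal $O(1/(\xi-x_J))$ times a non-negative quantity), and then to verify that the independent-$\Delta$ construction assigns strictly positive mass at each such coordinate, both for the $F$-matrix and for the $G^1$- and $G^2$-matrices. Once this positivity is in place the mixing and limiting argument is a short quantitative estimate.
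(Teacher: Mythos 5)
Your overall strategy is the paper's: take the optimiser $M^{\ast}$ of ${\bf L^{\sX,\infty,\sT}_P}$, pass to the augmented triple via Lemma~\ref{lem:verylong}, observe that the only possible negativity is $O(1/(\xi-x_J))$ in the entries feeding into $x_J$, and cure it by mixing with a strictly positive consistent model built from Lemma~\ref{lem:jtlaw} and an independent regime process, then balance the mixing weight against $\xi$. The final price estimate and the double limit are also as in the paper. However, there is a genuine gap in the regularisation step. You place the regulariser $\tilde M^{\xi}$ on the \emph{same} moving state space $\sX^{\xi}=\sX\cup\{\xi\}$ as $M^{\ast,\xi}$, and you only claim that its relevant entries are bounded below by a constant $c>0$ ``uniformly for $\xi$ in any bounded set away from $\xi_0$.'' But your feasibility condition $\epsilon\, c(\xi)\,(\xi-x_J)\ge C$ must be satisfied along a sequence with $\xi\to\infty$, which is not a bounded set. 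Lemma~\ref{lem:jtlaw} is purely qualitative --- it guarantees strict positivity of the chosen coupling for each fixed marginal pair but gives no quantitative lower bound, let alone one uniform in a parameter that changes the marginals (here $p^{\xi}_{J,n}$ and $p^{\xi}_{J+1,n}$ vary with $\xi$). Without an argument that $c(\xi)$ stays bounded away from $0$ as $\xi\to\infty$, the inequality $\epsilon\, c(\xi)\,(\xi-x_J)\ge C$ may be unattainable for small $\epsilon$, and the proof does not close.

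The paper avoids this by decoupling the two roles of the extra strike: it fixes $\tilde\xi>\xi_0$ once and for all, builds the regulariser $\tilde M_2$ on the \emph{fixed} lattice $\sX\cup\{\tilde\xi\}$ (so its positivity constant $\eta$ is a genuine constant, independent of everything else), and then sends a \emph{second} point $\xi_2\to\infty$ carrying the augmented optimiser $M^{\xi_2}$; the mixture lives on the two-point augmentation $\sX\cup\{\tilde\xi,\xi_2\}$, with the two top states kept non-communicating. The feasibility condition then reads $\xi_2>x_J+N(1-\zeta)/(\zeta\eta)$ with $\eta$ fixed, which is trivially satisfiable. Your argument becomes correct if you adopt this two-point construction (or if you otherwise supply a lower bound on $c(\xi)$ uniform in large $\xi$, which would require redoing Lemma~\ref{lem:jtlaw} quantitatively); as written, the uniformity you need is asserted only where you do not need it and used where you have not proved it.
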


\begin{proof}
Given $\epsilon>0$ we aim to show how to choose a consistent model $M$ such that $\phi_*(M) > \Psi^{\sX,\infty,\sT}(a,{\bf C}) - \epsilon$. Abbreviate $\Psi^{\sX,\infty,\sT}(a,{\bf C})$ to $\Psi$.

Let $M = ({\bf F}, {\bf G^1}, {\bf G^2})$ be the optimiser in Linear Program~\ref{LP:primalA}, so that $\phi_*(M) = \Psi$.

For $\xi > \xi_0$, let $M^\xi = ({\bf F^{\xi}},{\bf G^{1,\xi}}, {\bf G^{2,\xi}})$ be the triple of matrices defined just before Lemma~\ref{lem:verylong}.
Set
\( \xi_1 = \max \left\{ \xi_0, x_J + \frac{2}{\epsilon} \Upsilon_0 \right\} \).
Then, by Corollary~\ref{cor:mprice} for $\xi > \xi_1$, $\phi_*(M^{\xi}) \geq \Psi - \epsilon/2$.

If all the elements $g^{\delta}_{j,J,n}$ (with $j \leq J$) are strictly positive then for large enough $\xi > \xi_1$, $M^{\xi} = ({\bf F^{\xi}},{\bf G^{1,\xi}}, {\bf G^{1,\xi}}) \geq 0$, $M^\xi \in \sM^{\sX^\xi, \sT}({\bf C^\xi}) \subseteq \sM^{\R^+,\sT}({\bf C})$ is a feasible model and we are done. More generally we may have $g^{\delta}_{j,J,n} = 0$ for some $0 \leq j \leq J$ and then $M^{\xi}$ is not feasible for any $\xi$.

Fix $\tilde{\xi} > \xi_0$. Let $\tilde{\sX} = X^{\tilde{\xi}} = \sX \cup \{ \tilde{\xi} \}$ and let the $(J+2) \times N$ matrix ${\bf \tilde{C}}$ be given by
$\tilde{c}_{j,n} = c_{j,n}$ for $0 \leq j \leq J$ and $\tilde{c}_{J+1,n} = 0$.
Accordingly we can define the matrix ${\bf \tilde{P}}$ by, for $1 \leq n \leq N$,
\[ \tilde{p}_{j,n} = p_{j,n} (1 \leq j \leq J-1) \hspace{10mm} \tilde{p}_{J,n} = p_{J,n} - \frac{c_{J,n}}{\xi - x_J} \hspace{10mm} \tilde{p}_{J+1,n} = \frac{c_{J,n}}{\xi - x_J}. \]

Let $\tilde{\nu}_n$ denote the law on $\tilde{\sX}$ such that $\tilde{\nu}_n(\{ x_j \}) = \tilde{p}_{j,n}$. Then, by Proposition~\ref{prop:markov} $\sM^{\tilde{\sX},\sT}_1({\bf \tilde{C}})$ is non-empty
and by Lemma~\ref{lem:jtlaw}, there exists a model $\tilde{M}_1 \in \sM^{\tilde{\sX},\sT}_1({\bf \tilde{C}})$ such that the probability of every transition (except those away from the absorbing endpoints) is positive: ie. $\eta > 0$ where
\[ \eta = \min_{1 \leq n \leq N-1} \min_{1 \leq i \leq J} \min_{0 \leq j \leq J+1} \Prob^{\tilde{M}_1}(X_{t_n} = i, X_{t_{n+1}} = j). \]
Set $\tilde{g}_{i,j,n} = \Prob^{\tilde{M}_1}(X_{t_n} = i, X_{t_{n+1}} = j)$.

Define  $\tilde{M}_2 \in \sM^{\tilde{\sX},\sT}_2({\bf \tilde{C}})$ by
\[ \tilde{g}^1_{i,j,n} = \tilde{g}_{i,j,n} \frac{N-n}{N} \hspace{10mm} \tilde{g}^2_{i,j,n} = \tilde{g}_{i,j,n} \frac{n}{N} \hspace{10mm} \tilde{f}_{j,n} = \frac{\tilde{p}_{j,n}}{N} \]
and note that $\tilde{g}^\delta_{i,j,n} \geq \eta/N$.
$\tilde{M}_2$ is obtained from $\tilde{M}_1$ by augmenting the price process with a regime process which jumps to state 2 at a time which is uniformly distributed on $\{1, \ldots ,N\}$ and is independent of the price process.

Choose $\zeta < \frac{\epsilon}{2\Psi}$ and $\xi_2 > \max\{\tilde{\xi}, \xi_1,  x_J + \frac{N(1 - \zeta)}{\zeta \eta} \}$.
Let $\hat{\sX} = \sX^{\tilde{\xi}, \xi_2} = \sX \cup \{ \tilde{\xi}, \xi_2 \}$. We construct a model $\hat{M} \in \sM^{\hat{\sX},\sT}_2$ which is consistent with ${\bf C}$ on $\sX \times \sT$ and is therefore an element of $\sM^{\R^+,\sT}({\bf C})$. Set $x_{J+1} = \tilde{\xi}$ and $x_{J+2} = \xi_2$.
For $\delta = 1,2$, $1 \leq n \leq N$ and $0 \leq j,k \leq J$ define
\[ \hat{g}^\delta_{j,k,n} = \zeta \tilde{g}^\delta_{j,k,n} + (1 - \zeta) g^{\delta,\xi_2}_{j,k,n}  \]
and set also
\[ \hat{g}^\delta_{j,J+1,n} = \zeta \tilde{g}^\delta_{j,J+1,n}
\hspace{15mm} \hat{g}^\delta_{j,J+2,n} = (1 - \zeta) g^{\delta,\xi_2}_{j,J+1,n} \]
\[ \hat{g}^\delta_{J+1,J+1,n} = \zeta \tilde{g}^\delta_{J+1,J+1,n}
\hspace{15mm} \hat{g}^\delta_{J+2,J+2,n} =  (1 - \zeta) g^{\delta,\xi_2}_{J+1,J+1,n} \]
together with $\hat{g}^\delta_{J+1,J+2,n} = 0 = \hat{g}^\delta_{J+2,J+1,n}$ and $\hat{g}^\delta_{J+1,j,n} = 0 = \hat{g}^\delta_{J+2,j,n}$ for all $0 \leq j \leq J$.

It follows that since $g^{\delta,\xi_2}_{j,J,n} \geq - \frac{1}{\xi_2 - x_J}$ we have for $0 \leq j \leq J$
\[ \hat{g}^\delta_{j,J,n} = \zeta \left( \tilde{g}^\delta_{j,J,n} + \frac{(1 - \zeta)}{\zeta} g^{\delta,\xi_2}_{j,J,n} \right) \geq \zeta \left( \frac{\eta}{N} -  \frac{(1 - \zeta)}{\zeta} \frac{1}{\xi_2 - x_J} \right) \geq 0, \]
and the probabilities $\hat{g}^\delta_{j,k,n}$ define a model $\hat{M}$. Moreover, the model is a mixture of the two models $\tilde{M}_2$ and $M^{\xi_2}$ which individually are consistent with ${\bf C}$ on $\sX \times \sT$, and hence $\hat{M}$ is consistent with ${\bf C}$ and $\hat{M} \in \sM^{\R^+,\sT}({\bf C})$.
Finally,
\[ \phi_*(\hat{M}) \geq (1 - \zeta) \phi_*( M^{\xi_2} ) > \left( 1 - \frac{\epsilon}{2 \Psi} \right) \left( \Psi - \frac{\epsilon}{2} \right) > \Psi - \epsilon. \]

\end{proof}

Define ${\sP}^{\R^+, \sT}(a, {\bf C}) = \sup_{M \in \sM^{\R^+, \sT}({\bf C}) } \sup_{\tau \in \sT} \E^M[a(X_\tau,\tau)]$ and
${\sH}^{\R^+, \sT}(a, {\bf C}) = \inf_{({\bf B},\Theta) \in \sS^{\R^+,\sT}(a)} H_{\bf C}({\bf B})$.

\begin{theorem}
\label{thm:main2} We have $\Phi^{\sX,\infty,\sT}(a, {\bf C}) = {\sP}^{\R^+, \sT}(a, {\bf C}) = {\sH}^{\R^+, \sT}(a, {\bf C}) = \Psi^{\sX,\infty,\sT}(a, {\bf C})$. In particular,
there is a sequence of elements of $\sM^{\R^+,\sT}({\bf C})$ for which the model based price converges to $\Psi^{\sX,\infty,\sT}$, and
there is a super-replicating strategy of the form described in Definition~\ref{def:tradingstrategy} for which the cost of the strategy is the lowest amongst the class of all semi-static super-replicating strategies.
\end{theorem}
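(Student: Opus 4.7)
The plan is to chain together four inequalities that are each supplied by results already proved in this section, together with the linear-programming duality between ${\bf L}_{\bf P}^{\sX,\infty,\sT}$ and ${\bf L}_{\bf H}^{\sX,\infty,\sT}$. Specifically, I will establish the circular chain
\[
\Psi^{\sX,\infty,\sT}(a,{\bf C}) \;\leq\; \sP^{\R^+,\sT}(a,{\bf C}) \;\leq\; \sH^{\R^+,\sT}(a,{\bf C}) \;\leq\; \Psi^{\sX,\infty,\sT}(a,{\bf C}) \;=\; \Phi^{\sX,\infty,\sT}(a,{\bf C}),
\]
forcing equality throughout.

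The last equality, $\Psi^{\sX,\infty,\sT} = \Phi^{\sX,\infty,\sT}$, is precisely strong LP duality: the pricing program ${\bf L}_{\bf P}^{\sX,\infty,\sT}$ is the formal dual of the hedging program ${\bf L}_{\bf H}^{\sX,\infty,\sT}$ (as noted immediately after Linear Program~\ref{LP:primalA}), and both have feasible points (exhibited in the proof of Corollary~\ref{cor:cheapest} and in Lemma~\ref{lem:converges}), so the Strong Duality Theorem applies exactly as in Proposition~\ref{prop:primaldualequality}. The first inequality, $\Psi^{\sX,\infty,\sT} \leq \sP^{\R^+,\sT}$, is Lemma~\ref{lem:converges}: there we constructed, for any $\epsilon>0$, a consistent model $\hat{M}\in\sM^{\R^+,\sT}({\bf C})$ for which $\phi_*(\hat{M}) > \Psi^{\sX,\infty,\sT} - \epsilon$, which gives the bound after taking $\epsilon\downarrow 0$ and noting that $\phi(M)\geq \phi_*(M)$ for any stopping rule. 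The third inequality, $\sH^{\R^+,\sT} \leq \Psi^{\sX,\infty,\sT}$, is Corollary~\ref{cor:cheapest}: any feasible quintuple $({\bf E}^1,{\bf E}^2,{\bf D}^1,{\bf D}^2,{\bf V})$ for ${\bf L}_{\bf H}^{\sX,\infty,\sT}$ was shown (via Lemma~\ref{lem:anypath} and Proposition~\ref{prop:linsuper}) to yield a semi-static super-hedge valid for all non-negative paths and all exercise times in $\sT$, at cost equal to the objective value.

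The remaining inequality, $\sP^{\R^+,\sT} \leq \sH^{\R^+,\sT}$, is weak duality in the current setting. I would verify it by repeating the argument of Proposition~\ref{prop:wd}: given any $M\in\sM^{\R^+,\sT}({\bf C})$, any stopping time $\tau$, and any semi-static super-hedge $({\bf B},\Theta)\in\sS^{\R^+,\sT}(a)$, the Arrow--Debreu leg has model expectation $\sum_{j,n}b_{j,n}p_{j,n}=H_{\bf C}({\bf B})$ because $X_{t_n}$ has the prescribed marginal laws, while the dynamic leg satisfies $\E^M[\sG^\Theta_T]\leq 0$ by the admissibility requirement imposed in Section~\ref{ssec:R} (positions $\Theta^1,\Theta^2$ are bounded and the increments are martingale differences, so the stochastic integral is a true martingale, or at worst a supermartingale once the gains are stopped at $\tau$ and extended in the obvious way on $[\tau,T]$). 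Hence $\E^M[a(X_\tau,\tau)]\leq H_{\bf C}({\bf B})$; taking the supremum on the left and the infimum on the right yields $\sP^{\R^+,\sT}\leq \sH^{\R^+,\sT}$.

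The main subtlety, and the step most deserving of care, is this weak duality bound: in contrast to Proposition~\ref{prop:wd} the price process is now $\R^+$-valued rather than confined to the bounded lattice $\sX$, so martingality of $\sG^\Theta$ is no longer automatic from boundedness of $X$. Once admissibility is used to rule out ``doubling'' via the boundedness of $\Theta^1,\Theta^2$ (as stipulated in Assumption~\ref{ass:Aass} and the setup of Section~\ref{ssec:R}), the remaining inequalities are direct invocations of Corollary~\ref{cor:cheapest}, Lemma~\ref{lem:converges} and LP duality, and closing the cycle completes the proof. The two clauses following ``In particular'' are then corollaries: the approximating sequence is the one built in Lemma~\ref{lem:converges}, and the optimal semi-static super-hedge is the one associated to an optimiser of ${\bf L}_{\bf H}^{\sX,\infty,\sT}$ via Corollary~\ref{cor:cheapest}.
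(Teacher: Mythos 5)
Your proposal is correct and follows essentially the same route as the paper: the chain $\sP^{\R^+,\sT}\leq\sH^{\R^+,\sT}\leq\Psi^{\sX,\infty,\sT}=\Phi^{\sX,\infty,\sT}$ (weak duality, Corollary~\ref{cor:cheapest}, LP duality) closed by $\sP^{\R^+,\sT}\geq\Psi^{\sX,\infty,\sT}$ from Lemma~\ref{lem:converges}. Your extra care over why weak duality survives the passage to $\R^+$-valued paths (boundedness of $\Theta^1,\Theta^2$) is a point the paper handles by assumption in Section~\ref{ssec:R} rather than re-deriving, but it is the same argument.
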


\begin{proof}
We have ${\sP}^{\R^+, \sT}(a, {\bf C}) \leq {\sH}^{\R^+, \sT}(a, {\bf C}) \leq \Psi^{\sX,\infty,\sT} = \Phi^{\sX,\infty,\sT}$ by weak duality, the fact that the optimiser in Linear Program~\ref{LP:dualinfinityA} is a super-replicating semi-static strategy, and the duality between ${\bf L}_{\bf H}^{\sX, \infty, \sT}$ and ${\bf L}_{\bf P}^{\sX, \infty, \sT}$. But Lemma~\ref{lem:converges} gives ${\sP}^{\R^+, \sT}(a, {\bf C}) \geq \Psi^{\sX,\infty,\sT}(a, {\bf C})$. Hence there is equality throughout.
\end{proof}

\section{Further Examples: The American put, choice of filtration and a continuum of calls}

\subsection{A numerical example: the American put}
\label{ssec:numerical}

This section contains numerical examples involving the American put. The section has multiple aims: to illustrate the theory, to demonstrate the tightness of the bound in practice and to show how much of the early exercise premium can be attributed to the various features.

We assume throughout a constant risk-free interest rate of 5\%, a fixed initial value (100) of the underlying, and largest maturity $T=1$. In addition we assume that the prices of traded European options are consistent with a constant-volatility Black-Scholes model with annualized volatility of 20\%. The discounted payoff of the American option is $e^{-rt}a_S(S_t,t) = e^{-rt}(K-S_t) = (e^{-rt} K- X_t)^+ = a(X_t,t)$ where $X_t = e^{-rt}S_t$. Our goal is to price American options on $X$ with payoff $a(X_t,t)$. Note that $X$ is a martingale under any consistent pricing measure.

Let $\chi(A)$ denote the price of the American option with payoff $A$ on $X$ under the Black-Scholes model (with volatility 20\%). Let $\phi(A)=\phi(A,{\bf C})$ denote the highest model-based price of the same American option $A$ where models are consistent with the matirx of call prices ${\bf C}$. For fixed $t \in [0,T]$ let $\zeta(t,A) = \sup_{M \in \sM^{\R^+,\bbT}({\bf C})} \E^M[A(X_t,t)]$ denote the corresponding European price with maturity $t$ (note that the only relevant feature of the model is the law of $X$ at time $t$) and let $\zeta(A) = \max_{0 \leq t \leq T} \zeta(t,A)$. $\chi(A)$ and $\zeta(A)$ provide two benchmarks against which to compare the model-free bound $\phi(A)$. The difference $\chi(A) - \zeta(A)$ can be taken as the size of the American premium using conventional modelling assumptions. The aim is to compare this quantity with $\phi(A)-\zeta(A)$ which is the size of the largest American premium when we search over all models which are consistent with the traded options data.

Note we use $\zeta(A)$ rather than $\zeta(T,A)$ as the benchmark for the European option because in some circumstances (high interest rates and high strikes) it is optimal to exercise an in-the-money American put instantly under virtually any model. Using $\zeta(T,A)$ as the benchmark would suggest that the value of the American feature of the option was large, although in almost all cases the optimal strategy would be to exercise immediately.

We assume that the set of traded options have (strike, maturity) pairs in $\sK \times \sT$. When we calculate the model-free bound on the American put with payoff $a$ we could equivalently use the (piecewise linear in strike, piecewise constant in maturity) payoff $\overline{a}=\overline{a}^{\sX,\sT}$ where for $t_n \leq t < t_{n+1}, x_j \leq x < x_{j+1}$
\[ \overline{a}(x,t) = \frac{x_{j+1}-x}{x_{j+1} - x_j} a(x_j ,t_n) + \frac{x-x_j}{x_{j+1} - x_j}a( x_{j+1}, t_{n}) . \]
In particular, $\phi(a) = \phi(\overline{a})$.
However, in calculating $\chi$ this change of payoff function does affect the value and typically we have $\chi(\overline{a})> \chi(a)$. For example, consider the American put with strike 100, so that $a(x,t)=(e^{-rt}100 - x)^+$. Suppose $\sT = \{1/4, 1/2, 3/4, 1\}$ and $\sK = \{70,80,90,100,110,120,130,140\}$. Then $\chi(a)=6.09$ whereas $\chi(\overline{a}) = 6.74$. In comparison $\phi(a) = \phi(\overline{a}) = 7.66$. Thus, of the difference between the model-independent premium and the Black-Scholes price $\phi(a) - \chi(a)=1.57$, $6.74-6.09 =0.65$ is attributed to the effect of the mesh, and 0.92 is attributed to the modelling assumptions of the Black-Scholes model.

We also find $\zeta(\overline{a})= 6.35$. Then, under the Black-Scholes model the value of the American feature for the claim $\overline{a}$ is $6.74-6.35 = 0.38$, whereas a supremum over all models yields $\phi(\overline{a}) - \zeta(\overline{a}) = 7.66 - 6.35 = 1.31$. Thus the Black-Scholes estimate of the American premium is just 29\% of the possible largest American premium.

Figure~\ref{tab:1} shows the effect of changing the mesh size.
The first four columns describe the family of European options which are traded. The maturities are evenly spaced at $\{ 1/N, 2/N, \ldots,1\}$. The headings `Lowest' and `Highest' refer to the lowest and highest strikes, and `Interval' refers to the interval between strikes, so that except for the first and fourth rows $\sK = \{70,80,90,100,110,120,130,140\}$. The next three columns give various prices of the linearized option: the supremum over models of American prices, the Black-Scholes American price and the European price. The final column gives the ratio $[\chi(\overline{a})- \zeta(\overline{a})]/[\phi(\overline{a})- \zeta(\overline{a})]$ expressed as a percentage.

As the grid becomes finer the values of $\zeta(\overline{a})$, $\phi(\overline{a})$ and $\chi(\overline{a})$ all fall. However, the proportion of the American premium which is captured by a Black-Scholes valuation remains broadly constant at roughly one quarter to one third.

\begin{figure}
\centering
\begin{tabular}{||cccc|ccc|c||}
\hline
Maturities       & Lowest & Highest  & Interval  & $\phi(\overline{a})$ & $\chi(\overline{a})$ & $\zeta(\overline{a})$ & \% Premuium \\
\hline
2                & 100           & 100             & -         &  7.91                & 7.80                 & 7.77                  & 22.4  \\
2                & 70            & 140             & 10        &  7.79                & 7.20                 & 6.89                  & 35.3     \\
4                & 70            & 140             & 10        &  7.66                & 6.74                 & 6.35                  & 29.1   \\
4                & 70            & 140             & 2.5       &  7.65                & 6.57                 & 6.13                  & 29.0           \\
12               & 70            & 140             & 10        &  7.55                & 6.42                 & 6.00                  & 27.0       \\
26               & 70            & 140             & 10        &  7.51                & 6.34                 & 5.91                  & 26.8     \\
\hline
\end{tabular}
\caption{The table shows the value of an at-the-money put with strike 100 and maturity 1. European options are assumed to trade on an implied volatility of 20\%. The rows in the table correspond to different grids, with the grid becoming finer and prices lower as we move down the table.}
\label{tab:1}
\end{figure}

In the next table, Figure~\ref{tab:2} we fix on a mesh, and consider the impact of varying the moneyness of the option. We use a quarterly mesh and strikes every 10 units. The main conclusion is that the failure of the Black-Scholes model to capture the full value of the American premium is most pronounced for out of the money puts, and that always the  Black-Scholes model captures less than half the maximum possible value of the American feature.

\begin{figure}
\centering
\begin{tabular}{||c|ccc|c||}
\hline
Strike       & $\phi(\overline{a})$ & $\chi(\overline{a})$ & $\zeta(\overline{a})$ & \% Premuium \\
\hline
 80         &  1.00                & 0.92                 & 0.91                  & 15.5  \\
 90         &  3.25                & 2.89                 & 2.79                  & 20.6    \\
 100        &  7.66                & 6.74                 & 6.35                  & 29.1   \\
 110        &  14.09               & 12.81                & 11.73                 & 45.8           \\
120         &  22.02               & 20.89                & 20.15                 & 39.6       \\
\hline
\end{tabular}
\caption{The effect of moneyness on option value. Maturities are $\{1/4,1/2,3/4,1\}$ and $\sK = \{ 70,80,90, \ldots, 140 \}$.}
\label{tab:2}
\end{figure}

The conclusion from this section is that pricing under the Black-Scholes model can greatly undervalue the American feature of the option and the ability of the option holder to respond to resolution of  model uncertainty.

\subsection{The choice of filtration}
\label{ssec:filtration}

The purpose of this section is to illustrate how the choice of filtration can have a large impact on the range of possible prices of the American option.
Restricting attention to models based on the natural filtration yields underestimates of the value of the American option.

In this section we consider the following very simple example. Time ranges over $\sT_0 = \{0,1,2 \}$. At $t=0$ we have $X_0=2$, at $t=1$ $X$ takes values in $\{1,3\}$, at $t=2$, $X$ takes values in $\{0,2,4\}$. The martingale property, together with the fact that the state-space is so simple means that if we are given the price of one Arrow-Debreu security at time $2$ then the marginal laws of $X$ are fully specified. We suppose that there is a security which for price $2/5$ pays 1 in state $(2,4)$. Then under any consistent model $X_1$ has uniform law on $\{ 1,3 \}$, and $X_2$ has law $\{2/5, 1/5, 2/5 \}$ on $\{0,2,4\}$.

We want to value the American option which pays $1$ in state $(1,1)$ and 8 in state $(2,4)$ and otherwise pays 0. See Figure~\ref{fig:simpleeg}.

\begin{figure}[!htbp]
\centering
\begin{tikzpicture}
 \draw[dashed] (0,0) -- (2,1) ;
 \draw[dashed] (0,0) -- (2,-1);
 \draw[dashed] (2,1) -- (4,0) ;
 \draw[dashed] (2,1) -- (4,-2);
 \draw[dashed] (2,1) -- (4,2);
 \draw[dashed] (2,-1) -- (4,-2);
 \draw[dashed] (2,-1) -- (4,0);
 \draw[dashed] (2,-1) -- (4,2);
 \draw [fill] (4,2) circle [radius=.1] node[right]{(4,2,{\cred 8})};
  \draw [fill] (4,0) circle [radius=.1] node[right]{(2,2,{\cred 0})};
  \draw [fill] (4,-2) circle [radius=.1] node[right]{(0,2,{\cred 0})};
  \draw [fill] (0,0) circle [radius=.1] node [left] {(2,0,{\cred 0})};
  \draw [fill] (2,1) circle [radius=.1] node [above left] {(3,1,{\cred 0})} ;
  \draw [fill] (2,-1) circle [radius=.1] node [below left] {(1,1,{\cred 1})} ;
\end{tikzpicture}%

\caption{{\it The space of possible paths, and the payoff of the American option}. The labels at the nodes on the graph consist of a triple, the elements of which are price level, time and payoff of the American option respectively.}
\label{fig:simpleeg}
\end{figure}
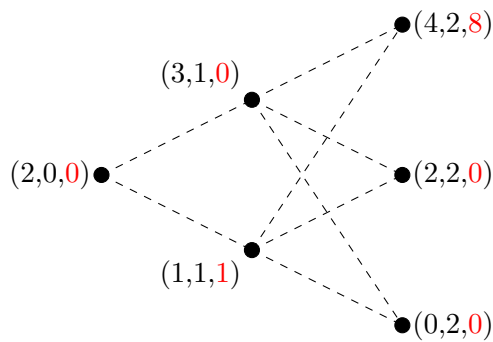

Denote by $(p,q,r)$ the transition probabilities of going from $(1,1)$ to $((4,2), (2,2), (0,2))$ respectively, and by $(s,t,u)$ the transition probabilities of going from $(3,1)$ to $((4,2), (2,2), (0,2))$. The fact that $(p,q,r)$ are martingale probabilities gives $0 \leq p \leq 1/4$ and $(q,r) = (\frac{1-4p}{2}, \frac{1}{2} + p)$. Similarly, $1/2 \leq s \leq 3/4$ and $(t,u) = (\frac{3-4s}{2}, s -\frac{1}{2})$. Finally, the constraints on the law of $X$ at $t=2$ give $p+s=4/5$. Any choice $(p,s)$ with $0 \leq p \leq 1/4$, $1/2 \leq s \leq 3/4$ and $(p+s)=4/5$ leads to a consistent model.

Suppose $X_1=1$. If the conditional transition probabilities from $(1,1)$ are $(p,q,r)$ then the value of immediate exercise at $(1,1)$ is $1$, and the value on continuation is $8p$, so that it is optimal to continue if $p \geq 1/8$. It is always optimal to continue at $(3,1)$ and the value is $8s$. The expected payoff of the American option is then $\frac{1}{2} [8(p+s) + (1 - 8p)^+] = 16/5 + (\frac{1}{2}-4p)^+$. This is maximised by taking $p$ as small as possible, ie $p = 1/20$ to give a best model based price of 7/2.

In comparison the European option with payoff $(8,0,0)$ in states $((4,2), (2,2), (0,2))$ has price 16/5 so that the American premium is 3/10.

However, this is not the highest model price in the class of consistent models. Consider a pair of models $\hat{M}$ and $\tilde{M}$. Suppose they are characterised by the sextuples $(\hat{p},\hat{q},\hat{r},\hat{s},\hat{t},\hat{u})=(1/4,0,3/4,3/4,0,1/4)$ and $(\tilde{p},\tilde{q},\tilde{r},\tilde{s},\tilde{t},\tilde{u})=(0,1/2,1/2,3/4,0,1/4)$. Note that both of these models are martingale models, and although neither model satisfies the constraint $p+s=4/5$ the mixture $M = \frac{1}{5} \hat{M} + \frac{4}{5} \tilde{M}$ does have the property that the law of $X_2$ matches the call prices. We assume that the holder of the option learns whether the world is described by $\hat{M}$ or $\tilde{M}$ at $t=1$ before he is required to decide whether to exercise the option. We will show that the price of the American option is maximised over consistent models by the model $M$.

Under $\hat{M}$ it is optimal to exercise the American option at $t=2$, and the value of the option is 4. Under $\tilde{M}$ it is optimal to exercise at $(1,1)$ and the value of the American option is $7/2$. Provided the model uncertainty is resolved by $t=1$, the price under the mixed model is 18/5 and the American premium is 2/5. In particular, the maximum of the American premium over models under an assumption that the filtration is the natural filtration of the price process is only 3/4 of the maximum of the American premium when we consider all models. (Further, if the price of the Arrow-Debreu security at $(2,4)$ is raised to 7/16, then the European price rises to 7/2, but in the model in which transition probabilities are specified at time zero by $(p,q,r,s,t,u)$ the American option price is unchanged at 7/2. The highest model based price is based on a mixture model $M = \frac{1}{2} \hat{M} + \frac{1}{2} \tilde{M}$ in which the American option price is 15/4. In the Markovian model, the American premium is zero --- it is always optimal to exercise the American option at $t=2$ --- in the mixture model the American premium is 1/4. Restricting attention to Markovian models suggests that there is no American premium, but this is not the case.)

Now we argue that 18/5 is the highest possible model price by exhibiting a semi-static super-replicating strategy with cost 18/5. Starting with 18/5, purchase 4 Arrow-Debreu securities paying 1 in state $(4,2)$ at total cost $8/5$, leaving cash of 2. In addition, hold one unit of asset over the time-period $[0,1)$, and if the American option is not exercised at $t=1$ continue to hold the unit long position until $t=2$; otherwise hold a null position in the stock over $[1,2)$.

At $t=0$, the cash holdings are 2. At $t=1$ the cash holdings are 3 in state $(3,1)$ and 1 in state $(1,1)$. This is sufficient to cover the cost of the American option if it is exercised (and note that the holdings of Arrow-Debreu securities with maturity $t=2$ are all non-negative, so there are no remaining liabilities). If the American option is not exercised at $t=1$, and if $X_1=3$ then including the payoff from the Arrow-Debreu security, at $t=2$ the strategy realises $(8,2,0)$ in the states $((4,2),(2,2),(0,2))$ respectively. If the option is not exercised at $t=1$, and if $X_1=1$ then at $t=2$ the strategy again realises $(8,2,0)$ in the states $((4,2),(2,2),(0,2))$. Hence, the given strategy is a super-replicating strategy.

The above description does not quite fall into the notation of the rest of the paper. To see how this example fits into that structure, suppose $\sX =
\{ 0,1,2,3,4 \}$ and $\sT= \{1,2 \}$. Set $p_{1,1}=\frac{1}{2}=p_{3,1}$ together with $p_{0,2}=2/5$, $p_{2,2}=1/5$ and $p_{4,2}=2/5$. Necessarily all other $p_{j,n}$ are zero. We have $a_{\cdot ,1} = (0,1,0,0,0)$ and $a_{\cdot ,2} = (0,0,0,0,8)$, although $a_{1,1}$, $a_{3,1}$, $a_{0,2}$, $a_{2,2}$ and $a_{4,2}$ are the only relevant entries.

Define $({\bf E^1},{\bf E^2},{\bf D^1},{\bf D^2}, {\bf V})$ by $e^1_{j,2}=e^2_{j,1}=0$ for $0 \leq j \leq 4$ and
\begin{eqnarray*} e^1_{\cdot ,1} = (0,1,2,3,4) && e^2_{\cdot ,2} = (0,0,0,-2,-4) \\
                  d^1_{\cdot ,1} = (1,1,1,1,1) && d^2_{\cdot ,1} = (0,0,0,2,2) \\
                  v_{\cdot ,1} = (0,1,2,5,8)   && v_{\cdot ,2} = (0,0,0,4,8)
                  \end{eqnarray*}
We have immediately that $v_{j,n} \geq a_{j,n}$ for all $n$ and $j$, so that (\ref{eq:lpi}) holds.
Also, if ${\bf \Lambda^1}$ is the matrix with entries $\lambda^1_{j,k} = (k-j)$ then $e^1_{j,1} + \lambda^1_{j,k} = e^1_{k,1} \geq - e^2_{k,2}$ and (\ref{eq:lpii}) holds. Finally, if ${\bf \Lambda^2}$ is the matrix $\lambda^2_{j,k} = 2(k-j)I_{ \{ j = 3,4 \} }$ then $v_{j,1} - e^1_{j,1} + \lambda^2_{j,k} = v_{j,2} + e^2_{j,2}  + \lambda^2_{j,k} \leq v_{k,2} + e^2_{k,2}$ so that (\ref{eq:lpiii}) holds.

Thus the strategy implicit in $({\bf E^1},{\bf E^2},{\bf D^1},{\bf D^2}, {\bf V})$ super-replicates, and at a cost $8\times \frac{2}{5} + (-4) \times \frac{2}{5} + 3 \times \frac{1}{2} + 1 \times \frac{1}{2} = \frac{18}{5}.$

\subsection{A double continuum of option prices}
\label{ssec:continuum}
The methods of the previous sections transfer to the continuous time setting, as we shall demonstrate by example. However, the continuous setting brings new challenges which we will not attempt to overcome in this paper. Instead this section is intended merely to show how the ideas might extend to this more general framework.

We suppose we are given a double continuum (in strike and maturity) of option prices and the goal is to find the model-free upper bound on the price of an American option. One issue is that in order to define the semi-static strategy it is necessary to define the gains from trade from an admissible dynamic strategy along every admissible price path. In this example we choose a very general class of admissible price paths --- it is unlikely that one could work with this wide set of paths if the option payoff was more complicated.

\begin{assumption}
\label{ass:pricepaths}
The set of possible price paths is the set $D_{x_0}([0,T])$ of right-continuous functions with left limits which start at $x_0$.
\end{assumption}

Note that some regularity on the price paths is needed to be able to define the gains from trade, but we do not assume the existence of a (pathwise) quadratic variation or any other regularity for the price path $y$ beyond $y \in D_{x_0}([0,T])$.

\begin{lemma}
\label{lem:pathineq}
Suppose $z = \{ z(t) ; 0 \leq t \leq T \} \in D_0([0,T])$. Then for all $0 \leq s \leq t \leq T$
\begin{equation}
\label{eq:pathwiseineq}
 \int_{(s, t]} dz(u) I_{ \{ z(u-) \geq 0 \} } \leq z(t)^+ - z(s)^+
\end{equation}
\end{lemma}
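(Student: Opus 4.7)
The plan is to reduce the claim to a convexity (subdifferential) inequality for the function $f(x)=x^+$ applied on a Riemann sum, and then take limits in a partition.

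First, I would record the elementary pointwise inequality
\[
y^+ - x^+ \geq I_{\{x \geq 0\}}(y-x) \qquad \text{for all } x,y \in \R,
\]
which is the subdifferential inequality for the convex function $f(x)=x^+$ at $x$ with subgradient $I_{\{x\geq 0\}}$. This is just a two-line case check: if $x\geq 0$, then $y^+ - x \geq y-x$ since $y^+\geq y$; if $x<0$, the right-hand side is $0$ while the left-hand side is $y^+\geq 0$.

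Next I would telescope. For any finite partition $s = u_0 < u_1 < \cdots < u_n = t$ of $(s,t]$, the pointwise inequality applied to $x=z(u_i)$, $y=z(u_{i+1})$ gives
\[
z(t)^+ - z(s)^+ = \sum_{i=0}^{n-1}\bigl[z(u_{i+1})^+ - z(u_i)^+\bigr] \geq \sum_{i=0}^{n-1} I_{\{z(u_i)\geq 0\}}\bigl(z(u_{i+1})-z(u_i)\bigr).
\]
Because $z\in D_0([0,T])$ has at most countably many discontinuities, I can choose a sequence of partitions $\pi_n$ whose mesh tends to $0$ and whose grid points avoid all jump times of $z$; on such grids $z(u_i)=z(u_i-)$, so the Riemann sum above equals the Riemann sum with integrand $I_{\{z(u-)\geq 0\}}$.

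Finally I would pass to the limit along $\pi_n$. Since the pathwise integral on the left-hand side of \eqref{eq:pathwiseineq} is, in the set-up of this section, defined (or at any rate may be represented) as the limit of such left-endpoint Riemann sums for right-continuous integrands, the telescoped inequality survives the limit and yields \eqref{eq:pathwiseineq}. The one step requiring care—the main obstacle—is the limiting argument: strictly speaking the integrand $I_{\{z(u-)\geq 0\}}$ is only guaranteed to be left-continuous at continuity points of $z$, so one has to appeal to the pathwise construction of the stochastic integral used throughout Section~\ref{ssec:continuum} to justify the convergence of the Riemann sums; granted that, the inequality is preserved in the limit and the lemma follows.
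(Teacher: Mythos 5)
Your argument takes a genuinely different route from the paper's. The paper works directly with the (implicit) definition of the pathwise integral: it decomposes $\{u\in(s,t]: z(u-)\geq 0\}$ into disjoint intervals, evaluates the integral over each such interval as the increment of $z$ across it, and checks that each interior increment is non-positive (because $z\geq 0$ at the left end of an excursion interval and $z\leq 0$ at its right end), with only the intervals straddling $s$ and $t$ producing the terms $-z(s)^+$ and $z(t)^+$. Your route --- the subgradient inequality $y^+-x^+\geq I_{\{x\geq 0\}}(y-x)$ telescoped over a partition --- is the classical pathwise Tanaka-type argument; it is correct and fully rigorous at the level of any finite partition, and it has the merit of not requiring the set $\{u: z(u-)\geq 0\}$ to decompose into intervals at all.

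The gap is in your final step. The paper never defines $\int_{(s,t]}dz(u)I_{\{z(u-)\geq 0\}}$ as a limit of left-endpoint Riemann sums; indeed Assumption~\ref{ass:pricepaths} and the surrounding discussion explicitly decline to impose any regularity (pathwise quadratic variation or otherwise) under which such limits would exist for a general c\`adl\`ag integrator. The quantity bounded in (\ref{eq:pathwiseineq}) is the gain of a buy-and-hold-on-excursion-intervals strategy, i.e.\ exactly the sum of increments over the intervals in the paper's decomposition. So your appeal to ``the pathwise construction of the stochastic integral used throughout Section~\ref{ssec:continuum}'' points to a construction the paper does not actually supply; to close the proof you must show that your Riemann sums (along partitions avoiding the jump times of $z$) converge to that sum of increments. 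This is true, and not hard, when $\{u: z(u-)\geq 0\}\cap(s,t]$ is a finite union of intervals: the terms of the Riemann sum telescope within each interval, and right-continuity of $z$ identifies the limiting endpoint values as $z(l_\alpha)$ and $z(r_\alpha)$. But that identification is precisely the content of the lemma as the paper understands it, and it is the one step your plan delegates to an authority that is not present in the text.
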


\begin{proof}
We can write $(s,t] \cap I_{ \{ y(u-) \geq 0 \} }$ as a union of disjoint intervals $(l_\alpha,r_\alpha]$ or $[l_\alpha,r_\alpha]$. Assume $(l_\alpha,r_\alpha]$ is one such interval with $s < l_\alpha < r_\alpha < t$.
We have
\[ \int_{(l_\alpha,r_\alpha]} dz(u) I_{ \{ z(u-) \geq 0 \} } = z(r_\alpha) - z(l_\alpha) \leq 0  \]
since $z(r_\alpha) \leq 0 \leq z(l_\alpha)$. Alternatively, if the interval is of the form $[l_\alpha,r_\alpha]$ with $l_\alpha \leq r_\alpha$ then we must have $z(l_{\alpha}-)=0$ and
$\int_{[l_\alpha,r_\alpha]} dz(u) I_{ \{ z(u-) \geq 0 \} } = z(r_\alpha) \leq 0$.

The non-zero expressions on the right-hand-side of (\ref{eq:pathwiseineq})
arise from considering the intervals straddling $s$ and $t$ if any.
\end{proof}

Consider an American-style option with payoff $a(x,t) = |x- x_0| + b(t)$ where $b$ is a decreasing differentiable function such that $b(0) - b(T) < x_0$.

Let $X$ be a discounted stock price with $X_0 = x_0$. Suppose we are given a double continuum (in strike $k$ and maturity $t$) of call prices $\{ c(k,t) \}_{0 \leq k < \infty, 0 \leq t \leq T}$ on $X$ such that $c(k,t)= 0$ for $k \geq 2x_0$ and
\begin{equation}
\label{eq:cdef1.2}
 c(k,t) = \left\{ \begin{array}{ll}
                      (x_0 - k) + \int_0^t ds \int_{x_0 - k}^{x_0} dy (k+y - x_0) \frac{q(y,s)}{2} \hspace{5mm} & k < x_0 \\
                      \int_0^t ds \int_{k-x_0}^{x_0} dy (x_0 + y - k) \frac{q(y,s)}{2} \hspace{5mm} &  x_0 \leq k < 2x_0
                      \end{array} \right. .
\end{equation}
Here $q:[0,x_0] \times [0,T]$ is a density on $\R^+ \times [0,T]$, ie. $q \geq 0$ and $\int_0^T dt \int_0^{x_0} dy q(y,t) = 1$.

These call prices are consistent with a class of models in which the price process is constant except for a jump at the random time $\Sigma$. The joint law of the jump size and jump time has density $q$, and conditional on a jump of size $y$ at time $\sigma$ the jump is upwards with probability 1/2 and downwards with probability 1/2. We can write $X_t = x_0 + Z Y I_{ \{ t \geq \Sigma \} }$ where $(Y, \Sigma)$ has density $q$ and $Z$ is a Uniform random variable on $\{ \pm 1 \}$ which is independent of $Y$ and $\Sigma$.

Note that this remains a class of models since we have not yet specified when information about the pair $(Y,\Sigma)$ is revealed. It could be that $(Y,\Sigma,Z)$ is revealed at time $\Sigma$ (the `adapted' model) or it could be that $(Y,\Sigma)$ is known at time $0+$ and only $Z$ is revealed at time $\Sigma$ (the `foreknowledge' model). However there are many other models (filtered probability spaces supporting a martingale price process) consistent with the call prices in (\ref{eq:cdef1.2}). For instance, it is possible to define a generalised local volatility model for which call prices are given by (\ref{eq:cdef1.2}) --- generalised in the sense that there is an atom at $x_0$ at time $t$ of size $\int_t^T ds \int_0^{x_0} dy q(y,s)$, but away from $x_0$ the process acts like a martingale diffusion.

Our claim is that the highest model-based price for the American option consistent with the call price data in (\ref{eq:cdef1.2}) is from the `foreknowledge' model. In this model the values of $Y$ and $\Sigma$ are revealed at $t=0+$. Once they are known it is possible to either exercise immediately (with payoff $b(0+)=b(0))$ or exercise at $\Sigma$ with payoff $Y + b(\Sigma)$, and it is optimal to choose whichever of these exercise times yields the highest payoff.
The model-based price is then
\[ \phi = \int_0^T ds \int_0^{x_0} dy q(y,s) \max \{ b(0) , y+ b(s) \} = b(0) + \int_0^T ds \int_0^{x_0} dy q(y,s) ( y+ b(s) - b(0))^+. \]
Implicit in this model is a regime process which jumps to 2 at $t=0$ if $Y + b(\Sigma)<b(0)$ and otherwise jumps to $2$ at time $\Sigma$.

The optimality of the model will follow if we can exhibit a super-replicating strategy which has cost $\phi$. Consider the following (continuous time) semi-static strategy (assuming a price path $x \in D_{x_0}([0,T])$, and exercise at $\rho$) \\
$\bullet$
purchase a density (in time and space) of Arrow-Debreu style European payoffs which pay $- \dot{b}(s) I_{ \{ |x - x_0| > b(0) - b(s) \} }$ if $X_t =x$; \\
$\bullet$
in addition purchase a time $T$ terminal payoff $b(0) + [|x - x_0| - (b(0) - b(T)) ]^+$; \\
$\bullet$
take a short position in the asset if $x_{s-}>x_0 + b(0) - b(T)$ and $s \geq \rho$ and a long position if $x_{s-}<x_0 - b(0) + b(T)$ and $s \geq\rho$.

Observe that under a consistent model the time-$t$ density of the law of the price process at $x_0 \pm y$ (away from $x_0$) is $\frac{1}{2} \int_0^t q(s,y) ds$. Then the cost of the candidate super-hedging strategy is
\begin{eqnarray*}
\lefteqn{\int_0^T ds \int_0^{x_0} dy |\dot{b}(t)| I_{ \{ y > (b(0) - b(t))\} } \int_0^t q(s,y) ds + \int_0^T ds \int_0^{x_0} dy q(y,s) \{ b(0) + [ y - (b(0) - b(T)) ]^+ \} } \\
& = & b(0) + \int_0^T ds \int_0^{x_0} dy q(y,s) \left( [ y - (b(0) - b(T)) ]^+ - \int_s^T dt \dot{b}(t) I_{ \{ y > b(0) - b(t) \} } \right) \\
& = & b(0) + \int_0^T ds \int_0^{x_0} dy q(y,s)[ y - (b(0) - b(s)) ]^+ = \phi
\end{eqnarray*}

It remains to show that the strategy super-replicates. For $y \in D_{x_0}([0,T])$ let $\sG_T(y,\rho)$ denote the final payoff of the semi-static strategy.
Then, using $z_1(s) = y(s)+b(s) - (x_0 + b(0))$, $z_2(s) = b(s)-y(s) - (b(0) - x_0)$ and Lemma~\ref{lem:pathineq},
\begin{eqnarray*}
\sG_T(y,\rho) & = & b(0) + [ |y(T) - x_0| - (b(0)- b(T))]^+ + \int_0^T |\dot{b}(s)| I_{ \{ |y(s)-x_0| > b(0) - b(s) \} } ds \\
  && \hspace{5mm} - \int_{(\rho,T]} dy(s) I_{ \{ y(s-) > x_0 + b(0) - b(s) \} } + \int_{(\rho,T]} dy(s)  I_{ \{ y(s-) < x_0 - b(0) + b(s) \}} \\
& \geq & b(0) + [ |y(T) - x_0| - (b(0)- b(T))]^+ - \int_{(\rho,T]} \dot{b}(s) I_{ \{ |y(s-)-x_0| > b(0) - b(s) \} } ds \\
&& \hspace{5mm} - \int_{(\rho,T]} dy(s) I_{ \{ y(s-) > x_0 + b(0) - b(s) \} } + \int_{(\rho,T]} dy(s)  I_{ \{ y(s-) < x_0 - b(0) + b(s) \}} \\
& = & b(0) + z_1(T)^+ + z_2(T)^+ - \int_{(\rho,T]} dz_1(s) I_{ \{ z_1(s-) > 0 \} } - \int_{(\rho,T]} dz_2(s) I_{ \{ z_2(s-) > 0 \} } \\
& \geq & b(0) + z_1(\rho)^+ + z_2(\rho)^+ \\
& = & b(0) + [ |y(\rho) - x_0| - (b(0)- b(\rho))]^+ \geq  |y(\rho) - x_0| + b(\rho) = a(y(\rho),\rho)
\end{eqnarray*}
where we use $z_1(t)^+ + z_2(t)^+ =  [|y(t) - x_0| - (b(0) - b(t))]^+$. Hence the semi-static strategy super-replicates.

We have shown that there is a consistent model and a semi-static super-replicating strategy for which the model price and the cost of the super-replicating strategy coincide. Hence we have found the highest model price and the cheapest super-replicating strategy.

\section{Conclusions}
\label{sec:conclusion}

To gain insight into the potential value of an American claim in the presence of a set of closely related hedging instruments (European options on the same underlying) this paper develops a method of computing the maximum possible value of the claim. The main message of the paper is that much of the value of the American option arises from model uncertainty, and the ability of the holder of the American claim to adapt his strategy as that uncertainty is resolved, a possibility which is not available to the holder of a European option. It is not possible to capture an evolution of beliefs about the distribution of future returns in models in which the filtration is the natural filtration of the price process, and in order to capture the full value of the American option it is necessary to work with more general probabilistic set-ups. Approaches to pricing which place strong assumptions on the flow of information implicitly place severe constraints on the future prices of the instruments used for hedging, leading to an underestimate of the full value of the American claim and an exaggeration of the efficacy of hedging strategies.

The specific analysis in the paper is for a discrete-time, discrete-space universe. In that setting the approach can readily be implemented as a linear program. We can describe the model for which the model price is maximised (over the class of models consistent with the European prices). The model is a two-regime model in which the option is exercised at the moment when the regime changes. We can also describe the cheapest super-replicating strategy. This strategy is a semi-static strategy in which the dynamic hedge in the underlying depends only upon the price level of the underlying, and whether or not the American option has been exercised. There is no duality gap: the most expensive model price is equal to the cost of the cheapest super-replicating strategy.

Several extensions of the results are possible under weak additional assumptions. We can allow the price process to take values in $\R^+$, and for exercise to occur at any time (and not just the times which correspond to the maturities of the European options). However, an assumption throughout is that the set of traded options is finite. It is an interesting question to ask if the methods of this paper can be generalised to the setting of an infinite number of options.
As evidenced by our example in the penultimate section, we believe that the discrete framework we describe captures the essential features of the problem, and that the main message will be unaltered in a more general setting. There will be major challenges however in determining the most appropriate definitions for the various notions involved, especially in continuous time.

This paper has practical application in hedging American claims. Standard hedging techniques use options as well as the underlying asset to maintain a position that has no exposure to the `greeks': delta, gamma and so on. But hedging in this way faces two problems: it works badly if the model is mis-specified, since when the portfolio is rebalanced, the hedging securities are necessarily traded at market prices which may differ substantially from model prices; and trading options, as required with dynamic gamma-hedging for example, tends to incur substantial transaction costs. By contrast, the semi-static strategy used in the robust pricing literature works however the world behaves; the strategy puts a firm floor on the maximum loss that can be incurred in any state of the world. Furthermore, after time 0, the strategy requires trading only in the underlying asset where transaction costs are generally far lower than they are for options.

The large potential for mis-valuation of the American option suggests that the search for ever more accurate and rapid computational procedures for evaluating the early exercise premium needs to be tempered by an awareness of the sensitivity of the results to the particular model of price dynamics that is being used. The point is likely to be particularly significant in the presence of event risk (as in battles for corporate control, or currencies under speculative attack) where there are several scenarios, with different implications for future price volatility, whose probabilities vary substantially over time.


\begin{thebibliography}{20}
\bibitem{AcciaioBeiglbockPenknerSchachermayer:13}Acciaio, B., M. Beiglb\"ock, F. Penkner and W. Schachermayer. A model-free version of the fundamnetal theorem of asset pricing and the super-replication theorem. {\em Mathematical Finance}. To appear: DOI:10.1111/mafi.12060 2015.
\bibitem{AndersenBroadie:04}Andersen, L. and M. Broadie. Primal-Dual simulation algorithm for pricing multi-dimensional {A}merican Options. {\em Management Science} {\bf 50}(9) 1222-1234, 2004.
\bibitem{BeiglbockHenryLaborderePenkner:13} Beiglb\"ock, M., P. Henry-Labord\`ere and F. Penkner. model-independent bounds for option proces --- a mass-transport approach. {\em Finance and Stochastics} {\bf 17}(3) 477-501, 2013.
\bibitem{BouchardNutz:15} Bouchard, P. and M. Nutz. Arbitrage and duality in nondominated discrete-time models. {\em Annals of Applied Probability} {\bf 25}(2) 823-859, 2015.
\bibitem{BreedenLitzenberger:78} Breeden, D. and R. Litzenberger. Prices of state-contingent claims implicit in option prices. {\em Journal of Business}. {\bf 51}(4) 621-651, 1978.
\bibitem{BrownHobsonRogers:01} Brown, H., D. Hobson and L.C.G. Rogers. Robust hedging of barrier options. {\em Mathematical Finance} {\bf 11} 285-314, 2001.
\bibitem{CarrEllisGupta:98} Carr, P., K. Ellis and V. Gupta. Static hedging of exotic options. {\em Journal of Finance} {\bf 53}(3) 1165-1190, 1998.
\bibitem{CarrLee:10} Carr, P. and R. Lee. Hedging variance options on continuous semi-martingales. {\em Finance and Stochastics} {\bf 14}(2) 179-207, 2010.
\bibitem{CarrMadan:05} Carr, P. and D. Madan. A note on sufficient conditions for no arbitrage. {\em finance Research Letters} {\bf 2} 125-130. 2005.
\bibitem{CarrNadtochiy:11} Carr, P and S. Nadtochiy. Static hedging under time-homogeneous diffusions. {\em SIAM Jouranl of Financial Mathematics} {\bf 2}(1) 794-838, 2011.
\bibitem{CoxHobsonObloj:08} Cox, A.M.G., D. Hobson and J. Obloj. time-homogeneous diffusions with a given marginal at a random time. {\em ESIAM: Probability and Statistics} {\bf 15}, S11-S24, 2011.
\bibitem{CoxHoeggerl:13} Cox, A.M.G. and C. Hoeggerl; Model-independent no-arbitrage conditions on American put options, {\em Mathematical Finance}. To appear. DOI: 10.1111/mafi.12058. 2015
\bibitem{CoxObloj:11} Cox, A.M.G. and J. Obloj. Robust pricing and hedging of double no-touch options. {\em Finance and Stochastics} {\bf 15}(3) 573-605. 2011.
\bibitem{DavisHobson:07} Davis, M.H.A. and D. Hobson. The range of traded option prices. {\em Mathematical Finance} {\bf 17}(1) 1-14, 2007.
\bibitem{DolinskySoner:13} Dolinsky J. and H. Mete Soner. Martingale optimal transport and robust hedging in continuous time. {\em Probability Theory and Related Fields} {\bf 160}(1-2) 391-427, 2014.
\bibitem{HarrisonKreps:79} Harrison, J.M. and D.M. Kreps. Martingales and arbitrage in multi-period securities markets. {\em Journal of Economic Theory} {\bf 20} 381-408, 1979.
\bibitem{HaughKogan:04} Haugh, M. and L. Kogan. Pricing {A}merican options: a duality approach. {\em Operations Research} {\bf 52}(2) 258-270, 2004.
\bibitem{Hobson:98} Hobson, D. Robust hedging of the lookback option. {\em Finance and Stochastics} {\bf 2} 329-347, 1998.
\bibitem{Hobson:11} Hobson, D. The {S}korokhod embedding problem and model-independent bounds for options prices. {\em Paris-Princeton Lectures on Mathematical Finance 2010}. Springer, Berlin-Heidelberg. 267-318, 2011.
\bibitem{HobsonLaurenceWang:05} Hobson, D., P. Laurence and T-H. Wang. Static-arbitrage upper bounds for the prices of basket options. {\em Quantitative Finance} {\bf 5}(4) 329-342, 2005.
\bibitem{HobsonKlimmek:12} Hobson, D. and M. Klimmek. Model-independent hedging strategies for variance swaps. {\em Finance and Stochastics} {\bf 16}(4) 611-649, 2012.
\bibitem{HobsonNeuberger:12} Hobson, D. and A. Neuberger. Robust bounds for the forward-start option. {\em Mathematical Finance} {\bf 22}(1) 33-56, 2012.
\bibitem{Kahale:13} Kahale, N. Super-replication of financial derivatives via convex programming.
 Available at SSRN: http://ssrn.com/abstract=2172315 or http://dx.doi.org/10.2139/ssrn.2172315. 2012.
\bibitem{Neuberger:07} Neuberger, A. Bounds on the {A}merican option. Available at SSRN: http://ssrn.com/abstract=966333 or http://dx.doi.org/10.2139/ssrn.966333 2007.
\bibitem{Rogers:02} Rogers, L.C.G. Monte-Carlo valuation of {A}merican options. {\em Mathematical Finance} {\bf 12}(3) 271-286, 2002.
\bibitem{Strassen:65} Strassen, V. The existence of probability measures with given marginals. {\em Annals Mathematical Statistics} {\bf 36} 423-439, 1965.
\bibitem{Vanderbei:08} Vanderbei, R.J. {\em Linear Programming: Foundations and Extensions} 3rd Ed. International Series in Operations Research \& Management Science, {\bf 114}, Springer, 2008.
\end{thebibliography}
\end{document}